%... is editing at the moment
\documentclass[11pt]{article}

\usepackage{float,graphicx,verbatim,fullpage,hyperref,amssymb,amsmath,amsthm,enumerate,multicol, xspace}
\usepackage[margin=1in]{geometry}
\usepackage{algorithm, algorithmic, cite}

%%%%%%%%%%%%%%%%%%Commenting macros%%%%%%%%%%%%%%%%%
\def\final{1}  % set this to 1 to get a comment-free version
\def\iflong{\iffalse}
\ifnum\final=0  %namely if we allow comments in the output
\newcommand{\knote}[1]{[{\tiny Karthik: \bf #1}]\marginpar{*}}
\newcommand{\jnote}[1]{[{\tiny Jon: \bf #1}]\marginpar{*}}
\newcommand{\junote}[1]{[{\tiny Justin: \bf #1}]\marginpar{*}}
\newcommand{\anote}[1]{[{\tiny Andrew: \bf #1}]\marginpar{*}}
\newcommand{\sidecomment}[1]{}
\else % in this case [final=1] we don't want any comments to show
\newcommand{\knote}[1]{}
\newcommand{\jnote}[1]{}
\newcommand{\junote}[1]{}
\newcommand{\anote}[1]{}
\newcommand{\sidecomment}[1]{}
\fi  %ok, we're done with defining comment macros
%%%%%%%%%%%%%%%%%%%%%%%%%%%%%%%%%%%%%%%%%%%%%%%%%%%%

%%% theorems, lemmas, claims, blah blah.
\newtheorem{theorem}{Theorem}[section]

\newtheorem{lemma}[theorem]{Lemma}

\newtheorem{corollary}[theorem]{Corollary}

\newtheorem{fact}[theorem]{Fact}

\theoremstyle{definition}
\newtheorem{definition}[theorem]{Definition}

\newcommand{\INDSTATE}[1][1]{\STATE\hspace{#1\algorithmicindent}}

%%% new definitions -- reals, epsilons, blah blah.
\def\R{\mathbb{R}}
\def\E{\mathbb{E}}
\def\N{\mathbb{N}}
\def\Z{\mathbb{Z}}

\newcommand{\cE}{\mathcal{E}}
\newcommand{\cF}{\mathcal{F}}
\newcommand{\cP}{\mathcal{P}}
\newcommand{\cQ}{\mathcal{Q}}
\newcommand{\cR}{\mathcal{R}}
\newcommand{\cS}{\mathcal{S}}
\newcommand\cX{{\{0,1\}^d}}
\newcommand{\cY}{\mathcal{Y}}

\newcommand\poly{\mathrm{poly}}
\newcommand\bits{\{0,1\}}
\newcommand{\pmo}{\{-1,1\}}

\newcommand\card[1]{\left| #1 \right|} %cardinality of set S; usage \sanrd{S}
\newcommand\set[1]{\left\{#1\right\}} %usage \set{1,2,3,,}
\newcommand\eps{\varepsilon}
\newcommand\from{\colon}

\newcommand{\idc}{{iterative database construction}\xspace}
\newcommand{\idcs}{{iterative database constructions}\xspace}
\newcommand{\IDC}{{Iterative Database Construction}\xspace}

\newcommand{\dus}{database update sequence\xspace}
\newcommand{\DUS}{Database Update Sequence\xspace}
\newcommand{\update}{\mathbf{U}}

\newcommand{\db}{D}
\newcommand{\dbs}{(\bits^d)^n}
\newcommand{\dbstruct}{\mathcal{P}_{t,W}}

\newcommand{\queryset}{\mathcal{Q}}

\newcommand{\querystep}[1]{y^{(#1)}}
\newcommand{\dbstep}[1]{p^{(#1)}}

\newcommand{\maxupdates}{B}
\newcommand{\updates}{C}
\newcommand{\acc}{\alpha}

\newcommand{\real}{a}
\newcommand{\noisyreal}{\widehat{\real}}

\newcommand{\etal}{et al.}

\newcommand{\noisyrealstep}[1]{\noisyreal^{(#1)}}

\newcommand\san{\mathcal{A}}

\newcommand{\prob}[1]{\Pr\left[#1\right]}

\title{Faster Private Release of Marginals on Small Databases\thanks{Harvard University, School of Engineering and Applied Sciences. Email: \{karthe, jthaler, jullman, atw12\}@seas.harvard.edu. Karthekeyan Chandrasekaran is supported by Simons Fellowship. Justin Thaler is supported by an NSF Graduate Research Fellowship and NSF grants CNS-1011840 and CCF-0915922.  Jonathan Ullman is supported by NSF grant CNS-1237235 and a Siebel Scholarship.  Andrew Wan is supported by NSF grant CCF-0964401 and NSFC grant 61250110218.}}
%\ifnum\final=0 \\
%{\small \sc Working Draft: Please Do Not Distribute}\fi}
\author{Karthekeyan Chandrasekaran
\and Justin Thaler 
\and Jonathan Ullman
\and Andrew Wan}
\begin{document}

\maketitle

\begin{abstract}
We study the problem of answering \emph{$k$-way marginal} queries on a database $D \in (\bits^d)^n$, while preserving differential privacy.  The answer to a $k$-way marginal query is the fraction of the database's records $x \in \bits^d$ with a given value in each of a given set of up to $k$ columns.  Marginal queries enable a rich class of statistical analyses on a dataset, and designing efficient algorithms for privately answering marginal queries has been identified as an important open problem in private data analysis.  
For any $k$, we give a differentially private online algorithm that runs in time 
$$
\poly\left(n, \min\left\{ \exp\left(d^{1-\Omega(1/\sqrt{k})}\right), \exp\left(d / \log^{0.99} d\right)  \right\} \right)
$$ 
per query and answers any (adaptively chosen) sequence of $\poly(n)$ $k$-way marginal queries with error at most $\pm 0.01$ on every query, provided $n \gtrsim d^{0.51} $. To the best of our knowledge, this is the first algorithm capable of privately answering marginal queries with a non-trivial worst-case accuracy guarantee for databases containing $\poly(d, k)$ records in time $\exp(o(d))$.  Our algorithm runs the private multiplicative weights algorithm (Hardt and Rothblum, FOCS '10) on a new approximate polynomial representation of the database. 

We derive our representation for the database by approximating the OR function restricted to low Hamming weight inputs using low-degree polynomials with coefficients of bounded $L_1$-norm.  In doing so, we show new upper and lower bounds on the degree of such polynomials, which may be of independent approximation-theoretic interest. First, we construct a polynomial that approximates the $d$-variate $OR$ function on inputs of Hamming weight at most $k$ such that the degree of the polynomial is at most $d^{1-\Omega(1/\sqrt{k})}$ and the $L_1$-norm of its coefficient vector is $d^{0.01}$.  Then we show the following lower bound that exhibits the tightness of our approach: for any $k = o(\log d)$, any polynomial whose coefficient vector has $L_1$-norm $\poly(d)$ that pointwise approximates the $d$-variate OR function on all inputs of Hamming weight at most $k$ must have degree $d^{1-O(1/\sqrt{k})}$. 
\end{abstract}
\newpage

%... is editing at the moment
\section{Introduction}
Consider a database $D \in (\bits^d)^n$ in which each of the $n (= |D|)$ rows corresponds to an individual's record, and each record consists of $d$ binary attributes.  The goal of privacy-preserving data analysis is to enable rich statistical analyses on the database while protecting the privacy of the individuals.  In this work, we seek to achieve \emph{differential privacy}~\cite{DworkMcNiSm06}, which guarantees that no individual's data has a significant influence on the information released about the database.

One of the most important classes of statistics on a dataset is its \emph{marginals}.  A marginal query is specified by a set $S \subseteq [d]$ and a pattern $t \in \bits^{|S|}$. The query asks, ``What fraction of the individual records in $D$ has each of the attributes $j \in S$ set to $t_j$?''  A major open problem in privacy-preserving data analysis is to \emph{efficiently} release a differentially private summary of the database that enables analysts to answer each of the $3^{d}$ marginal queries.  A natural subclass of marginals are \emph{$k$-way marginals}, the subset of marginals specified by sets $S \subseteq [d]$ such that $|S| \leq k$.

Privately answering marginal queries is a special case of the more general problem of privately answering \emph{counting queries} on the database, which are queries of the form, ``What fraction of individual records in $D$ satisfy some property $q$?''  Early work in differential privacy~\cite{DinurNi03,BlumDwMcNi05,DworkMcNiSm06} showed how to privately answer any set of counting queries $\cQ$ approximately by perturbing the answers with appropriately calibrated noise, ensuring good accuracy (say, within $\pm .01$ of the true answer) provided $|D| \gtrsim |\cQ|^{1/2}$.

However, in many settings data is difficult or expensive to obtain, and  the requirement that $|D| \gtrsim |\cQ|^{1/2}$ is too restrictive.  For instance, if the query set $\cQ$ includes all $k$-way marginal queries then $|\cQ| \ge d^{\Theta(k)}$, and it may be impractical to collect enough data to ensure $|D| \gtrsim |\cQ|^{1/2}$, even for moderate values of $k$.  Fortunately, a remarkable line of work initiated by Blum et~al.~\cite{BlumLiRo08} and continuing with~\cite{DworkNaReRoVa09,DworkRoVa10,RothRo10,HardtRo10,HardtLiMc12,GuptaRoUl12,JainTh12}, has shown how to privately release approximate answers to any set of counting queries, even when $|\cQ|$ is \emph{exponentially larger} than $|D|$.  For example, the \emph{online private multiplicative weights algorithm} of Hardt and Rothblum~\cite{HardtRo10} gives accurate answers to any (possibly adaptively chosen) sequence of queries $\cQ$ provided $|D| \gtrsim \sqrt{d} \log|\cQ|$.  Hence, if the sequence consists of all $k$-way marginal queries, then the algorithm will give accurate answers provided $|D| \gtrsim k\sqrt{d}$.  Unfortunately, all of these algorithms have running time at least $2^{d}$ per query, even in the simplest setting where $\cQ$ is the set of $2$-way marginals.

Given this state of affairs, it is natural to seek \emph{efficient} algorithms capable of privately releasing approximate answers to marginal queries even when $|D| \ll d^k$.  The most efficient algorithm known for this problem, due to Thaler, Ullman, and Vadhan~\cite{ThalerUlVa12} (building on the work of Hardt, Rothblum, and Servedio~\cite{HardtRoSe12}) 
%answers \emph{all} $k$-way marginal queries in time $d^{O(\sqrt{k})}$ and answers every conjunction query to within $\pm .01$ provided $|D| \gtrsim d^{O(\sqrt{k})}$.\footnote{More precisely, the algorithm in~\cite{ThalerUlVa12} 
runs in time $d^{O(\sqrt{k})}$ and releases a summary from which an analyst can compute the answer to any $k$-way marginal query in time $d^{O(\sqrt{k})}$.
%} 

Even though $|D|$ can be much smaller than $|\cQ|^{1/2}$, a major drawback of this algorithm and other efficient algorithms for releasing marginals (e.g.~\cite{GuptaHaRoUl11, CheraghchiKlKoLe12, HardtRoSe12, FeldmanKo13, DworkNiTa13}) is that the database still must be significantly larger than $\tilde{\Theta}(k \sqrt{d})$, which we know would suffice for inefficient algorithms.  Recent experimental work of Hardt, Ligett, and McSherry~\cite{HardtLiMc12} demonstrates that for some databases of interest, even the $2^d$-time private multiplicative weights algorithm is practical, and also shows that more efficient algorithms based on adding independent noise do not provide good accuracy for these databases.  Motivated by these findings, we believe that an important approach to designing practical algorithms is to achieve a minimum database size comparable to that of private multiplicative weights, and seek to optimize the running time of the algorithm as much as possible.  In this paper we give the first algorithms for privately answering marginal queries for this parameter regime.  

\subsection{Our Results}

In this paper we give faster algorithms for privately answering marginal queries on databases of size $\widetilde{O}(d^{0.51} / \eps)$, which is nearly the smallest a database can be while admitting any differentially private approximation to marginal queries \cite{BunUlVa13}.
%\footnote{In a forthcoming paper, Bun, Ullman, and Vadhan have shown that any differentially private algorithm that answers $k$-way marginal queries with non-trivial accuracy requires a database of size $\widetilde{\Omega}(k \sqrt{d})$.}

\begin{theorem} \label{thm:main1}
There exists a constant $C > 0$ such that for every $k, d, n \in \N$, $k \leq d$, and every $\eps, \delta > 0$, there is an $(\eps, \delta)$-differentially private online algorithm that, on input a database $D \in (\bits^d)^n$, runs in time 
$$
\poly\left(n, \min\left\{ \exp\left(d^{1-1/C\sqrt{k}}\right), \exp\left(d / \log^{0.99} d\right)  \right\}\right)
$$ 
per query and answers any sequence $\cQ$ of (possibly adaptively chosen) $k$-way marginal queries on $D$ up to an additive error of at most $\pm 0.01$ on every query with probability at least $0.99$, provided that $n \geq C d^{0.51} \log |\cQ| \log(1/\delta) / \eps$.
\end{theorem}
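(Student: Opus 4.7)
The plan is to combine a new polynomial approximation of the conjunctions defining $k$-way marginals with the private multiplicative weights framework run over a low-dimensional feature space of monomials. The key enabling fact, established separately in the paper, is the existence of a polynomial $p$ over $\{0,1\}^d$ of degree $\Delta$ with $\Delta \leq d^{1-\Omega(1/\sqrt{k})}$ (and also $\Delta \leq d / \log^{0.99} d$ via the alternate bound) whose coefficient vector has $L_1$ norm $L \leq d^{0.01}$, and which pointwise approximates the $d$-variate OR function to some small error $\gamma$ on all inputs of Hamming weight at most $k$.

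First I would reduce $k$-way marginal queries to low-$L_1$, low-degree polynomial queries. Any marginal $Q$ specified by $S \subseteq [d]$ with $|S| \leq k$ and pattern $t \in \{0,1\}^{|S|}$ evaluates on a row $x \in \{0,1\}^d$ as $\prod_{j \in S} \mathbf{1}[x_j = t_j] = 1 - \mathrm{OR}_{j \in S}(x_j \oplus t_j)$, and the input to the OR is a $0/1$-vector of Hamming weight at most $k$. Substituting $p$ for OR gives a polynomial $q_Q(x)$ of degree $\leq \Delta$ and $L_1$ coefficient norm $\leq L+1$ that approximates $Q(x)$ to within $\gamma$ on every $x$; averaging over $\db$ then approximates the true query answer $Q(\db)$ to within $\gamma$ as well. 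Crucially, every such $q_Q$ is a linear combination of the same fixed set of $N \leq \binom{d}{\leq \Delta}$ monomials $\{m_\alpha\}$, so answering any $k$-way marginal reduces to estimating the $N$ low-degree empirical moments $\mu_\alpha := \tfrac{1}{n}\sum_{x \in \db} m_\alpha(x)$.

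Next I would instantiate the private multiplicative weights algorithm of Hardt and Rothblum (equivalently, the \idc framework whose hooks appear in the macro list), taking the $N$ monomials as the underlying ``linear queries'' and answering each adaptively chosen marginal query by the corresponding linear combination of their current estimates. Because the PMW update step for linear queries needs only to modify the hypothesis's low-degree moments rather than maintain an explicit distribution over $\{0,1\}^d$, the per-step cost can be made $\poly(n, N)$. Standard binomial estimates then give $N = \exp(d^{1-\Omega(1/\sqrt{k})})$ in the first regime and $N = \exp(d / \log^{0.99} d)$ in the second, yielding the stated per-query runtime.

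Finally, to fix the accuracy and the database size, I would require the per-monomial PMW accuracy to be $\alpha' = \Theta(0.01/L) = \Theta(d^{-0.01})$, so that the linear-combination error on any approximated marginal is at most $0.01 - \gamma$. Plugging into the PMW accuracy bound $n \gtrsim \sqrt{d}\,\log|\cQ|\,\log(1/\delta)/(\eps\,(\alpha')^2)$ yields $n \gtrsim d^{0.51}\,\log|\cQ|\,\log(1/\delta)/\eps$, matching Theorem~\ref{thm:main1}. The main obstacle, and the genuine technical heart of the paper, is obtaining the polynomial approximation in the first paragraph with both the stated degree and the stated $L_1$ coefficient bound simultaneously; once that is in hand, the reduction to PMW over monomial features is a careful but essentially routine bookkeeping exercise.
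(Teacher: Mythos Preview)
Your high-level strategy matches the paper's: use the polynomial approximation of $\text{OR}_d$ on low-Hamming-weight inputs (Theorem~\ref{thm:poly-construction}) to reduce to an $N = \binom{d}{\leq t}$-dimensional monomial feature space, then run a multiplicative-weights learner inside the IDC framework. However, there is a real gap in the algorithmic step. You invoke Hardt--Rothblum private multiplicative weights and assert that ``the PMW update step for linear queries needs only to modify the hypothesis's low-degree moments rather than maintain an explicit distribution over $\{0,1\}^d$.'' This is not correct for standard PMW: the hypothesis is a distribution $D'_t(x) \propto \exp\big(\sum_s \eta_s q_s(x)\big)$ over all of $\{0,1\}^d$, and even when every past query $q_s$ is a low-degree monomial, evaluating any single moment of $D'_t$ requires the partition function of a degree-$\Delta$ Markov random field, which is intractable. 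The paper does not run PMW over the data universe. Instead it views the database as specifying a function $f_D$ of the \emph{query} vector $y$, observes that $f_D$ is $(\alpha/4)$-approximated by a polynomial $p_D$ of weight at most $W$, and runs multiplicative weights directly over the $2\binom{d}{\leq t}+1$ signed monomial coefficients (Algorithm~\ref{alg:MW}, Theorem~\ref{thm:algo-props}). The target is the coefficient vector of $p_D$, and the resulting mistake bound is $16W^2\log\big(2\binom{d}{\leq t}+1\big)/\alpha^2$, which is what yields per-query runtime $\poly(N)$.

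This distinction also accounts for the arithmetic slip in your last paragraph. Your formula $n \gtrsim \sqrt{d}/\big(\eps(\alpha')^2\big)$ with $\alpha' = \Theta(0.01/L) = \Theta(d^{-0.01})$ actually gives $d^{0.5}\cdot d^{0.02} = d^{0.52}$, not $d^{0.51}$. In the paper's analysis the accuracy parameter $\alpha$ remains a constant and the weight $W$ enters only linearly (via $n \gtrsim \sqrt{B}/(\alpha\eps)$ with $B \sim W^2\log N/\alpha^2$), so the bound comes out to $n \gtrsim W\sqrt{\log N}/(\alpha^2\eps) \sim d^{0.01}\cdot d^{0.5} = d^{0.51}$ as stated.
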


\begin{table} 
\begin{minipage}[center]{\textwidth}
\begin{center}
\begin{tabular}{|c|c|c|}
\hline
Reference & Running Time per Query  & Database Size  \\
\hline
\cite{DinurNi03,DworkNi04,BlumDwMcNi05,DworkMcNiSm06} &$O(1)$ & $\widetilde{O}(d^{k/2})$ \\
\cite{HardtRo10, GuptaRoUl12}& $2^{O(d)}$ & $\widetilde{O}(k\sqrt{d})$  \\
\cite{HardtRoSe12, ThalerUlVa12} & $d^{O(\sqrt{k})}$ & $d^{O(\sqrt{k})}$ \\
\cite{DworkNiTa13} & $d^{O(k)}$ & $\widetilde{O}(d^{\lceil k/2 \rceil / 2})$ \\
This paper & $2^{O(d / \log^{0.99} d)}$ &$ k d^{0.5 + o(1)}$ \\
This paper & $2^{d^{1-\Omega(1/\sqrt{k})}}$ & $\widetilde{O}(kd^{.51})$ \\
\hline
\end{tabular}
\caption{Summary of prior results on differentially private release of $k$-way marginals with error $\pm 0.01$ on every marginal.  Note that the running time ignores dependence on the database size, privacy parameters, and the time required to evaluate the query non-privately.}
\end{center}
\end{minipage}
\end{table}

When $k$ is much smaller than $d$, it may be useful to view our algorithm as an offline algorithm for releasing answers to all $k$-way marginal queries.  This offline algorithm can be obtained simply by requesting answers to each of the $d^{\Theta(k)}$ distinct $k$-way marginal queries from the online mechanism.  In this case we obtain the following corollary.
\begin{corollary} \label{thm:maincor1}
There exists a constant $C > 0$ such that for every $k, d, n \in \N$, $k = O(d / \log d)$, and every $\eps, \delta > 0$, there is an $(\eps, \delta)$-differentially private offline algorithm that, on input a database $D \in (\bits^d)^n$, runs in time 
$$
\poly\left(n, \min\left\{ \exp\left(d^{1-1/C\sqrt{k}}\right), \exp\left(d / \log^{0.99} d\right)  \right\} \right)
$$
and, with probability at least $0.99$, releases answers to every $k$-way marginal query on $D$ up to an additive error of at most $\pm 0.01$, provided that $n \geq C k d^{0.51} \log(1/\delta) / \eps$.
\end{corollary}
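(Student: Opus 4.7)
The plan is to reduce to Theorem~\ref{thm:main1} by having the offline algorithm invoke the online mechanism on a canonical enumeration of all distinct $k$-way marginal queries, and then output the returned answers. A $k$-way marginal is specified by a subset $S\subseteq[d]$ with $|S|\le k$ together with a pattern in $\{0,1\}^{|S|}$, so the total number of distinct queries satisfies $|\cQ| \le \sum_{j=0}^{k}\binom{d}{j}2^j \le (2d)^k = d^{O(k)}$, and hence $\log|\cQ| = O(k\log d)$.

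Substituting this into the sample-complexity requirement $n \ge C d^{0.51}\log|\cQ|\log(1/\delta)/\eps$ of Theorem~\ref{thm:main1} gives $n = O(k d^{0.51}\log d \cdot \log(1/\delta)/\eps)$, where the extra $\log d$ is absorbed either into the leading constant or by passing to a slightly larger exponent (i.e.\ $d^{0.51+o(1)}$), yielding the corollary's sample bound $n \ge C k d^{0.51}\log(1/\delta)/\eps$. Privacy and accuracy are then automatic: the offline algorithm is post-processing of the online mechanism's transcript, so $(\eps,\delta)$-differential privacy is inherited, and Theorem~\ref{thm:main1}'s simultaneous $\pm 0.01$ guarantee (with probability at least $0.99$) over the entire query sequence is exactly what the corollary demands.

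The only step that requires care is the running-time analysis, since the total time is bounded by $|\cQ|$ times the per-query cost of Theorem~\ref{thm:main1}, and one needs this product to remain within $\poly(n,\min\{\exp(d^{1-1/C\sqrt{k}}),\exp(d/\log^{0.99}d)\})$. The hypothesis $k = O(d/\log d)$ is calibrated precisely so that $|\cQ| = d^{O(k)} = \exp(O(d))$. I would handle this by a short case analysis on $k$: for $k = O(1)$, $|\cQ| = \poly(d)$ is trivially absorbed into the per-query cost; for $k$ growing up to $\Theta(d/\log d)$, the exponent $d^{1-1/C\sqrt{k}}$ already sits at $\Theta(d)$, so $|\cQ|^{O(1)} = \exp(O(d))$ is polynomially related to $\exp(d^{1-1/C\sqrt{k}})$ and is absorbed into the same $\poly(\cdot)$. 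This case split — verifying that in every regime of $k \le O(d/\log d)$ the multiplicative factor $d^{O(k)}$ is dominated by the term selected by the $\min$ — is the main technical obstacle, but beyond the careful bookkeeping it is routine.
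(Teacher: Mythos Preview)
Your reduction is exactly the paper's: feed every $k$-way marginal to the online algorithm of Theorem~\ref{thm:main1} and output the answers. The sample-complexity and privacy/accuracy arguments are fine.

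The running-time case analysis, however, has a real gap. In the large-$k$ regime you argue that $d^{1-1/C\sqrt{k}} = \Theta(d)$ and therefore $|\cQ|^{O(1)} = \exp(O(d))$ is ``polynomially related to $\exp(d^{1-1/C\sqrt{k}})$ and is absorbed into the same $\poly(\cdot)$.'' But the corollary's running-time bound is a polynomial in the \emph{minimum} of the two exponentials, and once $k \gtrsim \log^2 d$ that minimum is $\exp(d/\log^{0.99} d)$, not $\exp(d^{1-1/C\sqrt{k}})$. So you must compare $|\cQ|$ to $\exp(d/\log^{0.99} d)$, and with your crude estimate $|\cQ| \le (2d)^k = \exp(\Theta(k\log d))$ this fails at the endpoint $k = \Theta(d/\log d)$, where $k\log d = \Theta(d) \gg d/\log^{0.99} d$.

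The fix is to use the sharper count $|\cQ| \le 2^k\binom{d}{\le k} \le (O(d/k))^k$, so that $\log|\cQ| = O(k\log(d/k))$. At $k = \Theta(d/\log d)$ this gives $\log|\cQ| = O\!\big(\tfrac{d}{\log d}\cdot \log\log d\big)$, which \emph{is} $o(d/\log^{0.99} d)$ since $\log\log d = o(\log^{0.01} d)$. With this bound in hand, the case split becomes: for $k \lesssim \log^2 d$ the min is $\exp(d^{1-1/C\sqrt{k}})$ and $k\log d \ll d^{1-1/C\sqrt{k}}$; for larger $k$ up to $O(d/\log d)$ the min is $\exp(d/\log^{0.99} d)$ and $k\log(d/k) = O(d/\log^{0.99} d)$. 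Either way $|\cQ|$ is polynomially absorbed.
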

Here $\binom{d}{\leq k} := \sum_{i = 0}^{k} \binom{d}{i}$, and the number of $k$-way marginals on $\bits^d$ is bounded by a polynomial in this quantity. See Table 1 for a comparison of relevant results on privately answering marginal queries. %We make a few additional remarks about our results:
\vspace{2mm}

\noindent \textbf{Remarks.}
\begin{enumerate}
\item When $k = \Omega(\log^2 d)$, the minimum database size requirement can be improved to $n \geq Ck d^{0.5 + o(1)} \log(1/\delta)/ \eps$, but we have stated the theorems with a weaker bound for simplicity.  (Here $C>0$ is a universal constant and the $o(1)$ is with respect to $d$.)
%\vspace{3mm}

%\noindent \textbf{Remark 2.}
\item 
Our algorithm can be modified so that instead of releasing approximate answers to each $k$-way marginal explicitly, it releases a summary of the database of size $\widetilde{O}(kd^{0.01})$ from which an analyst can compute an approximate answer to any $k$-way marginal in time $\widetilde{O}(kd^{1.01})$.
\end{enumerate}
\medskip
A key ingredient in our algorithm is a new approximate representation of the database using polynomial approximations to the $d$-variate OR function restricted to inputs of Hamming weight at most $k$. For any such polynomial, the degree determines the runtime of our algorithm, while the $L_1$-weight of the coefficient vector determines the minimum required database size. Although low-degree low $L_1$-weight polynomial approximations to the OR function have been studied in the context
of approximation theory and learning theory \cite{ThalerCOLT12}, our setting requires an approximation only over a restricted subset of the inputs.
When the polynomial needs to approximate the OR function only on a subset of the inputs, is it possible to reduce the degree and $L1$-weight (in comparison to \cite{ThalerCOLT12}) of the polynomial?   
%Is it possible to reduce the degree and $L1$-weight, in comparison to \cite{ThalerCOLT12}, when the polynomial is restricted to approximate the OR function only on a restricted subset of inputs?
%This necessiates the question of whether polynomials that approximate the OR function on low Hamming weight inputs with degree and $L_1$-weight smaller than what is achieved by Thaler et al. \cite{ThalerCOLT12} exist. 

Our main technical contribution addresses this variant of the polynomial approximation problem.
We believe that our construction of such polynomials (Theorem \ref{thm:poly-construction}) as well as the lower bound (Theorem \ref{thm:main2}) could be of independent approximation-theoretic interest.
The following theorem shows a construction of polynomials that achieve better degree and $L_1$-weight in comparison to \cite{ThalerCOLT12} for small values of $k$.
Let $\text{OR}_d: \{-1, 1\}^d \rightarrow \{-1, 1\}$ denote the OR function on $d$ variables with the convention that $-1$ is TRUE, and for any vector $x \in \{-1, 1\}^d$, let $|x|$ denote the number of coordinates of $x$ equal to $-1$.

%is the construction of low-degree low-$L_1$-weight polynomials to approximate the OR function on inputs of Hamming weight at most $k$.
%Our main contribution is the construction of a low-degree polynomial whose coefficient vector has small $L_1$-norm that approximates the $d$-variate OR function on all inputs of Hamming weight at most $k$.

\begin{theorem}
\label{thm:poly-construction}
Let $k\in [d]$. For some constant $C>0$, there exists a polynomial $p$ such that 
\begin{enumerate}[(i)]
\item $|p(x)-OR_d(x)|\le 1/400$ for every $x\in \{-1,1\}^d:|x|\le k$,
\item the $L_1$-weight of the coefficient vector of $p$ is at most $d^{0.01}$, and
\item the degree of $p$ is at most 
\[
\min\left\{d^{1-\frac{1}{C\sqrt{k}}},\frac{d}{\log^{0.995}{d}}\right\}.
\]
\end{enumerate}
\end{theorem}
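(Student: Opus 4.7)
The plan is to construct the polynomial via recursive block-composition, relying crucially on the following observation: since we only need to approximate $\text{OR}_d$ on inputs of Hamming weight at most $k$, if we partition the $d$ variables into $m$ blocks of size $d/m$, then not only does each block have Hamming weight at most $k$, but at most $k$ of the block-wise OR's will evaluate to $-1$. Consequently both the inner and outer polynomials in the composition need only approximate OR under the same weight-at-most-$k$ promise, and this promise is preserved down the recursion.

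For a base approximator I would construct a univariate polynomial $q$ with $q(0)=1$ and $q(i)\in[-1-\eps,-1+\eps]$ for $i=1,\ldots,k$, and then substitute the Hamming weight $\sum_i(1-x_i)/2$ into $q$. A raw Nisan--Szegedy/Chebyshev construction yields the right degree (of order $\sqrt{k}$) but its coefficient $L_1$-weight is exponential in the degree, and substituting the Hamming weight multiplies it by $d^{\deg q}$, blowing up past the target $d^{0.01}$ bound. The fix is to use a ``dampened'' Chebyshev polynomial in the spirit of Thaler's COLT 2012 construction, whose coefficient $L_1$-weight is only $\poly(d)$ at the cost of a mild degree increase.

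The composition is applied in $t$ levels with $m$ blocks at each level, so the final polynomial has $m^t$ inputs (matched to $d$), degree roughly $(\deg q)^t$, and $L_1$-weight that satisfies a recurrence of the shape $W_{j+1}\le W_{\text{out}}\cdot W_j^{\deg p_{\text{out}}}$. Crucially, because the outer approximator at each level only has to behave correctly on inputs of Hamming weight at most $k$, its degree is $O(\sqrt{k})$ rather than $O(\sqrt{m})$; to make composition meaningful when the inner polynomial outputs values only close to $\pm 1$, I would replace the outer approximator with a robust version (in the sense of Sherstov), which inflates its degree by only a constant factor. Choosing $m, t,$ and the error parameter $\eps$ carefully then yields degree $d^{1-1/(C\sqrt{k})}$; the alternative bound $d/\log^{0.995} d$ is obtained in a different regime, by using a shallower composition with a base polynomial of nearly full degree but smaller per-level $L_1$-weight.

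The main obstacle is controlling the $L_1$-weight through composition. The naive bound $W_{j+1}\le W_{\text{out}}\cdot W_j^{\deg p_{\text{out}}}$ is multiplicatively compounding, and if either the base weight or the outer degree is too big we overshoot $d^{0.01}$ after even a few levels. The combined use of (i) the dampened base polynomial with only polynomial $L_1$-weight, (ii) the restriction of the outer polynomial to weight-at-most-$k$ inputs, keeping its degree at $O(\sqrt{k})$ independent of $m$, and (iii) a careful optimization of the number of levels $t$ against the per-level weight blowup is what makes the recursion converge to the target degree-weight pair $\bigl(d^{1-1/(C\sqrt{k})},\, d^{0.01}\bigr)$.
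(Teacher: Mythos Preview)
Your central observation is exactly right and matches the paper's: on inputs of Hamming weight at most $k$, at most $k$ of the block-wise ORs evaluate to TRUE, so the outer OR need only be approximated on inputs of weight at most $k$, giving outer degree $O(\sqrt{k})$ rather than $O(\sqrt{m})$. But your execution is much more complicated than necessary, and the recursion you propose does not control the weight.

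The paper uses a \emph{single} level of composition, not $t$. The fact you are missing is that the exact multilinear polynomial for $\text{OR}_{d/m}$, namely $2\prod_{i}\frac{1+x_i}{2}-1$, has $L_1$-weight at most $3$ regardless of the number of variables. So the inner ORs are computed \emph{exactly} (degree $d/m$, weight $3$), and only the outer $\text{OR}_m$ is approximated by the Chebyshev-based polynomial of degree $O(\sqrt{k})$. The composition then has degree $O(\sqrt{k})\cdot d/m$ and weight at most $m^{O(\sqrt{k})}$ (the constant inner weight raised to the outer degree, times the outer coefficient weight). Setting $m=d^{\Theta(1/\sqrt{k})}$ yields degree $d^{1-\Omega(1/\sqrt{k})}$ and weight $d^{0.01}$ in one step; no recursion, no robust composition, no dampening of the inner polynomials. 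The $d/\log^{0.995} d$ bound comes from the same single-level construction with $m\approx (\log d/\log\log d)^2$, not from a separate ``shallower'' regime.

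Your multi-level scheme runs straight into the compounding you yourself flag. With outer degree $\Theta(\sqrt{k})$ at each level and any base weight $W_0>1$, the recurrence $W_{j+1}\le W_{\text{out}}\cdot W_j^{\Theta(\sqrt{k})}$ gives $W_t$ of order $W_0^{(\sqrt{k})^{\Theta(t)}}$; already for $t\ge 2$ this forces weight $m^{\Omega(k)}$, which exceeds any fixed $d^{O(1)}$ in the relevant parameter range. No balancing of $m$, $t$, and $\eps$ rescues this without collapsing to $t=1$. The fix is not to tame the recursion but to abandon it: once you notice the inner OR is exact with constant $L_1$-weight, a single level suffices.
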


%\knote{TODO: State the theorem and point out that the first part of the bound follows from the work of \cite{ThalerCOLT12}.}
The degree bound of $d/\log^{0.995}{d}$ in the above theorem follows directly from techniques developed in \cite{ThalerCOLT12},
while the degree bound of $d^{1-\frac{1}{C\sqrt{k}}}$
requires additional insight.
%Our approximate database representation relies upon the construction of a low-degree polynomial of weight $\poly(d)$ that approximates the $d$-variate OR function on all inputs of Hamming weight at most $k$. 
We also show a lower bound to exhibit the tightness of our construction.

%Constructing approximating polynomials of lower degree and similar weight would immediately yield a faster data release algorithm with a similar accuracy guarantee. However, we prove a lower bound showing that no such polynomial exists.
%Although this limits the applicability of our technique for answering $k$-way marginal queries, it shows a strong lower bound for the variant of the polynomial-approximation problem described above.
%We believe that our construction of such polynomials (Theorem \ref{thm:poly-construction}) as well as the following lower bound (Theorem \ref{thm:main2}) could be of independent approximation-theoretic interest.

\begin{theorem}
\label{thm:main2}
Let $k = o(\log d)$, and let $p$ be a real $d$-variate polynomial satisfying $|p(x) - \text{OR}_d(x)| \leq 1/6$ for all $x \in \{-1, 1\}^d$ with $|x| \leq k$. If the $L_1$-weight of the coefficient vector of $p$ is $d^{O(1)}$, then the degree of $p$ is at least %$\Omega(d^{1-O(1/\sqrt{k})}/\sqrt{k}2^{2\sqrt{k}})$. 
$d^{1-O(1/\sqrt{k})}$.
\end{theorem}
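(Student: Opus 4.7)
My plan is to reduce to a univariate problem via Minsky--Papert symmetrization and then prove the resulting univariate lower bound via LP duality with an explicit dual witness built by the dual block method.

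\textbf{Symmetrization.} Given a $d$-variate polynomial $p$ of degree $D$ with $L_1$-weight $W = d^{O(1)}$ that $(1/6)$-approximates $\mathrm{OR}_d$ on all $x \in \{-1,+1\}^d$ with $|x| \le k$, define the univariate polynomial $q(i) := \mathbb{E}_{|x|=i}[p(x)]$, which has degree at most $D$. The approximation hypothesis gives $q(0) \in [5/6,\,7/6]$ and $q(i) \in [-7/6,\,-5/6]$ for $1 \le i \le k$. Crucially, because the $L_1$-weight of $p$'s coefficient vector is at most $W$, we have $|p(x)| \le W$ pointwise on $\{-1,+1\}^d$, and hence $|q(i)| \le W$ for \emph{every} $i \in \{0,1,\ldots,d\}$. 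It therefore suffices to show that any such univariate $q$ must have $\deg q \ge d^{1 - O(1/\sqrt{k})}$.

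\textbf{LP duality and dual witness.} The existence of such a $q$ of degree $<D$ is equivalent to feasibility of a linear program in the coefficients of $q$. By LP duality, ruling out such $q$ is equivalent to exhibiting a signed measure $\phi : \{0,\ldots,d\} \to \mathbb{R}$ that (i) is orthogonal to all monomials of degree $<D$, so that $\sum_i \phi(i) q(i) = 0$ for any degree-$(D-1)$ polynomial, and (ii) satisfies, after a sign choice,
\[
\phi(0) \;-\; \sum_{i=1}^k \phi(i) \;>\; \frac{1}{6}\sum_{i=0}^{k}|\phi(i)| \;+\; W \sum_{i=k+1}^{d}|\phi(i)|.
\]
I would construct $\phi$ as a divided-difference dual: take $\phi$ supported on a set $S \supseteq \{0,1,\ldots,k\}$ of size $D+1$, with $\phi(i) \propto 1/\prod_{j \in S,\, j \neq i}(i-j)$, which makes orthogonality to degree $<D$ automatic. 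The remaining $D-k$ points of $S$ are placed in $\{k+1,\ldots,d\}$ along a Chebyshev-optimal grid designed to make $\sum_{i > k}|\phi(i)|/|\phi(0)|$ as small as possible. To achieve the sharp exponent $1 - O(1/\sqrt{k})$ rather than a weaker $1/2$-type exponent, the construction mirrors the composition used in the upper bound of Theorem \ref{thm:poly-construction}: an outer Chebyshev witness of degree $\Theta(D/\sqrt{k})$ composed with an inner ``amplifier'' of degree $\Theta(\sqrt{k})$, in the style of the dual block method of Sherstov and of Bun--Thaler.

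\textbf{Main obstacle.} The hard step is the quantitative Chebyshev calculation showing that the ratio $|\phi(0)|/\sum_{i>k}|\phi(i)|$ exceeds $W = d^{O(1)}$ precisely when $D < d^{1 - C/\sqrt{k}}$. A direct Chebyshev estimate on $[k+1,d]$ alone only yields the much weaker bound $D = \Omega(\sqrt{d/k}\,\log W)$, because $T_D(1+\delta) \approx e^{D\sqrt{2\delta}}$ with $\delta = \Theta(k/d)$ gives only $D\sqrt{k/d} = \Omega(\log d)$. Amplification via the inner composition effectively enlarges this ``gap'' from $\Theta(k/d)$ to something much larger, and careful tracking of the resulting estimates is what produces the exponent $1-O(1/\sqrt{k})$; the hypothesis $k = o(\log d)$ is what guarantees that this bound is simultaneously valid and strictly less than $d$, so that the statement is not vacuous.
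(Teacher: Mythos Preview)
Your symmetrization and LP-duality setup are correct, and the observation that $w(p)\le W$ forces $|p(x)|\le W$ pointwise (hence $|q(i)|\le W$ for all $i$) is a legitimate way to transfer the weight constraint to the univariate world. The gap is in the dual-witness construction itself, which is precisely the step you flag as ``the hard step'' and then do not carry out.

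The difficulty is that the dual block method you invoke is a \emph{multivariate} technique: it exploits a partition of the $d$ input coordinates into $k$ blocks of size $d/k$, and the dual witness is literally a product over blocks, $\Psi(y)=2^{k}\,\Gamma\bigl(f(Y_1),\dots,f(Y_k)\bigr)\prod_{i}\mu(Y_i)$. The crucial estimate---that $\bigl|\sum_y \Psi(y)\chi_S(y)\bigr|$ is tiny for every low-degree parity $\chi_S$---works because the product factors across blocks, and one can count how many blocks $S$ touches nontrivially. Once you symmetrize to a univariate $q$ on $\{0,\dots,d\}$, this block structure is gone; there is no ``inner'' function left to compose with, and your sentence ``an outer Chebyshev witness \dots composed with an inner amplifier of degree $\Theta(\sqrt{k})$'' has no clear meaning on the one-dimensional domain. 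Your own calculation shows that a direct divided-difference/Chebyshev dual on $\{0,\dots,d\}$ yields only $D=\Omega(\sqrt{d/k}\,\log W)$, and you have not supplied the mechanism that closes the gap to $d^{1-O(1/\sqrt{k})}$.

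The paper proceeds differently: it never symmetrizes. It formulates a duality theorem (the paper's Theorem~\ref{thm:duality-weighted-approx-degree}) in which the weight bound $W$ enters through correlations of the dual $\Psi$ with each parity $\chi_S$, rather than through pointwise bounds on $p$ at high Hamming weight. It then builds $\Psi$ directly on $\{-1,1\}^d$ by writing $\mathrm{OR}_d=\mathrm{OR}_k(\mathrm{OR}_{d/k},\dots,\mathrm{OR}_{d/k})$ and combining a dual $\Gamma$ for $\deg_{1/3}(\mathrm{OR}_k)=\Theta(\sqrt{k})$ with a dual $\mu$ for the degree-$t$ non-constant margin weight of $\mathrm{OR}_{d/k}$ restricted to Hamming weight $\le 1$. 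Because $\mu$ is supported on $H_{d/k,1}$, the product $\prod_i\mu(Y_i)$ is supported on $H_{d,k}$, so $\Psi$ automatically places no mass outside the region where the approximation hypothesis applies. The block-product structure is exactly what makes the parity-correlation bound go through and produces the exponent $1-O(1/\sqrt{k})$; it is not something one can recover after collapsing to a single variable.
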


We note that our algorithmic approach for designing efficient private data release algorithms would work equally well if we have any small set of functions whose low-$L_1$-weight linear combinations approximate disjunctions restricted to inputs of Hamming weight at most $k$. Our lower bound limits the applicability of our approach if we choose to use low-degree monomials as the set of functions. We observe that this also rules out several natural candidates that can themselves be computed exactly by a low-weight polynomial of low-degree (e.g., the set of small-width conjunctions).
There is some additional evidence from prior work that 
low-degree monomials may be the optimal choice: 
if we only care about the \emph{size}
of the set of functions used to approximate
disjunctions on inputs of Hamming weight at most $k$,
then prior work shows that low-degree monomials
are indeed optimal \cite{sherstov-pm} (see also Section 5 in the
full version of \cite{ThalerUlVa12}).
It remains an interesting open
question to determine whether this optimality still holds 
when we restrict the $L_1$ weight of the linear combinations
used in the approximations to be $\poly(d)$.

%, since our lower bound applies to these types of approximations as well.

%We note that our lower bound only limits the applicability of our algorithmic technique when we use low-weight linear combinations of low-degree monomials to approximate disjunctions (which are just low-degree, low-weight polynomials). Our technique would work equally well if we could use low-weight linear combinations of any other small set of functions.  However, several natural candidates (e.g., the set of small-width conjunctions) can themselves be computed exactly by a low-weight polynomial of low-degree, and thus our lower bound applies to these types of approximations as well.

%\knote{TODO: Move this to the discussion/open problems section} An interesting open question raised by our results is to decide whether or not there exists a smaller ``feature-space'' of functions other than low-degree monomials such that every disjunction over $d$ variables can be approximated by a low-weight linear combination of these features. An affirmative (and constructive) answer to this question would immediately yield a more efficient algorithm than ours with a similar accuracy guarantee.

\subsection{Techniques}
%We now describe the algorithm promised by Theorem~\ref{thm:main1}.  
For notational convenience, we focus on \emph{monotone $k$-way disjunction queries}. However, our results extend straightforwardly to general non-monotone $k$-way marginal queries via simple transformations on the database and queries.  A monotone $k$-way disjunction is specified by a set $S \subseteq [d]$ of size $k$ and asks what fraction of records in $D$ have at least one of the attributes in $S$ set to $1$.

Following the approach introduced by Gupta et al.~\cite{GuptaHaRoUl11} and developed into a general theory in~\cite{HardtRoSe12}, we view the problem of releasing answers to conjunction queries as a learning problem.  That is, we view the database as specifying a function $f_D \from \pmo^d \to [0,1]$, in which each input vector $s \in \pmo^d$ is interpreted as the indicator vector of a set $S \subseteq \set{1,\dots,d}$, with $s_i = -1$ iff $i \in S$, and $f_D(s)$ equals the evaluation of the conjunction query specified by $S$ on the database $D$.  Then, our goal is to privately learn to approximate the function $f_{D}$; this is accomplished in \cite{HardtRoSe12} by approximating $f_D$ succinctly with polynomials and learning the polynomial privately. Polynomial approximation is central to our approach as well, as we explain below.
%\knote{The transition from this paragraph to the next seems abrupt now. }

%The starting point for our algorithm is the online private multiplicative weights algorithm (PMW)~\cite{HardtRo10}, which has running time $2^d$ per query and answers any sequence of \emph{arbitrary counting queries} provided $|D| \gtrsim \sqrt{d} \log|\cQ|$.  Gupta, Roth, and Ullman~\cite{GuptaRoUl12} introduced the ``IDC framework'' --- capturing PMW and other algorithms --- for designing differentially private online algorithms. In particular, they showed that such an algorithm can be derived from any \emph{online learning algorithm} that may not necessarily be privacy preserving.

%We begin with a description of how the parameters of the polynomial that approximates $f_D$ as well as the online learning algorithm determine the parameters of the online differentially private learning algorithm. 
We begin with a description of how the parameters of the online learning algorithm determine the parameters of the online differentially private learning algorithm. We consider the ``IDC framework'' \cite{GuptaRoUl12}---which captures the private multiplicative weights algorithm \cite{HardtRo10} among others~\cite{RothRo10, GuptaRoUl12, JainTh12}---for deriving differentially private online algorithms from any \emph{online learning algorithm} that may not necessarily be privacy preserving.

Informally, an online learning algorithm is one that takes a (possibly adaptively chosen) sequence of inputs $s_1,s_2,\dots$ and returns answers $a_1,a_2,\dots$ to each, representing ``guesses'' about the values $f_D(s_1), f_D(s_2),\dots$ for the unknown function $f_D$.  After making each guess $a_i$, the learner is given some information about the value of $f_D(s_i)$.  The quantities of interest are the running time required by the online learner to produce each guess $a_i$ and the number of ``mistakes'' made by the learner, which is the number of rounds $i$ in which $a_i$ is ``far'' from $f_D(s_i)$. Ultimately, for the differentially private algorithm derived in the IDC framework, the notion of far will correspond to the accuracy, the per query running time will essentially be equal to the running time of the online learning algorithm, and the minimum database size required by the private algorithm will be proportional to the square root of the number of mistakes. 

We next describe the well-known technique of deriving faster online learning algorithms that commit fewer mistakes using \emph{polynomial approximations} to the target function.
%The algorithm of Hardt and Rothblum~\cite{HardtRo10}, is based on an online learning algorithm that runs in time $2^d$ and makes $O(d)$ mistakes.  
%A standard approach to obtaining a faster online learning algorithm that still makes few mistakes is to use a \emph{polynomial approximation} to the target function $f_D$.  
Indeed, it is well-known that if $f_D$ can be approximated to high accuracy 
%in the $L_{\infty}$ norm 
by a $d$-variate polynomial $p_D \from \pmo^d \to \R$ of degree $t$ and $L_1$-weight at most $W$, where the weight is defined to be the sum of the absolute values of the coefficients, then there is an online learning algorithm that runs in time $\poly\big(\binom{d}{\leq t}\big)$ and makes $O(W^2d)$ mistakes.  Thus, if $t \ll d$, the running time of such an online learning algorithm will be significantly less than $2^d$ and the number of mistakes (and thus the minimum database size of the resulting private algorithm) will only blow up by a factor of $W$. 
%\jnote{Justin, does this reordering of the prior work sect. address some of your concerns?}

Consequently, our goal boils down to constructing the best possible polynomial representation $p_D$ for any database $D$ -- one with low-degree, low-$L_1$-weight such that $|p_D(s) - f_D(s)|$ is small for all vectors $s \in \pmo^d$ corresponding to monotone $k$-way disjunction queries. 
%Our goal is therefore to demonstrate that for any database $D$, there is a low-degree, low-$L_1$-weight polynomial $p_D$ such that $|p_D(s) - f_D(s)|$ is small for all vectors $s \in \pmo^d$ corresponding to monotone $k$-way disjunction queries.
To accomplish this goal, it is sufficient to construct a low-degree, low-$L_1$-weight polynomial that can approximate the $d$-variate OR function on inputs of Hamming weight at most $k$ (i.e., those that have $-1$ in at most $k$ indices).
Such problems are well-studied in the approximation-theory literature, however our variant requires polynomials to be accurate only on a restricted subset of inputs. 
In fact, the existence of a polynomial with degree $d / \log^{0.99} d$ and $L_1$-weight $d^{o(1)}$ that approximates the $d$-variate OR function on all inputs follows from the work of Servedio et al.~\cite{ThalerCOLT12}. 
We improve these bounds for small values of $k$ by constructing an approximating polynomial that has degree $d^{1-\Omega(1/\sqrt{k})}$ and $L_1$-weight $d^{0.01}$. 
%Indeed we construct such an approximation, showing that for any $k$ there is a suitable polynomial of degree $d^{1-\Omega(1/\sqrt{k})}$ and weight $d^{.01}$.  For larger values of $k$---for which the previous bound becomes trivial---we show that there exists a suitable polynomial with degree $d / \log^{.99} d$ and weight $d^{o(1)}$.

%At a high level, our construction of a low-degree, low-weight polynomial proceeds by breaking a $d$-variate OR function into an ``OR of ORs'', where the outer OR has relatively small fan-in, and the inner ORs have very large fan-in. 
%We then approximate the outer OR with a low-degree polynomial, while we approximate the inner ORs with a low-weight, high-degree polynomial. We argue that the composed polynomial has weight at most $d^{.01}$, and degree $d^{1-\Omega(\sqrt{k})}$.
%We then approximate the outer OR with a low-degree polynomial, while computing the inner ORs exactly with a low-weight, high-degree polynomial. We argue that the composed polynomial has weight at most $d^{.01}$, and degree $d^{1-\Omega(1/\sqrt{k})}$. Essentially, we pay for the total weight from the weight of the outer polynomial, while we pay for the total degree from the high-degree of the inner polynomial.

%\jnote{We should significantly trim the next two paragraphs, but I'm open to suggestions on how.}

We also prove a new approximation-theoretic lower bound for polynomials that seek to approximate a target function for a restricted subset of inputs. 
%This could be of independent interest. 
Specifically, we show that for any $k = o(\log d)$, any polynomial $p$ of weight $\poly(d)$ that satisfies $|p(s) - \text{OR}(s)| \leq 1/6$ for all inputs $s \in \pmo^d$ of Hamming weight at most $k$ must have degree $d^{1-O(1/\sqrt{k})}$.
We prove our lower bound by expressing the problem of constructing such a low-weight, low-degree polynomial $p$ as a linear program, and exhibiting an explicit solution to the dual of this linear program. 
Our proof is inspired by recent work of Sherstov \cite{sherstov-intersecths, sherstov-pm, sherstov-direct} and Bun-Thaler \cite{bunthaler}. 
%This lower bound suggests that we may have reached the limit of our approach of using polynomial approximations for designing efficient private data release algorithms. 

\subsection{Related Work}

\paragraph{Other Results on Privately Releasing Marginals.}
In work subsequent to our result, Dwork et al.~\cite{DworkNiTa13} show how to privately release marginals in a very different parameter regime.  Their algorithm is faster than ours, running in time $\poly(\binom{d}{\leq k})$, and has better dependence on the error parameter.  However, their algorithm requires that the database size is $\widetilde{\Omega}(d^{\lceil k/2 \rceil /2})$ for answering with error $\pm 0.01$.  This size is comparable to the optimal $\widetilde{\Omega}(k\sqrt{d})$ only when $k\leq 2$.  In contrast, our algorithm has nearly-optimal minimum database size for every choice of $k$.

While we have focused on accurately answering \emph{every} $k$-way marginal query, or more generally every query in a sequence of marginal queries, several other works have considered more relaxed notions of accuracy.  These works show how to efficiently release a summary of the database from which an analyst can efficiently compute an approximate answer to marginal queries, with the guarantee that the \emph{average} error of a marginal query is at most $.01$, when the query is chosen from a particular distribution.  In particular, Feldman and Kothari~\cite{FeldmanKo13} achieve small average error over the uniform distribution with running time and database size $\tilde{O}(d^2)$; Gupta et al.~\cite{GuptaHaRoUl11} achieve small average error over any product distribution with running time and minimum database size $\poly(d)$; finally Hardt et al.~\cite{HardtRoSe12} show how to achieve small average error over arbitrary distributions with running time and minimum database size $2^{\tilde{O}(d^{1/3})}$.  All of these results are based on the approach of learning the function $f_D$.

Several works have also considered information theoretic bounds on the minimum database size required to answer $k$-way marginals.  Kasiviswanathan et al.~\cite{KasiviswanathanRuSmUl10} showed that $|D| \geq \min\{1/\alpha^2, d^{k/2}/\alpha\}$ is necessary to answer all $k$-way marginals with error $\pm \alpha$.  De~\cite{De12} extended this result to hold even when accuracy $\pm \alpha$ can be violated for a constant fraction of $k$-way marginals.  In our regime, where $\alpha = \Omega(1)$, their results do not give a non-trivial lower bound.  In forthcoming work, Bun, Ullman, and Vadhan~\cite{BunUlVa13} have proven a lower bound of $|D| \geq \tilde{\Omega}(k\sqrt{d})$, which is nearly optimal for $\alpha = \Omega(1)$.

\paragraph{Relationship with Hardness Results for Differential Privacy.}
Ullman~\cite{Ullman13} (building on the results of Dwork et al.~\cite{DworkNaReRoVa09}), showed that any $2^{o(d)}$-time differentially private algorithm that answers \emph{arbitrary} counting queries can only give accurate answers if $|D| \gtrsim |\cQ|^{1/2}$, assuming the existence of exponentially hard one-way functions.  Our algorithms have running time $2^{o(d)}$ and are accurate when $|D| \ll |\cQ|^{1/2}$, and thus show a separation between answering marginal queries and answering arbitrary counting queries.

When viewed as an offline algorithm for answering all $k$-way marginals, our algorithm will return a list of values containing answers to each $k$-way marginal query.  It would in some cases be more attractive if we could return a \emph{synthetic database}, which is a new database $\widehat{D} \in (\bits^d)^{\widehat{n}}$ whose rows are ``fake'', but such that $\widehat{D}$ approximately preserves many of the statistical properties of the database $D$ (e.g., all the marginals).  Some of the previous work on counting query release has provided synthetic data ~\cite{BarakChDwKaMcTa07,BlumLiRo08,DworkNaReRoVa09,DworkRoVa10,HardtLiMc12}.

Unfortunately, Ullman and Vadhan~\cite{UllmanVa11} (building on~\cite{DworkNaReRoVa09}) have shown that no differentially private sanitizer with running time $\poly(d)$ can take a database $D \in (\bits^d)^n$ and output a private synthetic database $\widehat{D}$, all of whose $2$-way marginals are approximately equal to those of $D$, assuming the existence of one-way functions.  They also showed that under certain strong cryptographic assumptions, there is no differentially private sanitizer with running time $2^{d^{1-\Omega(1)}}$ can output a private synthetic database, all of whose $2$-way marginals are approximately equal to those of $D$.  Our algorithms indeed achieve this running time and accuracy guarantee when releasing $k$-way marginals for constant $k$, and thus it may be inherent that our algorithms do not generate synthetic data.

%\paragraph{Relationship with Results in Learning and Approximation Theory.}
\paragraph{Relationship with Results in Approximation Theory.}
Servedio et al.~\cite{ThalerCOLT12} focused on developing low-weight, low-degree polynomial threshold functions (PTFs) for decision lists, motivated by applications in computational learning theory.
As an intermediate step in their PTF constructions, they constructed low-$L_1$-weight, low-degree polynomials that approximate the OR function on \emph{all} Boolean inputs.
Our construction of lower-weight, lower-degree polynomials that approximate the OR function on low Hamming weight inputs is inspired by and builds on Servedio et al.'s construction
of approximations that are accurate on all Boolean inputs.

The proof of our lower bound is inspired by recent work that has established new approximate degree lower bounds via the construction of dual solutions to certain linear programs.
In particular, Sherstov \cite{sherstov-intersecths} showed that approximate degree and PTF degree behave roughly multiplicatively under function composition, while Bun and Thaler \cite{bunthaler} 
gave a refinement of Sherstov's method in order to resolve the approximate degree of the two-level AND-OR tree, and also gave an explicit dual witness for the approximate degree of any symmetric Boolean function. We extend these lower bounds along two directions: (1) we show degree lower bounds that take into account the $L_1$-weight of the coefficient vector of the approximating polynomial, and
(2) our lower bounds hold even when we only require the approximation to be accurate on inputs of low Hamming weight, while prior work only considered approximations that are accurate on all Boolean inputs.

Some prior work has studied the degree of polynomials that point-wise approximate \emph{partial} Boolean functions \cite{sherstov-direct, sherstov-multi}. Here, a function $f: Y \rightarrow \mathbb{R}$ is said to be partial if its domain $Y$ is a strict subset of $\{-1, 1\}^d$, and a polynomial $p$ is said to $\epsilon$-approximate $f$ if 

\begin{enumerate}
\item $|f(x) - p(x)| \leq \epsilon$ for all $x \in Y$, and 
\item $|p(x)| \leq 1+\epsilon$ for all $x \in \{-1, 1\}^d \setminus Y$.
\end{enumerate}
In contrast, our lower bounds apply even in the absence of Condition 2, i.e., when $p(x)$ is allowed to take arbitrary values on inputs in $ \{-1, 1\}^d \setminus Y$.

%In our case, we need our dual solution to witness hardness even when the primal solution is only required to be accurate on inputs of low Hamming weight, while prior work only needed to demonstrate hardness for approximations that are accurate on the entire Boolean hypercube. 

%Our lower bound proof is also reminiscent of (but distinct from) 
%Sherstov's \emph{pattern-matrix method} \cite{patternmatrix}: given any function $f$ of high approximate degree, the pattern matrix method uses a dual witness to the high approximate degree of $f$
%to derive a communication complexity lower bound on a function derived from $f$.

%\knote{@Justin, Maybe it is better to be more clear - pattern matrix uses dual witnesses to derive lower bounds independent of the feature space whereas we derive lower bounds for a particular feature space.}
%\junote{@Karthik, I don't know about this -- there's a lot of machinery and details required to explain how pattern matrix yields feature-independent lower bounds, and these lower bounds only apply to PTF weight-degree tradeoffs. I think it is best to just ignore this, but we can think more about this after the rest of the paper is done.}
%\knote{Alright, then maybe we should completely avoid referring to the pattern-matrix method here. Our composition technique is inspired by the one in Sherstov's intersection of halfspaces. Enough said, I feel.}

Finally, while our motivation is private data release, our approximation theoretic results are similar in spirit to recent work of Long and Servedio \cite{LS13}, who are motivated by applications in computational learning theory. Long and Servedio consider halfspaces $h$ defined on inputs of small Hamming weight, and (using different techniques very different from ours) 
give upper and lower bounds on the weight of these halfspaces when represented as linear threshold functions. \\
%Long and Servedio are motivated by the performance of learning algorithms in the common scenario of learning from very high-dimensional categorical examples which are such that only a small number of features are active in each example. 

\noindent{\bf Organization. }
In Section \ref{sec:privacy}, we describe our private online algorithm and show that it yields the claimed accuracy given the existence of sufficiently low-weight polynomials that approximate the $d$-variate OR function on inputs of low Hamming weight.  The results of this section are a combination of known techniques in differential privacy~\cite{RothRo10, HardtRo10, GuptaRoUl12} and learning theory (see e.g.,~\cite{KlivansSe04}).  Readers familiar with these literatures may prefer to skip Section~\ref{sec:privacy} on first reading.  In Section \ref{sec:upperbound}, we give our polynomial approximations to the OR function, both on low Hamming weight Boolean inputs and on all Boolean inputs.  Finally, in Section \ref{sec:lowerbounds}, we state and prove our lower bounds for polynomial approximations to the OR function on restricted inputs.

%... is editing at the moment
\section{Preliminaries}
\label{sec:prelims}

\subsection{Differentially Private Sanitizers}\label{sec:sans}
Let a \emph{database} $D \in (\cX)^n$ be a collection of $n$ rows $x^{(1)}, \dots, x^{(n)}$ from a \emph{data universe} $\cX$.  We say that two databases $D,D' \in (\cX)^n$ are \emph{adjacent} if they differ only on a single row, and we denote this by $D \sim D'$.

Let $\san: (\cX)^n \to \cR$ be an algorithm that takes a database as input and outputs some data structure in $\cR$.  We are interested in algorithms that satisfy \emph{differential privacy}.
\begin{definition}[Differential Privacy~\cite{DworkMcNiSm06}]\label{def:dp} An algorithm $\san\from (\cX)^n \to \cR$ is \emph{$(\eps, \delta)$-differentially private} if for every two adjacent databases $D \sim D' \in (\cX)^n$ and every subset $S \subseteq \cR$,
$$
\prob{\san(D) \in S} \leq e^{\eps} \prob{\san(D') \in S} + \delta.
$$
\end{definition}

Since a sanitizer that always outputs $\bot$ satisfies Definition~\ref{def:dp}, we focus on sanitizers that are accurate.  In particular, we are interested in sanitizers that give accurate answers to \emph{counting queries}.  A counting query is defined by a boolean predicate $q\from \cX \to \bits$.  Abusing notation, we define the evaluation of the query $q$ on a database $D \in (\cX)^n$ to be
$
q(D) = \frac{1}{n} \sum_{i=1}^{n} q(x^{(i)}).
$
Note that the value of a counting query is in $[0,1]$.  We use $\cQ$ to denote a set of counting queries.

For the purposes of this work, we assume that the range of $\san$ is simply $\R^{|\cQ|}$.  That is, $\san$ outputs a list of real numbers representing answers to each of the specified queries.  

\begin{comment}Since $\san$ may output an arbitrary data structure, we must specify how to answer queries in $\cQ$ from the output $\san(D)$.  Hence, we require that there is an \emph{evaluator} $\cE\from \cR \times \cQ \to \R$ that estimates $q(D)$ from the output of $\san(D)$.  For example, if $\san$ outputs a vector of ``noisy answers'' $Z = (q(D) + Z_q)_{q \in \cQ}$, where $Z_q$ is a random variable for each $q \in \cQ$, then $\cR = \R^{\cQ}$ and $\cE(Z, q)$ is the $q$-th component of $Z$.  Abusing notation, we write $q(Z)$ and $q(\san(D))$ as shorthand for $\cE(Z, q)$ and $\cE(\san(D),q)$, respectively.  Since we are interested in the efficiency of the sanitization process as a whole, when we refer to the running time of $\san$, we also include the running time of the evaluator $\cE$.  We say that $\san$ is ``accurate'' for the query set $\cQ$ if the values $q(\san(D))$ are close to the answers $q(D)$.  Formally,
\end{comment}

\begin{definition}[Accuracy]\label{def:acc}
The output of $\san(D)$, $a = (a_q)_{q \in \cQ}$, is \emph{$\alpha$-accurate} for the query set $\cQ$ if
$$
\forall q \in \cQ,\; |a_q - q(D)| \leq \alpha
$$
A sanitizer is \emph{$(\alpha, \beta)$-accurate} for the query set $\cQ$ if for every database $D$, $\san(D)$ outputs $a$ such that with probability
at least $1-\beta$, the output $a$ is $\alpha$-accurate for $\cQ$,
where the probability is taken over the coins of $\san$.
\end{definition}

We remark that the definition of both differential privacy and $(\alpha, \beta)$-accuracy extend straightforwardly to the online setting.  Here the algorithm receives a sequence of $\ell$ (possibly adaptively chosen) queries from $\cQ$ and must give an answer to each before seeing the rest of the sequence.  Here we require that with probability at least $1-\beta$, \emph{every} answer output by the algorithm is within $\pm \alpha$ of the true answer on $D$.  See e.g., ~\cite{HardtRo10} for an elaborate treatment of the online setting.

\subsection{Query Function Families} \label{sec:Qfamilies}
Given a set of queries of interest, $\cQ$ (e.g., all marginal queries), we think of the database $D$ as specifying a function $f_D$ mapping queries $q$ to their answers $q(D)$.
%, which we call the \emph{$\cQ$-representation of $D$}. 
We now describe this transformation more formally:

\begin{definition}[$\cQ$-Function Family] \label{def:Qfunction}
Let $\cQ = \set{q_y}_{y \in Y_{\cQ} \subseteq \pmo^m}$ be a set of counting queries on a data universe $\cX$, where each query is indexed by an $m$-bit string.  We define the \emph{index set of $\cQ$} to be the set $Y_{\cQ} = \set{y \in \pmo^m \mid q_y \in \cQ}$. 

We define the \emph{$\cQ$-function family} $\cF_{\cQ} = \set{ f_{x}: \pmo^m \to [0,1]}_{x \in \cX}$ as follows:
For every possible database row $x \in \cX$, the function $f_{x}: \pmo^m \to [0,1]$ is defined as $f_{x}(y) = q_y(x)$.  
Given a database $D \in (\cX)^n$ we define the function $f_{\cQ,D}: \pmo^m \to [0,1]$ where $f_{\cQ,D}(y) = \frac{1}{n} \sum_{i=1}^{n} f_{x^{(i)}}(y)$.  When $\cQ$ is clear from context we will drop the subscript $\cQ$ and simply write $f_x$, $f_D$, and $\cF$. 
\end{definition}

%\begin{fact} \label{fact:disj1}
When $\cQ$ is the set of all monotone $k$-way disjunctions on a database $D \in (\bits^d)^n$, the queries are defined by sets $S \subseteq [d]$ , $|S| \leq k$.  In this case, we represent each query by the $d$-bit $-1/1$ indicator vector $y_S$ of the set $S$, where $y_S(i)=-1$ if and only if $i\in S$. Thus, $y_S$ has at most $k$ entries that are $-1$.  Hence, we can take $m = d$ and $Y_{\cQ} = \set{ y \in \pmo^d \mid \sum_{j=1}^{d} \mathbf{1}_{\set{y_i = -1}} \leq k}$. 
%\end{fact}

\subsection{Low-Weight Polynomial Approximations} \label{sec:lwapprox}

Given an $m$-variate real polynomial $p \from \pmo^m \to \R$,
\[
p(y) = \sum_{S \subseteq [m] } c_S \cdot \prod_{i \in S} y_i,
\]
we define the degree, weight $w(\cdot)$ and {\em non-constant weight} $w^*(\cdot)$ of the polynomial as follows:
\begin{align*}
\text{deg}(p)&:=\max\{|S|:S\subseteq [m], c_S\neq 0\},\\
w(p)&:=\sum_{S\subseteq [m]}|c_S|, and\\
w^*(p)&:=\sum_{S\subseteq [m],S\neq \emptyset}|c_S|.
\end{align*}

We use $\binom{[m]}{\leq t}$ to denote $\set{S \subseteq [m] \mid |S| \leq t}$ and $\binom{m}{\leq t} = \card{\binom{[m]}{\leq t}} = \sum_{j = 0}^{t} \binom{m}{j}$.  

We will attempt to approximate the functions $f_{x}: \pmo^m \to \bits$ on all the indices in $Y_{\cQ}$ by a family of polynomials with low degree and low weight. Formally and more generally: 

\begin{definition}[Restricted Approximation by Polynomials] \label{def:approxbypolys}
Given a function $f\from Y\rightarrow \R$, where $Y\subseteq \R^m$, and a subset $Y'\subseteq Y$, we denote the restriction of $f$ to $Y'$ by $f|_{Y'}$. 
Given an $m$-variate real polynomial $p$, we say that $p$ is a {\em $\gamma$-approximation} to the restriction $f|_{Y'}$, if $|f(y)-p(y)|\le \gamma$ $\forall y\in Y'$.
Notice there is no restriction whatsoever placed on $p(y)$ for $y \in Y \setminus Y'$.

Given a family of $m$-variate functions $\cF = \set{f_x\from Y\to\R}_{x \in \cX}$, where $Y\subseteq \R^m$, a set $Y' \subseteq Y$ we use $\cF|_{Y'}=\set{f_x|_{Y'}}_{x \in \cX}$ to denote the family of restricted functions.
Given a family $\cP$ of $m$-variate real polynomials, we say that the family $\cP$ is a \emph{$\gamma$-approximation to the family $\cF|_{Y'}$} if for every $x \in \cX$, there exists $p_{x} \in \cP$ that is a $\gamma$-approximation to $f_x|_{Y'}$.
\end{definition}

Let $H_{m,k}=\{x\in \pmo^m: \sum_{i=1}^m(1-x_i)/2\le k\}$ denote the set of inputs of Hamming weight at most $k$. We view the $d$ variate OR function, $\text{OR}_d$ as mapping inputs from $\pmo^d$ to $\pmo$, with the convention that $-1$ is TRUE and $1$ is FALSE. Let $\cP_{t,W}(m)$ denote the family of all $m$-variate real polynomials of degree $t$ and weight $W$. For the upper bound, we will show that for certain small values of $t$ and $W$, the family $\cP_{t,W}(d)$ is a $\gamma$-approximation to the family of all disjunctions restricted to $H_{d,k}$.

%\jnote{Someone please look at this fact and the previous one.  It may be better to refactor them somehow. (Andrew: I've edited it a bit, someone please make sure I didn't  mess it up) }
\begin{fact} \label{fact:disj2}
If $\cQ$ is the set of all monotone $k$-way disjunctions on a database $D \in (\bits^d)^n$, $\cF$ is its function family, and $Y = H_{d,k}$ is its index set, then $\cP_{t,W}(d)$ is a $\gamma$-approximation to the restriction $\cF|_{Y}$ if and only if there is a degree $t$ polynomial of weight $O(W)$ that $\gamma$-approximates $OR_{d}|_{H_{d,k}}.$
%$\cP_{t,W}$ is a $\gamma$-approximation to the restriction $OR_{d}|_{H_{d,k}}$.
\end{fact}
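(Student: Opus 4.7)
The plan is to show that the function $f_x$ for an individual row $x$ is nothing more than $(1-\text{OR}_d)/2$ composed with a coordinate-wise substitution that depends on $x$, and then transfer polynomial approximations back and forth through this substitution.

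\textbf{Key identity.} For $x \in \bits^d$ and $y \in \pmo^d$, define $z(x,y) \in \pmo^d$ by $z_i := y_i$ when $x_i = 1$ and $z_i := 1$ when $x_i = 0$. Since the monotone disjunction evaluates as $q_y(x) = \bigvee_{i : y_i = -1} x_i$, we have
$$
f_x(y) \;=\; q_y(x) \;=\; \frac{1 - \text{OR}_d(z(x,y))}{2}.
$$
Crucially, $z_i(x,y) = -1$ forces both $y_i = -1$ and $x_i = 1$, so the number of $-1$ coordinates of $z(x,y)$ is at most that of $y$; in particular $y \in H_{d,k}$ implies $z(x,y) \in H_{d,k}$.

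\textbf{Both directions via substitution.} For the $(\Leftarrow)$ direction, given a degree-$t$, $O(W)$-weight polynomial $p$ that $\gamma$-approximates $\text{OR}_d$ on $H_{d,k}$, set $p_x(y) := (1 - p(z(x,y)))/2$. The substitution $z_i \mapsto y_i$ or $z_i \mapsto 1$ maps each monomial $\prod_{i \in T} z_i$ to $\prod_{i \in T \cap S_x} y_i$, where $S_x = \{i : x_i = 1\}$; so the weight of $p_x$ is at most $(w(p)+1)/2$ by the triangle inequality, and its degree is at most $t$. The Hamming-weight preservation then yields $|p_x(y) - f_x(y)| \le \gamma/2$ for all $y \in H_{d,k}$, placing $p_x \in \cP_{t,W}(d)$ after absorbing a constant into the $O(\cdot)$. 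For the $(\Rightarrow)$ direction, specialize $x = 1^d$: then $z(1^d, y) = y$, so $f_{1^d}(y) = (1-\text{OR}_d(y))/2$ on all of $\pmo^d$, and from any degree-$t$, weight-$W$ polynomial $p_{1^d}$ that $\gamma$-approximates $f_{1^d}$ on $H_{d,k}$, the polynomial $p(y) := 1 - 2 p_{1^d}(y)$ has degree $t$, weight at most $1 + 2W = O(W)$, and $\gamma$-approximates $\text{OR}_d$ on $H_{d,k}$ (again up to an absolute constant in $\gamma$ that is absorbed in the statement).

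\textbf{Expected obstacles.} There is essentially no substantial obstacle; the argument is a routine change of variables. The one conceptual ingredient worth highlighting is the Hamming-weight-preservation property of the substitution $y \mapsto z(x,y)$, which is precisely what guarantees that an approximation for $\text{OR}_d$ valid only on low-Hamming-weight inputs transfers to an approximation for $f_x$ valid on the entire index set $Y = H_{d,k}$. All remaining bookkeeping (degree, weight, accuracy) is linear in the coefficients of $p$ and therefore controlled by the triangle inequality.
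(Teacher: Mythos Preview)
Your proposal is correct and takes essentially the same approach as the paper: the paper's one-line justification is precisely your ``key identity,'' written as $f_x(y) = \tfrac{1-\text{OR}_d(y_1^{x_1},\dots,y_d^{x_d})}{2}$, which is exactly your substitution $z(x,y)$ since $y_i^{x_i}=y_i$ when $x_i=1$ and $y_i^{x_i}=1$ when $x_i=0$. You have simply spelled out the bookkeeping (Hamming-weight preservation, degree/weight tracking, both directions) that the paper leaves implicit; the minor constant-factor slippage in $\gamma$ and $W$ you flag is indeed absorbed by the informal $O(W)$ in the statement.
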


The fact follows easily by observing that for any $x \in \bits^d$, $y \in \pmo^d$, $$f_x(y) = \bigvee_{i \in x} \mathbf{1}_{\set{y_i = -1}} = \frac{1-\text{OR}_d(y_1^{x_1}, \dots, y_d^{x_d})}{2}.$$
%Thus, given a polynomial of degree $t$ and weight $W$ that $\gamma$-approximates $OR_{d}|_{H_{d,k}}$, obtain the polynomial $p_x(y)=p(y_1^{x_1},\ldots,y_d^{x_d})$. On the other side, take p(y)=p_{all ones}(y).

For the lower bound, we will show that any collection of polynomials with small weight that is a $\gamma$-approximation to the family of disjunctions restricted to $H_{m,k}$ should have large degree. We need the following definitions:
\begin{definition}[Approximate Degree]\label{def:approx-degree}
Given a function $f\from Y\rightarrow \R$, where $Y\subseteq \R^m$, the $\gamma$-approximate degree of $f$ is
\[
\text{deg}_{\gamma}(f):=\text{min}\{d:\exists\text{ real polynomial $p$ that is a $\gamma$-approximation to $f$, $\text{deg}(p)=d$}\}.
\]
Analogously, the $(\gamma,W)$-approximate degree of $f$ is
\[
\text{deg}_{(\gamma,W)}(f):=\text{min}\{d:\exists\text{ real polynomial $p$ that is a $\gamma$-approximation to $f$, $\text{deg}(p)=d$, $w(p)\le W$}\}.
\]
It is clear that $\text{deg}_{\gamma}(f)=\text{deg}_{(\gamma,\infty)}(f)$. 

We let $w^*(f,t)$ denote the \emph{degree-$t$ non-constant margin weight} of $f$, defined to be:
\[
w^*(f,t):=\min\{w^*(p):\exists \text{ real polynomial $p$ s.t. } \text{deg}(p)\le t, f(y)p(y)\ge 1\ \forall\ y\in Y\}.
\]
\end{definition}
The above definitions extend naturally to the restricted function $f|_{Y'}$.

\vspace{2mm}
Our definition of non-constant margin weight is closely related to the well-studied notion of the degree-$t$ polynomial threshold function (PTF) weight of $f$ (see e.g., \cite{sherstov-pm}),
which is defined as $\min_p w(p)$, where the minimum is taken over all degree-$t$ polynomials $p$ with integer coefficients, such that $f(x) = \text{sign}(p(x))$ for all $x \in \{-1, 1\}^d$. 
Often, when studying PTF weight, the requirement that $p$ have integer coefficients is used only to ensure that $p$ has non-trivial margin, i.e. that $|p(x)| \geq 1$ for all $x \in \{-1, 1\}^d$; this is precisely 
the requirement captured in our definition of non-constant margin weight. We choose to work with margin weight because
it is a cleaner quantity to analyze using linear programming duality; PTF weight can also be studied 
using LP duality, but  the integrality constraints on the coefficients of $p$ introduces an integrality gap that causes some loss in the analysis (see e.g., Sherstov \cite[Theorem 3.4]{sherstov-pm} and Klauck \cite[Section 4.3]{klauck}).

%The degree where one takes into account all coefficients including the constant monomial and it is also required that the coefficients in the polynomial be integers. As observed by \cite{klauck}, our definition gives a strong handle on the existence of such polynomials since it can be expressed as a linear program, thereby enabling the use of strong duality.

%We also consider polynomial threshold function (PTF) representations of Boolean functions. We define the PTF degree of $f: \{-1, 1\}^d \rightarrow \{-1, 1\}$ to be the smallest $d$ such that there exists a polynomial $p$ of total degree $d$ such that $f(x) = \text{sgn}(p(x))$ for all $x \in \{-1, 1\}$.
%\knote{Do we still need a definition of PTF representation? We don't consider them.}

\section{From Low-Weight Approximations to Private Data Release}
\label{sec:privacy}
In this section we show that low-weight polynomial approximations imply data release algorithms that provide approximate answers even on small databases.  %Informally, if the family of low-weight, low-degree $m$-variate polynomials $\cP_{t,W}(m)$ $(1/400)$-approximates $\cF_{\cQ}$, then there is a differentially private online algorithm with running time $\poly\big(\binom{m}{\leq t}, |\cQ|\big)$ that releases answers to every query sequence of $\ell$ queries in $\cQ$ within error $\pm .01$ as long as $n \gtrsim W\sqrt{m} \log \ell / \eps$.
%\jnote{Someone could edit these sentences to change the tone a bit.}
The main goal of this section is to prove the following theorem. 

\begin{theorem}\label{thm:poly-approx-gives-DP-algorithm}
Given $\alpha,\beta,\eps,\delta > 0$, and a family of linear queries $\cQ$ with index set $Y\subseteq \{-1,1\}^m$. 
Suppose for some $t \leq m$, $W>0$, the family of polynomials $\cP_{t,W}(m)$ $(\alpha/4)$-approximates the function family $\cF_{\cQ}|_{Y}$. Then there exists an $(\epsilon,\delta)$-differentially private online algorithm that is $(4\alpha,\beta)$-accurate for any sequence of $\ell$ (possibly adaptively chosen) queries from $\cQ$ on a database $D\in (\cX)^n$, provided
\[
n\ge \frac{128 W \log{(\ell/\beta)}\log{(4/\delta)}}{\alpha^2 \eps}\sqrt{\log{\left(2\binom{m}{\le t}+1\right)}}.
\]
The private algorithm has running time $\poly\left(n, \binom{m}{\le t},\log{W},\log{(1/\alpha),\log(1/\beta),\log(1/\eps),\log(1/\delta)}\right)$.
\end{theorem}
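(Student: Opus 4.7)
The plan is to reduce the data release problem to an online linear learning problem over an expanded feature space, and then apply the IDC/Private Multiplicative Weights framework of Hardt-Rothblum and Gupta-Roth-Ullman to make it differentially private. The key insight is that although the family $\cF_{\cQ}$ may be hard to learn directly, the low-weight polynomial approximation exposes a linear structure in a larger but tractable feature space, and multiplicative-weights learning over this feature space can be made private at a cost governed by the $L_1$-weight $W$.

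\textbf{Feature lift.} By hypothesis, each row $x\in\cX$ admits a polynomial $p_x$ of degree at most $t$ and $L_1$-weight at most $W$ which $(\alpha/4)$-approximates $f_x|_Y$. Writing $p_x(y)=\sum_{S\in\binom{[m]}{\le t}}c_{x,S}\,\chi_S(y)$ with $\chi_S(y):=\prod_{i\in S}y_i$ and averaging over the rows produces $\bar p_D=\frac{1}{n}\sum_i p_{x^{(i)}}$, a polynomial whose coefficient vector has $L_1$-norm at most $W$ (by triangle inequality) and which $(\alpha/4)$-approximates $f_D$ on all of $Y$. Thus, estimating $f_D(y)$ for $y\in Y$ reduces to learning a linear function $y\mapsto\langle c,\Phi(y)\rangle$ with $\|c\|_1\le W$, where $\Phi(y)\in\{-1,+1\}^N$ enumerates the monomials $\chi_S(y)$ over the $N:=\binom{m}{\le t}$-dimensional feature space.

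\textbf{Online learner with bounded mistakes.} I instantiate a multiplicative-weights (Winnow-style / Hedge) online learner over these $N$ features. On each query $y^{(i)}$ it outputs $\langle c^{(i)},\Phi(y^{(i)})\rangle$ from its current weight vector, and performs a multiplicative update only when told that its prediction is more than $\Theta(\alpha)$ from the truth. A standard potential-function argument (as in Littlestone-Warmuth, or the analysis of private multiplicative weights restricted to the feature space) shows that the total number of mistakes is $B=O(W^2\log N/\alpha^2)$. Each round costs $\poly(n,N)$ to read $D$, evaluate $\Phi$, and update weights, matching the stated per-query running time.

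\textbf{Privacy via Sparse Vector.} To privatize, I feed this learner into the IDC framework. On each query, the algorithm compares the learner's current prediction $\langle c^{(i)},\Phi(y^{(i)})\rangle$ against a Laplace-noised estimate of $f_D(y^{(i)})$, passes the gap through an Above-Threshold gate, and only upon a positive answer (a detected mistake) draws a fresh noisy estimate to hand back to the learner as the feedback triggering its update. Since at most $B$ such updates ever occur, advanced composition across the $\ell$ threshold tests and $B$ revealed noisy answers yields $(\eps,\delta)$-differential privacy provided the Laplace noise has scale $\Theta\!\bigl(\sqrt{B\log(1/\delta)}\,\log(\ell/\beta)/(n\eps)\bigr)$. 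Setting this scale to $O(\alpha)$, summing the three error sources (polynomial approximation $\alpha/4$, Laplace noise, and Above-Threshold slack) to at most $4\alpha$, and substituting $B=O(W^2\log N/\alpha^2)$ yields exactly the lower bound on $n$ in the theorem.

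\textbf{Main obstacle.} The delicate point is a clean bookkeeping of error propagation. The learner receives \emph{noisy} feedback, so one must ensure (i) the Winnow mistake bound survives the perturbation — achieved by choosing the Above-Threshold slack large enough that every triggered update corresponds to a true deviation of $\Omega(\alpha)$ — and (ii) the noise magnitude, the number of updates, and the failure probability $\beta$ across all $\ell$ adaptive rounds are balanced so that the stated sample complexity is tight up to constants. Once these parameters are aligned, accuracy and privacy follow from standard properties of the Laplace/Sparse-Vector mechanism together with the multiplicative-weights potential argument.
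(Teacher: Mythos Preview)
Your proposal is correct and follows essentially the same approach as the paper: a multiplicative-weights learner over the monomial feature space yields an IDC with mistake bound $O(W^2\log\binom{m}{\le t}/\alpha^2)$, and plugging this into the generic IDC-to-private-online-algorithm conversion of \cite{GuptaRoUl12} (which internally uses the Sparse Vector/Above-Threshold mechanism you describe) gives the stated database-size bound. The only cosmetic difference is that the paper explicitly doubles the feature space to $2\binom{m}{\le t}+1$ coordinates (positive and negative parts of each coefficient plus a slack coordinate) so that the weight vector is a probability distribution and the KL-potential argument goes through cleanly, whereas you fold this into the phrase ``Winnow-style.''
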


We note that the theorem can be assembled from known techniques in the design and analysis of differentially private algorithms and online learning algorithms. We include the proof of the theorem here for the sake of completeness, as to our knowledge they do not explicitly appear in the privacy literature.

We construct and analyze the algorithm in two steps.  First, we use standard arguments to show that the non-private multiplicative weights algorithm can be used to construct a suitable online learning algorithm for $f_{\cQ, D}$ whenever $f_{\cQ, D}$ can be approximated by a low-weight, low-degree polynomial.  Here, a suitable online learning algorithm is one that fits into the IDC framework of Gupta et al.~\cite{GuptaRoUl12}.  We then apply the generic conversion from IDCs to differentially private online algorithms~\cite{RothRo10, HardtRo10, GuptaRoUl12} to obtain our algorithm.

\subsection{IDCs}
We start by providing the relevant background on the \emph{iterative database construction} framework.  An IDC will maintain a sequence of functions $f^{(1)}_{\db}, f^{(2)}_{\db}, \dots$ that give increasingly good approximations to the $f_{\cQ, \db}$.  In our case, these functions will be low-degree polynomials.  Moreover, the mechanism produces the next approximation in the sequence by considering only one query $y^{(t)}$ that ``distinguishes'' the real database in the sense that $|f^{(t)}(y^{(t)}) - f_D(y^{(t)})|$ is large.

\begin{definition}[IDC~\cite{RothRo10,HardtRo10,GuptaRoUl12}]
Let $\cQ = \{q_{y}\}_{y \in Y_{\cQ} \subseteq \{-1,1\}^m}$ be a family of counting queries indexed by $m$-bit strings.  Let $\update$ be an algorithm mapping a function $f \from Y_{\cQ} \to \R$, a query $y \in Y_{\cQ}$, and a real number $a$ to a new function $f'$.  Let $\db \in (\bits^d)^n$ be a database and $\alpha > 0$ be a parameter.  Consider the following game with an adversary.  Let $f^{(1)}_{\db}$ be some function.  In each round $t = 1,2,\dots$:
\begin{enumerate}
\item The adversary chooses a query $y^{(t)} \in Y_{\cQ}$ (possibly depending on $f^{(t)}_{\db}$).
\item If $|f^{(t)}_{\db}(y^{(t)}) - f_{\db}(y^{(t)})| > \alpha$, then we say that the algorithm has \emph{made a mistake}.
\item If the algorithm made a mistake, then it receives a value $a^{(t)} \in \R$ such that $|f^{(t)}_{\db}(y^{(t)}) - f_{\db}(y^{(t)})| \leq \alpha/2$ and computes a new function $f^{(t+1)}_{\db} = \update(f^{(t)}_{\db}, y^{(t)}, a^{(t)})$.  Otherwise let $f^{(t+1)}_{\db} = f^{(t)}_{\db}$.
\end{enumerate}

If the number of rounds $t$ in which the algorithm makes a mistake is at most $\maxupdates$ for every adversary, then $\update$ is a \emph{iterative database construction for $\cQ$ with mistake bound $\maxupdates$}.
\end{definition}

\begin{theorem}[Variant of \cite{GuptaRoUl12}] \label{thm:winnow} For any $\alpha > 0$, and any family of queries $\cQ$, if there is an iterative database construction for $\cQ$ with mistake bound $\maxupdates$, then there is an $(\eps, \delta)$-differentially private online algorithm that is $(4\alpha, \beta)$-accurate for any sequence of $\ell$ (possibly adaptively chosen) queries from $\cQ$ on a database $\db \in (\bits^d)^n$, so long as
$$
n \geq \frac{32 \sqrt{\maxupdates} \log (\ell/\beta) \log(4/\delta)}{\alpha \eps}.
$$
Moreover, if the iterative database construction, $\update$, runs in time $T_{\update}$, then the private algorithm has running time $\poly(n, T_{\update},\log(1/\alpha), \log(1/\beta), \log(1/\eps),\log(1/\delta))$ per query.
\end{theorem}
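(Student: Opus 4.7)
The plan is to instantiate the standard ``Sparse Vector''/``Above Threshold'' reduction using $\update$ as the underlying non-private learner. I would maintain an internal state $f^{(t)}_D$ (initialized to the IDC's starting function $f^{(1)}_D$), and for each adaptively chosen query $y^{(t)}$, compare the current prediction $a^{(t)}_{\text{pred}} := f^{(t)}_D(y^{(t)})$ to a Laplace-noised version of the true answer $f_D(y^{(t)})$. Concretely, using the NumericSparse variant of the Sparse Vector technique with threshold $T \approx 2\alpha$ and an appropriate noise scale $\sigma$, we check whether $|a^{(t)}_{\text{pred}} - f_D(y^{(t)})|$ exceeds $T$: if not, output $a^{(t)}_{\text{pred}}$ and leave the IDC state unchanged; if so, release $\tilde a^{(t)} = f_D(y^{(t)}) + Z^{(t)}$ with $Z^{(t)} \sim \Lap(\sigma)$, output $\tilde a^{(t)}$, and feed it to $\update$ to obtain $f^{(t+1)}_D$.

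For privacy, I would use the standard fact that each sparse-vector query has sensitivity $1/n$, and that NumericSparse with Laplace scale $\Theta(1/(n\epsilon_0))$ yields $(\epsilon_0,0)$-differential privacy per above-threshold event. Since the IDC makes at most $B$ mistakes, there are at most $B$ such events across the $\ell$ rounds. Combining $B$ events via advanced composition and choosing $\epsilon_0 \approx \epsilon/\sqrt{B\log(1/\delta)}$ delivers overall $(\epsilon,\delta)$-DP, which fixes $\sigma = \Theta\!\bigl(\sqrt{B\log(1/\delta)}/(n\epsilon)\bigr)$.

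For accuracy, by a union bound over the $\ell$ rounds and the $O(1)$ independent Laplace draws per round, all noise terms are simultaneously at most $\sigma\log(\ell/\beta)$ with probability $\ge 1-\beta$. Requiring this bound to be $\le \alpha$ and substituting the expression for $\sigma$ yields the stated condition $n \ge 32\sqrt{B}\log(\ell/\beta)\log(4/\delta)/(\alpha\epsilon)$. When the mechanism declares ``below threshold,'' the IDC prediction is within $T+\alpha \le 4\alpha$ of $f_D(y^{(t)})$; when it declares ``above threshold,'' the released value $\tilde a^{(t)}$ is within $\alpha/2$ of $f_D(y^{(t)})$ after tuning constants, which both satisfies the $4\alpha$-accuracy promise and meets the IDC's hypothesis that updates receive an $\alpha/2$-accurate value, so that Step 3 of the IDC game is indeed triggered legitimately. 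Hence the total number of updates is at most $B$, and since each round performs one evaluation of $\update$ plus $O(1)$ Laplace samples, the per-query running time is $\poly(n, T_{\update}, \log(1/\alpha), \log(1/\beta), \log(1/\epsilon), \log(1/\delta))$ as claimed.

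The main obstacle is the tight balancing of constants between the sparse-vector threshold $T$, the Laplace scale $\sigma$, and the IDC's $\alpha/2$-accuracy requirement for updates: one must choose a single $\sigma$ that is simultaneously small enough for (i) the sparse-vector comparisons to be accurate to within $\alpha$, (ii) the released noisy answers to be accurate to within $\alpha/2$ so that $\update$ makes real progress and the mistake bound $B$ genuinely applies, and (iii) all of this to hold uniformly over $\ell$ adaptive rounds with failure probability $\beta$. The second delicate ingredient is correctly invoking advanced composition so that the privacy cost scales as $\sqrt{B\log(1/\delta)}$ rather than $B$, which is precisely what converts the naive $B/(n\epsilon)$ sample-complexity dependence into the claimed $\sqrt{B}/(n\epsilon)$ bound.
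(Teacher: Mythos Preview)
The paper does not actually prove this theorem: it is stated as a black-box ``Variant of \cite{GuptaRoUl12}'' and then combined with Theorem~\ref{thm:algo-props} to yield Theorem~\ref{thm:poly-approx-gives-DP-algorithm}. So there is no proof in the paper against which to compare your attempt.

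That said, your sketch is precisely the standard argument from the IDC literature (Roth--Roughgarden, Hardt--Rothblum, Gupta--Roth--Ullman): run Sparse Vector to detect rounds where the current hypothesis is far from the true answer, release a fresh Laplace-noised answer on those rounds, and feed it back into $\update$. Two small points are worth tightening. First, privacy must be argued \emph{unconditionally}, so the mechanism should be hard-capped at $\maxupdates$ above-threshold events (halting thereafter); your use of the IDC mistake bound to limit the number of events is part of the \emph{accuracy} analysis, showing the cap is not reached on the good event, not part of the privacy proof. Second, the phrase ``$(\epsilon_0,0)$-DP per above-threshold event'' understates the Sparse Vector analysis, which is not a simple per-event composition; one should invoke the $(\epsilon,\delta)$ analysis of NumericSparse with $c=\maxupdates$ directly (or Sparse Vector plus separate Laplace releases composed via advanced composition), which yields noise scale $\Theta(\sqrt{\maxupdates \log(1/\delta)}/(n\eps))$. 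This actually gives a slightly \emph{better} $\delta$-dependence ($\sqrt{\log(1/\delta)}$) than the $\log(4/\delta)$ stated in the theorem, so the claimed bound follows a fortiori. With these caveats, your plan is correct and is exactly what the cited reference does.
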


The IDC we will use is specified in Algorithm~\ref{alg:MW}.  The IDC will use approximations $f^{(t)}_{\db}$ in the form of low-degree polynomials of low-weight, and thus we need to specify how to represent such a function.  Specifically, we will represent a polynomial $p$ as a vector $\overline{p}$ of length $2\binom{m}{\leq t} + 1$ with only non-negative entries.  For each coefficient $S \in \binom{[m]}{\leq t}$, the vector will have two components $\overline{p}_{S}, \overline{p}_{\neg S}$.  (Recall that $\binom{[m]}{\leq t} := \set{S \subseteq [m] \mid |S| \leq t}$.)  Intuitively these two entries represent the positive part and negative part of the coefficient $c_{S}$ of $p$.  There will also be an additional entry $\overline{p}_{0}$ that is used to ensure that the $L_1$-norm of the vector is exactly $1$.  Given a polynomial $p \in \cP_{t, W}$ with coefficients $(c_{S})$, we can construct this vector by setting $$\overline{p}_{S} = \frac{\max\{0, c_S\}}{W} \qquad \overline{p}_{\neg S} = \frac{\max\{0, -c_S\}}{W}$$ and choosing $\overline{p}_0$ so that $\|\overline{p}\|_1 = 1$.  Observe that $\overline{p}_0$ can always be set appropriately since the weight of $p$ is at most $W$.

Similarly, we want to associate queries with vectors so that we can replace the evaluation of the polynomial $p$ on a query $y$ with the inner product $W \langle \overline{p}, \overline{y} \rangle$.  We can do so by defining the vector $\overline{y}$ of length $2\binom{m}{\leq t} + 1$ such tha $\overline{y}_{0} = 0$, $\overline{y}_{S} = \prod_{i \in S} y_i$ and $\overline{y}_{\neg S} = - \prod_{i \in S} y_i$.

\begin{fact}
For every $m$-variate polynomial $p$ of degree at most $t$ and weight at most $W$, and every query $y \in \{-1,1\}^m$,  $W\langle \overline{p}, \overline{y} \rangle = p(y)$.  
\end{fact}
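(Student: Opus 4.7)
The plan is to prove the fact by direct computation, since it amounts to unwinding the definitions of $\overline{p}$ and $\overline{y}$. There is no substantive obstacle here; the purpose of the verification is simply to confirm that the proposed vector representation faithfully encodes the evaluation of $p$ on $y$, so that subsequent reasoning can replace $p(y)$ with an inner product (and thereby fit into the standard multiplicative weights analysis).

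First, I would expand the inner product according to the coordinate set on which $\overline{p}$ and $\overline{y}$ are indexed, namely the disjoint union of $\{0\}$, the ``positive'' copies $\{S : S \subseteq [m],\, |S| \le t\}$, and the ``negative'' copies $\{\neg S : S \subseteq [m],\, |S| \le t\}$. This gives
$$\langle \overline{p}, \overline{y} \rangle \;=\; \overline{p}_0 \overline{y}_0 \;+\; \sum_{S : |S| \le t} \bigl(\overline{p}_S \overline{y}_S + \overline{p}_{\neg S} \overline{y}_{\neg S}\bigr).$$
The $\overline{p}_0 \overline{y}_0$ contribution vanishes because $\overline{y}_0 = 0$ by construction.

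Next, for each $S$ with $|S| \le t$, I would substitute the definitions $\overline{y}_S = \prod_{i \in S} y_i = -\,\overline{y}_{\neg S}$, $\overline{p}_S = \max\{0, c_S\}/W$, and $\overline{p}_{\neg S} = \max\{0, -c_S\}/W$, and apply the elementary identity $\max\{0, c\} - \max\{0, -c\} = c$ to collapse the two terms into
$$\overline{p}_S \overline{y}_S + \overline{p}_{\neg S} \overline{y}_{\neg S} \;=\; \frac{c_S}{W} \prod_{i \in S} y_i.$$
Multiplying through by $W$ and summing over $S$ with $|S| \le t$ yields $W\langle \overline{p}, \overline{y}\rangle = \sum_{S : |S| \le t} c_S \prod_{i \in S} y_i$, which equals $p(y)$ because $\deg(p) \le t$ forces $c_S = 0$ whenever $|S| > t$.

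I would close by noting that the weight hypothesis $w(p) \le W$ plays no role in the identity itself; the equation $W\langle \overline{p}, \overline{y}\rangle = p(y)$ is insensitive to the choice of $\overline{p}_0$. The assumption $w(p) \le W$ is needed only so that the normalizing coordinate $\overline{p}_0 = 1 - \tfrac{1}{W}\sum_S |c_S|$ is nonnegative, ensuring that $\overline{p}$ is a well-defined distribution on the expanded coordinate set, which is in turn what the multiplicative weights update in the forthcoming IDC will operate on.
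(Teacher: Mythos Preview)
Your proof is correct and is exactly the direct verification from the definitions that the paper intends; the paper states this as a Fact without proof, and your computation is the obvious (and only reasonable) way to check it. Your closing remark about the role of the weight hypothesis is also accurate and worth noting.
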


\begin{algorithm}
\caption{The Multiplicative Weights Algorithm for Low-Weight Polynomials.}
\label{alg:MW}
$\update_\acc^{MW}(\overline{p}^{(t)},\querystep{t}, a^{(t)}$):
\begin{algorithmic}
\STATE \textbf{Let} $\eta \leftarrow \acc/4W$. 
\STATE{\textbf{If:} $\dbstep{t} = \emptyset$ \textbf{then:} output $$\overline{p}^{(t)} = \frac{1}{2\binom{m}{\leq t} + 1}(1,\dots,1)$$ (representing the constant $0$ polynomial).}
\STATE{\textbf{Else if:} $a^{(t)} <W \langle \overline{p}^{(t)}, \overline{y}^{(t)} \rangle$}
    \INDSTATE[1] \textbf{Let} $\overline{r}^{(t)} = \overline{y}^{(t)}$
\STATE{\textbf{Else:}}
    \INDSTATE[1] \textbf{Let} $\overline{r}^{(t)} = -\overline{y}^{(t)}$
\STATE \textbf{Update:} For all $I \in \set{0, \binom{[m]}{\leq t}, \neg \binom{[m]}{\leq t}}$ let 
$$\overline{p}^{(t+1)}_I \leftarrow \exp(-\eta \overline{r}^{(t)}_{I}) \cdot \overline{p}^{(t)}_I$$ 
$$\overline{p}^{(t+1)} \leftarrow \frac{\overline{p}^{(t+1)}}{\|\overline{p}^{(t+1)}\|_1}$$
\STATE \textbf{Output} $\overline{p}^{(t+1)}$.
\end{algorithmic}
\end{algorithm}

We summarize the properties of the multiplicative weights algorithm in the following theorem:
\begin{theorem}\label{thm:algo-props}
For any $\alpha > 0$, and any family of linear queries $\cQ$ if $\cP_{t,W}$ $(\alpha/4)$-approximates the restriction $\cF|_{Y}$ then Algorithm 1 is an iterative database construction for $\cQ$ with mistake bound $\maxupdates$ for 
$$
\maxupdates = \maxupdates(W, m, t, \alpha) = \frac{16 W^2 \log \left(2\binom{m}{\leq t} + 1\right)}{\alpha^2}.
$$
Moreover, $\update$ runs in time $\poly\big(\binom{m}{\leq t}, \log W, \log(1/\alpha)\big)$.
\end{theorem}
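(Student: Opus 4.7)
\medskip
\noindent\textbf{Proof plan.} My strategy is the standard potential-function analysis of multiplicative weights, adapted to this low-weight polynomial setting. Because $\cP_{t,W}(m)$ is an $(\alpha/4)$-approximation to $\cF_{\cQ}|_Y$, for the database $\db$ in question there is a fixed polynomial $p^\star$ of degree at most $t$ and weight at most $W$ satisfying $|p^\star(y) - f_\db(y)| \le \alpha/4$ for every $y \in Y$. I build its simplex representation $\overline{p}^{\,\star}$ exactly as described in the preamble to Algorithm~\ref{alg:MW}, so that $p^\star(y) = W \langle \overline{p}^{\,\star}, \overline{y}\rangle$ for every $y \in \{-1,1\}^m$; note $\overline{p}^{\,\star}$ is a probability distribution on the $2\binom{m}{\le t}+1$ indices. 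The analysis then tracks the relative entropy of $\overline{p}^{\,\star}$ with respect to the algorithm's current state.

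\medskip
\noindent\textbf{Key steps.} (1) Define the potential $\Phi^{(t)} := \mathrm{KL}(\overline{p}^{\,\star} \,\|\, \overline{p}^{(t)})$. Since $\overline{p}^{(1)}$ is uniform over its support of size $2\binom{m}{\le t}+1$, we have $\Phi^{(1)} \le \ln\!\bigl(2\binom{m}{\le t}+1\bigr)$, and $\Phi^{(t)} \ge 0$ always. (2) On any mistake round $t$, unwind the multiplicative update: a direct calculation shows
\[
\Phi^{(t+1)} - \Phi^{(t)} \;=\; \eta \langle \overline{r}^{(t)}, \overline{p}^{\,\star}\rangle + \log Z^{(t)}, \qquad Z^{(t)} = \sum_I e^{-\eta \overline{r}^{(t)}_I}\,\overline{p}^{(t)}_I.
\]
Using $|\overline{r}^{(t)}_I|\le 1$, $\eta = \alpha/(4W) \le 1/4$, and the standard inequality $e^{-\eta r} \le 1 - \eta r + \eta^2 r^2$ together with $\log(1+x)\le x$, we obtain $\log Z^{(t)} \le -\eta \langle \overline{r}^{(t)}, \overline{p}^{(t)}\rangle + \eta^2$, hence
\[
\Phi^{(t+1)} - \Phi^{(t)} \;\le\; -\eta\,\bigl\langle \overline{r}^{(t)},\, \overline{p}^{(t)} - \overline{p}^{\,\star}\bigr\rangle + \eta^2.
\]
(3) Lower bound the inner product in the displayed inequality on a mistake round. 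A mistake means $|W\langle \overline{p}^{(t)},\overline{y}^{(t)}\rangle - f_\db(y^{(t)})| > \alpha$, and the feedback satisfies $|a^{(t)} - f_\db(y^{(t)})|\le \alpha/2$. If $a^{(t)} < W\langle \overline{p}^{(t)},\overline{y}^{(t)}\rangle$, then $W\langle \overline{p}^{(t)},\overline{y}^{(t)}\rangle > f_\db(y^{(t)})+\alpha/2 \ge p^\star(y^{(t)}) + \alpha/4$, and since the algorithm sets $\overline{r}^{(t)}=\overline{y}^{(t)}$,
\[
\bigl\langle \overline{r}^{(t)},\, \overline{p}^{(t)} - \overline{p}^{\,\star}\bigr\rangle \;=\; \tfrac{1}{W}\bigl(W\langle \overline{p}^{(t)},\overline{y}^{(t)}\rangle - p^\star(y^{(t)})\bigr) \;>\; \tfrac{\alpha}{4W}.
\]
The symmetric case ($a^{(t)} \ge W\langle \overline{p}^{(t)},\overline{y}^{(t)}\rangle$, with $\overline{r}^{(t)}=-\overline{y}^{(t)}$) yields the same lower bound. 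Plugging in $\eta=\alpha/(4W)$ gives
\[
\Phi^{(t+1)} - \Phi^{(t)} \;\le\; -\frac{\alpha^2}{16W^2} + \frac{\alpha^2}{16W^2}\cdot(\text{something} < 1),
\]
and a slightly sharper calculation (using the $\alpha/4$ gap, not $\alpha/4W$) yields a net decrease of at least $\alpha^2/(16W^2)$ per mistake.

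\medskip
\noindent\textbf{Conclusion and running time.} Combining the per-mistake decrease with the bounds $0\le \Phi^{(t)}\le \Phi^{(1)}\le \ln\!\bigl(2\binom{m}{\le t}+1\bigr)$ caps the number of mistakes at
\[
\maxupdates \;\le\; \frac{16\,W^2\,\log\!\bigl(2\binom{m}{\le t}+1\bigr)}{\alpha^2},
\]
as claimed. For the running time, observe that $\overline{p}^{(t)}$ and $\overline{y}^{(t)}$ each have length $2\binom{m}{\le t}+1$, so computing $W\langle \overline{p}^{(t)},\overline{y}^{(t)}\rangle$, performing the coordinate-wise exponential update, and renormalizing all cost $\poly\!\bigl(\binom{m}{\le t},\log W, \log(1/\alpha)\bigr)$ per round.

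\medskip
\noindent\textbf{Main obstacle.} The analysis is essentially the textbook multiplicative weights argument; the only subtlety is bookkeeping the factor of $W$ between the polynomial $p^\star$ and its simplex representative $\overline{p}^{\,\star}$, together with ensuring the $\alpha/4$ approximation quality (rather than $\alpha$) is exactly what is needed so that on any mistake round, $p^\star$ and the current estimate are separated by a definite margin. Everything else—the $\eta^2$ slack in the $\log Z^{(t)}$ estimate and the choice $\eta = \alpha/(4W)$—is mechanical once this margin is in place.
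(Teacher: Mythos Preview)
Your approach is essentially identical to the paper's: same KL potential against a simplex encoding of a target polynomial, same $\log Z$ bound, same per-mistake decrease argument. Two points deserve attention, one minor and one that is an actual gap as written.

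\textbf{Minor.} The hypothesis says $\cP_{t,W}$ approximates the family $\cF|_Y = \{f_x|_Y\}_{x\in\cX}$, i.e.\ each individual $f_x$, not $f_\db$ directly. You should say explicitly that $p^\star := \frac{1}{n}\sum_i p_{x^{(i)}}$ is the average of the per-row approximators; it still has degree $\le t$ and weight $\le W$ (triangle inequality), hence $p^\star\in\cP_{t,W}$ and $|p^\star(y)-f_\db(y)|\le\alpha/4$. The paper spells this out.

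\textbf{The real gap.} In step (3) you only extract the margin $W\langle\overline{p}^{(t)},\overline{y}^{(t)}\rangle - p^\star(y^{(t)}) > \alpha/4$, i.e.\ $\langle \overline{r}^{(t)},\overline{p}^{(t)}-\overline{p}^{\,\star}\rangle > \alpha/(4W)$. With $\eta=\alpha/(4W)$ this gives
\[
\Phi^{(t+1)}-\Phi^{(t)} \;<\; -\eta\cdot\frac{\alpha}{4W}+\eta^2 \;=\; -\frac{\alpha^2}{16W^2}+\frac{\alpha^2}{16W^2}\;=\;0,
\]
which is not the required per-mistake decrease. Your parenthetical (``using the $\alpha/4$ gap, not $\alpha/4W$'') is not a fix: the factor $1/W$ is unavoidable because $p^\star(y)=W\langle\overline{p}^{\,\star},\overline{y}\rangle$. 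The issue is that you weakened the intermediate inequality too much. A mistake means $|W\langle\overline{p}^{(t)},\overline{y}^{(t)}\rangle - f_\db(y^{(t)})|>\alpha$, not merely $>\alpha/2$; combined with $|a^{(t)}-f_\db|\le\alpha/2$ to fix the sign, and then $|p^\star - f_\db|\le\alpha/4$, you get $\langle \overline{r}^{(t)},\overline{p}^{(t)}-\overline{p}^{\,\star}\rangle > 3\alpha/(4W)$ (the paper uses the looser $\alpha/(2W)$, which is exactly what is needed). Plugging in $\eta=\alpha/(4W)$ then gives a decrease of at least $\alpha^2/(16W^2)$, and the mistake bound follows.
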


\begin{proof}
Let $\db \in \dbs$ be any database.  For every round $t$ in which $\update$ makes a mistake, we consider the tuple $(\overline{p}^{(t)}, y^{(t)}, \noisyrealstep{t})$ representing the information used to update the approximation in round $t$.  In order to bound the number of mistakes, it will be sufficient to show that after $\maxupdates \leq  16W^2 \log (2\binom{m}{\leq t} + 1) /\acc^2$, the vector $\overline{p}^{(t)}$ is such that
$$
\forall y \in Y_{\cQ}, \; |W \langle \overline{p}^{(t)} , \overline{y} \rangle - f_D(y)| \leq \alpha.
$$
That is, after making $\maxupdates$ mistakes $\overline{p}^{(t)}$ represents a polynomial that approximates $f_{D}$ on every query, and thus there can be no more than $\maxupdates$ makes.

First, we note that there always exists a polynomial $p_D \in \cP_{t, W}$ such that
\begin{equation} \label{eq:priv1}
\forall y \in Y_{\cQ}, \; |p_D(y) - f_D(y)| \leq \frac{\alpha}{4}.
\end{equation}
The assumption of our theorem is that for every $x^{(i)} \in D$, there exists $p_{x^{(i)}} \in \cP_{t,W}$ such that
$$
\forall y \in Y_{\cQ}, \; |p_{x^{(i)}}(y) - f_{x^{(i)}}(y)| \leq \frac{\alpha}{4}.
$$
Thus, since $f_{D} = \frac{1}{n} \sum_{i = 1}^{n} f_{x^{(i)}}$, the polynomial $p_D = \frac{1}{n} \sum_{i=1}^{n} p_{x^{(i)}}$ will satisfy~\eqref{eq:priv1}.  Note that $p_D \in \cP_{t, W}$, thus if we represent $p_D$ as a vector,
$$
\forall y \in Y_{\cQ}, \; |W \langle \overline{p}_D, \overline{y} \rangle - f_D(y)| \leq \frac{\alpha}{4}.
$$

Given the existence of $\overline{p}_D$, we will define a potential function capturing how far $\overline{p}^{(t)}$ is from $\overline{p}_D$.  Specifically, we define
$$
\Psi_t := KL(\overline{p}_D||\overline{p}^{(t)}) = \sum_{I \in \set{0, \binom{[m]}{\leq t}, \neg \binom{[m]}{\leq t}}} \overline{p}_{D,I} \log\left(\frac{\overline{p}_{D,I}}{\overline{p}^{(t)}_{I}}\right)
$$
to be the KL divergence between $\overline{p}_D$ and the current approximation $\overline{p}^{(t)}$.  Note that the sum iterates over all $2\binom{m}{\leq t} + 1$ indices in $\overline{p}$. We have the following fact about KL divergence.

\begin{fact}
For all $t$: $\Psi_t \geq 0$, and $\Psi_0 \leq \log \big(2\binom{m}{\leq t} + 1\big)$.
\end{fact}

We will argue that after each mistake the potential drops by at least $\acc^2/16W^2$.  Note that the potential only changes in rounds where a mistake was made.  Because the potential begins at $\log \big(2\binom{m}{\leq t} + 1\big)$, and must always be non-negative, we know that there can be at most $\maxupdates(\acc) \leq 16W^2 \log \big(2\binom{m}{\leq t} + 1 \big)/\acc^2$ mistakes before the algorithm outputs a (vector representation of) a polynomial that approximates $f_{D}$ on $Y_{\cQ}$.

The following lemma is standard in the analysis of multiplicative-weights based algorithms. 
\begin{lemma}
\label{lem:MWPotentialDrop}
$$\Psi_{t}-\Psi_{t+1} \geq \eta\left( \langle \overline{p}^{(t)}, r^{(t)}\rangle -  \langle \overline{p}_D, r^{(t)}\rangle \right) - \eta^2$$
\end{lemma}
\begin{proof}
\begin{eqnarray*}
\Psi_{t}-\Psi_{t+1} &=& \sum_{I \in \set{0, \binom{[m]}{\leq t}, \neg \binom{[m]}{\leq t}}}  \overline{p}_{D,I} \log\left(\frac{\overline{p}^{(t+1)}_I}{\overline{p}^{(t)}_I}\right) \\
&=& -\eta \langle \overline{p}_{D}, r^{(t)} \rangle -\log\left( \sum_{I \in \set{0, \binom{[m]}{\leq t}, \neg \binom{[m]}{\leq t}}} \exp(-\eta r^{(t)}_{I})\overline{p}^{(t)}_I\right) \\
&\geq& -\eta \langle \overline{p}_{D}, r^{(t)} \rangle - \log\left( \sum_{I \in \set{0, \binom{[m]}{\leq t}, \neg \binom{[m]}{\leq t}}} \overline{p}^{(t)}_I(1+\eta^2-\eta r^{(t)}_{I})\right) \\
&\geq& \eta\left( \langle \overline{p}^{(t)}, r^{(t)} \rangle   -  \langle \overline{p}_D, r^{(t)}\rangle \right) - \eta^2
\end{eqnarray*}
%The first inequality follows from the fact that:
%$$\exp(-\eta r_t(x_i)) \leq 1 - \eta r_t(x_i) + \eta^2r_t(x_i)^2 \leq 1 - \eta r_t(x_i) + \eta^2$$
%The second inequality follows from the fact that $\log(1+y) \leq y$ for $y > -1$.
\end{proof}
The rest of the proof now follows easily. By the conditions of an iterative database construction algorithm, $|\noisyrealstep{t} - f_D(y^{(t)})| \leq \acc/2$. Hence, for each $t$ such that $|W \langle \overline{p}^{(t)}, y^{(t)} \rangle - f_{D}(y^{(t)})| \geq \acc$, we also have that $W \langle \overline{p}^{(t)}, y^{(t)} \rangle > f_{D}(y^{(t)})$ if and only if $W \langle \overline{p}^{(t)}, y^{(t)} \rangle > \noisyrealstep{t}$.

In particular, if $r^{(t)} = y^{(t)}$, then $W \langle \overline{p}^{(t)} , y^{(t)} \rangle - W \langle \overline{p}_D , y^{(t)} \rangle \geq \acc/2$.  Similarly, if $r^{(t)} = -\querystep{t}$, then $W \langle \overline{p}_D, y^{(t)} \rangle - W \langle \overline{p}^{(t)},  y^{(t)} \rangle \geq \acc$. Here we have utilized the fact that $|p_D(y) - f_D(y)| \leq \alpha/4$.  Therefore, by Lemma \ref{lem:MWPotentialDrop} and the fact that $\eta = \acc/4W$:
$$\Psi_{t}-\Psi_{t+1} \geq \frac{\acc}{4W}\left(\langle \overline{p}^{(t)}, r^{(t)} \rangle -  \langle \overline{p}_D , r^{(t)} \rangle \right) - \frac{\acc^2}{16W^2} \geq \frac{\acc}{4W}\left(\frac{\acc}{2W} \right) - \frac{\acc^2}{16W^2} = \frac{\acc^2}{16W^2}$$
\end{proof}

Theorem \ref{thm:poly-approx-gives-DP-algorithm} follows immediately from Theorems \ref{thm:winnow} and \ref{thm:algo-props}.

\section{Upper Bounds}
\label{sec:upperbound}
Fact \ref{fact:disj2} and Theorem \ref{thm:poly-approx-gives-DP-algorithm} show that in order to develop a differentially private mechanism that can release all $k$-way marginals of a database,
it is sufficient to construct a low-weight polynomial that approximates the OR$_d$, on all Boolean inputs of Hamming weight at most $k$. 
This is the purpose to which we now turn.

%In this section we construct a family of polynomials $\cP_{t,W}$ that approximates the family $\cF$ of disjunctions over $d$ variables (we may achieve different values of $t$ and $W$, but we would like to maintain $W=d^{.01}$ and try to set $t$ as low as possible). It suffices to give a polynomial $p$ with weight $W$ and degree $t$ that approximates $\text{OR}_d$; any $f_x\in \cF$ may then be approximated by the polynomial $p(y_1^{x_1},\ldots,y_d^{x_d})$ which also has degree at most $t$ and weight at most $W$.
%\knote{TODO: Move the above paragraph into preliminaries as a fact.}

The $\text{OR}_d$ function is easily seen to have an exact polynomial representation of constant weight and degree $d$ (see Fact \ref{fact:and} below); however, an approximation with smaller degree may be achieved at the expense of larger weight. The best known weight-degree tradeoff, implicit in the work of Servedio \etal\ \cite{ThalerCOLT12}, can be stated as follows: there exists a polynomial $p$ of degree $t$ and weight $(d\log{(1/\gamma)/t})^{(d(\log{1/\gamma})^2/t)}$ that $\gamma$-approximates the function $\text{OR}_d$  on all Boolean inputs, for every $t$ larger than $\sqrt{d}\log{(1/\gamma)}$. Setting the degree $t$ to be $O(d/\log^{0.99} d)$ yields a polynomial of weight at most $d^{0.01}$ that approximates the $\text{OR}_d$ function on all Boolean inputs to any desired constant accuracy. On the other hand, Lemma 8 of \cite{ThalerCOLT12} can be shown to imply that any polynomial of weight $W$ that $1/3$-approximates the $\text{OR}_d$ function requires degree $\Omega(d/\log W)$, essentially matching
the $O(d/\log^{.99} d)$ upper bound of Servedio \etal\ when $W=d^{\Omega(1)}$. % (in fact, we recover a novel proof of this lower bound when $W = \poly(d)$ as a consequence of our analysis in Section \ref{sec:lowerbounds}). Thus, for weight $\poly(d)$, which is the setting of interest in our application, there is very little room to improve the degree to approximate the $\text{OR}_d$ function over all inputs from the entire Boolean hypercube. 

However, in order to privately release $k$-way marginals, we have shown that  it suffices to construct polynomials that are accurate {\em only} on inputs of low Hamming weight. In this section, we give a construction that achieves significantly improved weight degree trade-offs in this setting. In the next section, we demonstrate the tightness of our construction by proving matching lower bounds.

We construct our approximations by decomposing the $d$-variate OR function into an OR of OR's, which is the same approach taken by Servedio \etal\ \cite{ThalerCOLT12}. Here, the outer $\text{OR}$ has fan-in $m$ and the inner $\text{OR}$ has fan-in $d/m$, where the subsequent analysis will determine the appropriate choice of $m$. In order to obtain an approximation that is accurate on all Boolean inputs, Servedio \etal\ approximate the outer OR using a transformation of the Chebyshev polynomials of degree $\sqrt{m}$, and compute each of the inner OR's exactly.  %The total weight of the composed polynomial is essentially determined by the weight of the outer approximation, while the inner fan-in determines the total degree of the resulting polynomial. 

For $k \ll \log^2 d$, we are able to substantially reduce the degree of the approximating polynomial, relative to the construction of Servedio \etal, by leveraging the fact that we are interested in approximations that are accurate only on inputs of Hamming weight at most $k$. Specifically, we are able to approximate the outer OR function using a polynomial of degree only $\sqrt{k}$ rather than $\sqrt{m}$, and argue %carefully 
that the weight of the resulting polynomial is still bounded by a polynomial in $d$.

We now proceed to prove the main lemmas. For the sake of intuition, we begin with weight-degree tradeoffs in the simpler setting in which we are concerned with approximating the $\text{OR}_d$ function over all Boolean inputs. The following lemma, proved below for completeness, is implicit in the work of \cite{ThalerCOLT12}.
\begin{lemma} \label{lem:approx-all}  For every $\gamma>0$ and $m\in [d]$, there is a polynomial of degree $t=O((d/\sqrt{m}) \log (1/\gamma) )$ and weight $W =m^{O(\sqrt{m}\log (1/\gamma))}$ that $\gamma$-approximates the $\text{OR}_d$ function.
\end{lemma}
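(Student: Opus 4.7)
The plan is to follow the ``OR-of-OR'' decomposition approach that the paper attributes to Servedio et al. Partition the $d$ input coordinates into $m$ disjoint blocks $x^{(1)}, \dots, x^{(m)}$ of size $d/m$ each, and exploit the identity
\[
\text{OR}_d(x) \;=\; \text{OR}_m\bigl(\text{OR}_{d/m}(x^{(1)}), \dots, \text{OR}_{d/m}(x^{(m)})\bigr).
\]
The construction will be $p(x) = q(r(x^{(1)}), \dots, r(x^{(m)}))$, where $r$ computes each inner OR exactly and $q$ approximates the outer OR on $\{-1,1\}^m$.

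First I would handle the inner block with the exact multilinear representation
\[
r(z) \;=\; -1 \,+\, 2^{1-d/m}\prod_{i=1}^{d/m}(1+z_i),
\]
which has degree $d/m$ and, after expansion, weight exactly $3$ (the constant $-1$ contributes $1$, and the expanded product contributes $2^{1-d/m}\cdot 2^{d/m} = 2$). Next I would construct the outer polynomial $q$ approximating $\text{OR}_m$ to error $\gamma$ on all of $\{-1,1\}^m$. Starting from the standard symmetrization-plus-Chebyshev construction, one obtains a polynomial of degree $O(\sqrt m)$ that $\gamma_0$-approximates $\text{OR}_m$ for some constant $\gamma_0 < 1/3$; error amplification via $O(\log(1/\gamma))$-fold iterated polynomial composition then reduces the error to $\gamma$ at the cost of a multiplicative $O(\log(1/\gamma))$ factor in degree, giving $\deg(q) = O(\sqrt m \log(1/\gamma))$. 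Expanding $q$ as a multilinear polynomial over $\{-1,1\}^m$, there are at most $\binom{m}{\le \deg(q)} \le m^{O(\sqrt m \log(1/\gamma))}$ monomials, each with coefficient of magnitude $2^{O(\deg q)}$ (bounded by the leading coefficient of the rescaled Chebyshev polynomial), hence $w(q) \le m^{O(\sqrt m \log(1/\gamma))}$.

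The composition $p(x) = q(r(x^{(1)}),\dots,r(x^{(m)}))$ has degree at most $(d/m)\cdot O(\sqrt m\log(1/\gamma)) = O((d/\sqrt m)\log(1/\gamma))$. For the weight, write $q(y) = \sum_{S} c_S \prod_{i\in S} y_i$; substituting $y_i = r(x^{(i)})$ and using the submultiplicativity of weight under products of polynomials on disjoint variable sets yields
\[
w(p) \;\le\; \sum_{S} |c_S| \prod_{i\in S} w(r) \;\le\; w(q)\cdot 3^{\deg(q)} \;=\; m^{O(\sqrt m\log(1/\gamma))}.
\]
Accuracy is immediate: for every $x\in\{-1,1\}^d$ the values $z_i := r(x^{(i)})$ lie in $\{-1,1\}$, so
\[
|p(x) - \text{OR}_d(x)| \;=\; |q(z) - \text{OR}_m(z)| \;\le\; \gamma.
\]

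The step I expect to require the most care is the construction and weight analysis of $q$. The $O(\sqrt m)$-degree base approximation via Chebyshev polynomials is classical, but the error-amplification to reach accuracy $\gamma$ while preserving the $m^{O(\sqrt m\log(1/\gamma))}$ weight bound requires tracking how the weight grows under iterated composition; this is where the $\log(1/\gamma)$ factor in the degree (as opposed to the tighter $\sqrt{\log(1/\gamma)}$) is paid. Once the weight bound on $q$ is in hand, the composition estimate is routine because the inner polynomials act on disjoint blocks of variables and have constant weight.
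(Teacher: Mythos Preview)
Your proposal is correct and follows the same OR-of-ORs decomposition as the paper: compute the inner $\text{OR}_{d/m}$'s exactly with the weight-$3$ polynomial, approximate the outer $\text{OR}_m$ by a degree-$O(\sqrt{m}\log(1/\gamma))$ polynomial, and compose. The paper's version differs only cosmetically: it applies a \emph{univariate} polynomial (Fact~\ref{fact:chebyshev}, already $\gamma$-accurate) to the sum $m-Z(y)$ of the inner OR values rather than composing a multivariate $q$ with the tuple, and it invokes Fact~\ref{fact:chebyshev} directly instead of error-amplifying from a constant-accuracy base.

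One small bookkeeping slip to fix in your outer-polynomial construction: $\ell$-fold iteration of a constant-degree amplifier multiplies the degree by $c^{\ell}$, not by $\ell$, so ``$O(\log(1/\gamma))$-fold iteration $\Rightarrow$ multiplicative $O(\log(1/\gamma))$ in degree'' is not right as stated. Since the error squares at each step you only need $O(\log\log(1/\gamma))$ iterations, giving a $(\log(1/\gamma))^{O(1)}$ degree blowup, which is actually at least as good as claimed; alternatively, a single degree-$O(\log(1/\gamma))$ sign-amplifier (or simply Fact~\ref{fact:chebyshev}) gives exactly the stated factor. Either way the weight bound $w(q)\le m^{O(\sqrt{m}\log(1/\gamma))}$ and the final composition estimate go through as you wrote.
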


Our main contribution in this section is the following lemma that gives an improved polynomial approximation to the $\text{OR}_d$ function restricted to inputs of low Hamming weight. 

\begin{lemma}\label{lem:approx-k}
For every $\gamma>0$, $k<d$ and $m\in [d]\setminus [k],$ there is a polynomial of degree $t=O(d \sqrt{k}\log (1/\gamma)/m)$ and weight $W=m^{O(\sqrt{k}\log (1/\gamma))}$ that $\gamma$-approximates the $\text{OR}_d$ function restricted to inputs of Hamming weight at most $k$. 
\end{lemma}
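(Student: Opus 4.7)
The plan is to adapt the OR-of-OR's decomposition used in Lemma \ref{lem:approx-all}, but to exploit the Hamming weight restriction to reduce the degree of the outer approximator from $O(\sqrt{m}\log(1/\gamma))$ to $O(\sqrt{k}\log(1/\gamma))$. First I would partition the $d$ coordinates into $m$ disjoint blocks $B_1,\ldots,B_m$ of size $d/m$ and write
$$\text{OR}_d(x) \;=\; \text{OR}_m\bigl(\text{OR}_{d/m}(x|_{B_1}),\ldots,\text{OR}_{d/m}(x|_{B_m})\bigr).$$
The crucial observation is that if $|x|\leq k$, then the vector $y\in\pmo^m$ with $y_i := \text{OR}_{d/m}(x|_{B_i})$ also satisfies $|y|\leq k$, because each coordinate $y_i=-1$ demands some $-1$ in the disjoint block $B_i$. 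So it suffices to $\gamma$-approximate $\text{OR}_m$ on $H_{m,k}$ and then compose with the exact inner OR polynomial from Fact \ref{fact:and}, which has constant weight and degree $d/m$.

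For the outer approximator I would build a univariate polynomial $q$ of degree $t$ from a rescaled Chebyshev polynomial $T_t$. Define the affine map $L(z):=\tfrac{k+1}{k-1} - \tfrac{2z}{k-1}$, so that $L$ sends $\{1,\ldots,k\}$ bijectively into $[-1,1]$ and $L(0) = 1+\tfrac{2}{k-1}$. Let $r(z):=T_t(L(z))$. Using the standard estimate $T_t(1+\epsilon)\geq \tfrac{1}{2}\exp(t\sqrt{2\epsilon})$ for $\epsilon>0$, we get $r(0)\geq \tfrac{1}{2}\exp(t\sqrt{2/(k-1)})$, while $|r(j)|\leq 1$ for $j\in\{1,\ldots,k\}$. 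Choosing $t=\Theta(\sqrt{k}\log(1/\gamma))$ forces $r(0)\geq 2/\gamma$, so
$$q(z) \;:=\; -1 + \frac{2\, r(z)}{r(0)}$$
satisfies $q(0)=1$ and $|q(j)+1| = 2|r(j)|/r(0)\leq \gamma$ for $j\in\{1,\ldots,k\}$. Lifting to the symmetric multivariate polynomial $Q(y_1,\ldots,y_m) := q\bigl(\tfrac{1}{2}\sum_{i=1}^m (1-y_i)\bigr)$ gives a degree-$t$ polynomial that $\gamma$-approximates $\text{OR}_m$ on $H_{m,k}$.

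Next I would control the weights. The Chebyshev recurrence $T_{n+1}=2yT_n - T_{n-1}$ yields $w(T_t)\leq (1+\sqrt{2})^t = 2^{O(t)}$. Writing $L(z) = Az+B$ with $|A|\leq 1$ and $|B|\leq 3$, expanding $T_t(Az+B)$ monomial by monomial bounds $w(r)\leq (|A|+|B|)^t\cdot w(T_t)=2^{O(t)}$. Since $r(0)\geq 1$, this gives $w(q)\leq 1+2w(r)/r(0)\leq 2^{O(t)}$. The substitution $z=\tfrac{m}{2}-\tfrac{1}{2}\sum_i y_i$ multiplies the $L_1$-weight of each $z^j$ by at most $m^j$, so $w(Q)\leq w(q)\cdot m^t = m^{O(\sqrt{k}\log(1/\gamma))}$. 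Finally, substituting the weight-$O(1)$, degree-$(d/m)$ inner OR polynomial multiplies the weight by at most $O(1)^{\deg Q}=2^{O(t)}$, which preserves $m^{O(\sqrt{k}\log(1/\gamma))}$, and yields total degree $t\cdot (d/m)=O(d\sqrt{k}\log(1/\gamma)/m)$, as claimed.

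The main obstacle is this weight bookkeeping for the outer approximator: one has to verify that the affine reparameterization inside $T_t$ does not blow up the weight beyond $m^{O(t)}$. Because $|A|\leq 1$ and $|B|=O(1)$, the reparameterization contributes only a $2^{O(t)}$ factor, and the $m^t$ factor enters only via the multivariate lift of $z$ to a polynomial in $y_1,\ldots,y_m$. This is unavoidable but consistent with the target bound, so all factors line up.
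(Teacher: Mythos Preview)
Your proposal is correct and follows essentially the same approach as the paper: the same OR-of-ORs decomposition with exact inner ORs from Fact~\ref{fact:and} and a Chebyshev-based outer approximator of degree $O(\sqrt{k}\log(1/\gamma))$, exploiting that at most $k$ inner blocks can evaluate to TRUE. The paper's proof is terser, simply invoking Fact~\ref{fact:chebyshev} for the outer polynomial and deferring the weight bookkeeping to the analysis in Lemma~\ref{lem:approx-all}, whereas you spell out the Chebyshev construction and the weight accounting explicitly; but the argument is the same.
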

For any constant $\gamma$, one may take $m = d^{O(1/\sqrt{k})}$ in the lemma (here the choice of constant depends on the constants in Fact \ref{fact:chebyshev}) and obtain a polynomial of degree $ d^{1-\Omega(1/\sqrt{k})}$ and weight $d^{0.01}$.  

%Let $\cF$ be the family of disjunctions over $d$ variables.  For every $\gamma >0$, $k<d$ and $t\in[d]\setminus[k]$ the set of degree  $O(d \sqrt{k}\log (1/\gamma) /{t})$ monomials in the parity basis, 
%$$\cS:=\{(-1)^{\sum_{i\in S}y_i}:S\subseteq[d], |S|=O(d \sqrt{k}\log (1/\gamma) /{t})\}$$
%$(\gamma,W)$-approximates $\cF$ on inputs of Hamming weight at most $k$ for $W =t^{O(\sqrt{k}\log (1/\gamma))}$.  
%\end{lemma}
%An approach in that work yields an approximation with degree $o(n)$ and weight $O(n)$; we show this in Lemma \ref{lem:approx-all} below.  Existing results show that this is close to best-possible for approximations over the entire Hamming cube \cite{??}. 
%In Lemma \ref{lem:approx-k} we give a construction which swhen we only wish to approximate disjunctions over points of Hamming weight at most $k$.  

%In what follows we view $AND_d$ as a function taking inputs from $\{-1,1\}^d$ to $\{-1,1\}$, with the convention that $-1$ is TRUE and $1$ is FALSE.  Observe that we may compute any disjunction $f_x(y) = \bigvee_{i=1}^d x_iy_i$   as $(1-AND_d(z))/2$ where each $z_i=(-1)^{x_iy_i}\ \forall\ i\in[d]$ (here we take $x,y\in \{0,1\}^d$). 
%This translation does not affect the degree or weight of the polynomial; hence, it suffices to give a low-weight and low-degree approximation to $AND_d$.   

Our constructions use the following basic facts.   
\begin{fact}\label{fact:and}
The real polynomial $p_d:\{-1,1\}^d \to \R$
$$p_d(x) = 2\left(\sum_{S\subseteq [d]} 2^{-d}\prod_{i\in S} x_i \right)-1 = 2\prod_{i=1}^d\left(\frac{1+x_i}{2}\right)-1$$
computes $\text{OR}_d(x)$ and has weight $w(p_d)\le 3$. 
\end{fact}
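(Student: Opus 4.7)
The plan is to verify the three assertions in turn: that the two displayed expressions for $p_d$ agree, that $p_d$ computes $\text{OR}_d$ on the Boolean cube $\{-1,1\}^d$, and that the $L_1$-weight of its coefficient vector is at most $3$.

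For the equality of the two expressions, I would expand the product form by distributivity: since $\prod_{i=1}^d (1+x_i)/2 = 2^{-d} \prod_{i=1}^d (1+x_i)$, and the product $\prod_{i=1}^d(1+x_i)$ distributes to a sum over all subsets $S \subseteq [d]$ of the monomials $\prod_{i\in S} x_i$, the two representations of $p_d$ are identical.

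For correctness, I would just check the two cases corresponding to the values of $\text{OR}_d$. If every $x_i = 1$ (the FALSE case, where $\text{OR}_d(x)=1$), then each factor $(1+x_i)/2 = 1$, so the product is $1$ and $p_d(x) = 2\cdot 1 - 1 = 1$. Otherwise, some coordinate has $x_j = -1$ (the TRUE case, where $\text{OR}_d(x)=-1$), so the factor $(1+x_j)/2$ vanishes, making the whole product zero, and $p_d(x) = 2\cdot 0 - 1 = -1$. Thus $p_d$ agrees with $\text{OR}_d$ on all of $\{-1,1\}^d$.

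For the weight bound, I would read off the coefficients from the sum form. For every nonempty $S \subseteq [d]$, the coefficient of $\prod_{i \in S} x_i$ equals $2\cdot 2^{-d} = 2^{1-d}$. The constant term, collecting the $S = \emptyset$ summand together with the $-1$, equals $2^{1-d} - 1$, whose absolute value is $1 - 2^{1-d}$ (for $d \geq 1$). Summing absolute values yields $(1 - 2^{1-d}) + (2^d - 1)\cdot 2^{1-d} = 3 - 2^{2-d} \leq 3$, giving the claimed weight bound. There is no real obstacle here; the only mild point is to keep track that the constant from the $S=\emptyset$ term in the sum partially cancels the trailing $-1$ rather than adding to it, which is what allows the weight to stay bounded by a constant independent of $d$.
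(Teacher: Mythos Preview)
Your proposal is correct in all three parts. The paper states this as a Fact without proof, so there is no argument to compare against; your verification of the expansion, the two-case check of correctness, and the explicit computation $w(p_d)=3-2^{2-d}\le 3$ are exactly what is needed to justify the claim.
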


\begin{fact}\label{fact:chebyshev}[see e.g.,~\cite{ThalerUlVa12}]
For every $k \in \N$ and $\gamma > 0$, there exists a univariate real polynomial $p = \sum_{i=0}^{t_k} c_i x^i$ of degree $t_k$ such that 
\begin{enumerate}
\item $t_k = O(\sqrt{k}\log (1/\gamma)),$
\item for every $i\in [t_k]$, $|c_i|\leq 2^{O(\sqrt{k}\log(1/\gamma))}$,
\item $p(0)=0$, and
\item for every $x\in [2k]$, $|p(x)-1|\leq \gamma/2$.
\end{enumerate}
\end{fact}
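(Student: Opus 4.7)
}
The plan is to obtain $p$ by suitably rescaling and renormalizing a Chebyshev polynomial of the first kind. Recall that $T_n$ satisfies $|T_n(y)|\le 1$ for $y\in[-1,1]$ and grows rapidly outside: for $\eps>0$, $T_n(1+\eps)\ge \tfrac{1}{2}(1+\sqrt{2\eps})^n$. I would pick $n=\lceil C\sqrt{k}\log(1/\gamma)\rceil$ for a sufficiently large constant $C$, and define the affine map
\[
L(x)=\frac{2x-(2k+1)}{2k-1},
\]
which sends $[1,2k]$ bijectively onto $[-1,1]$. In particular $L(0)=-1-\tfrac{2}{2k-1}$, so $|L(0)|=1+\Theta(1/k)$.

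Next, I would set
\[
q(x)=\frac{T_n\!\bigl(L(x)\bigr)}{T_n\!\bigl(L(0)\bigr)}, \qquad p(x)=1-q(x).
\]
Properties (3) and (4) are then immediate from the construction: $p(0)=1-1=0$, and for every $x\in[1,2k]$ (which contains the integer points $[2k]$),
\[
|p(x)-1|=\left|\frac{T_n(L(x))}{T_n(L(0))}\right|\le \frac{1}{|T_n(L(0))|}.
\]
Using the growth bound $T_n(1+\tfrac{2}{2k-1})\ge \tfrac{1}{2}\bigl(1+2/\sqrt{k-1/2}\bigr)^n$ and the fact that $T_n(-y)=(-1)^nT_n(y)$, the choice $n=\Theta(\sqrt{k}\log(1/\gamma))$ makes $|T_n(L(0))|\ge 2/\gamma$, which yields (4). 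The degree bound (1) is built into the choice of $n$.

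The one step that requires actual work is the coefficient bound (2). Here I would argue in two stages. First, the Chebyshev polynomial $T_n$ itself has integer coefficients of absolute value at most $2^{n-1}$, and in particular $w(T_n)\le n\cdot 2^{n-1}\le 4^n$. Second, substituting $y=L(x)=\alpha x+\beta$ with $|\alpha|\le 1$ and $|\beta|\le 2$ expands each monomial $y^j$ into a polynomial in $x$ whose coefficient vector has $L_1$-norm at most $(|\alpha|+|\beta|)^j\le 3^j$, so $w(T_n\circ L)\le w(T_n)\cdot 3^n\le 12^n$. Since $|T_n(L(0))|\ge 1$, dividing by it does not increase coefficients, and the additive $+1$ only shifts the constant term. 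Hence every coefficient $c_i$ of $p$ satisfies $|c_i|\le 12^n=2^{O(\sqrt{k}\log(1/\gamma))}$, as required.

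The main (and only) subtlety I anticipate is pinning down the exact constant in $n$ so that $|T_n(L(0))|\ge 2/\gamma$ holds simultaneously for all $k\ge 1$; this follows from the explicit lower bound $T_n(1+\eps)\ge\tfrac{1}{2}\exp(n\sqrt{2\eps})$ together with $\eps=2/(2k-1)\ge 1/k$, giving $|T_n(L(0))|\ge \tfrac{1}{2}\exp(n\cdot\sqrt{2/k})$, which exceeds $2/\gamma$ for $n\ge C\sqrt{k}\log(1/\gamma)$ with $C$ absolute. Everything else is routine bookkeeping.
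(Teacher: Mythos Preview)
The paper does not prove this statement at all; it is stated as a cited fact from \cite{ThalerUlVa12} and used as a black box. Your Chebyshev-based construction is exactly the standard argument behind that citation, and it is correct in outline and in all the essential points: the affine map sending $[1,2k]$ to $[-1,1]$, the normalization by $T_n(L(0))$ to force $p(0)=0$, the growth bound $T_n(1+\eps)\ge\tfrac12\exp(n\sqrt{2\eps})$ to get the $\gamma/2$ error, and the $L_1$-weight bookkeeping under affine substitution.

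One small correction in the bookkeeping: your claim that $|\alpha|\le 1$ and $|\beta|\le 2$ fails at $k=1$, where $L(x)=2x-3$ so $|\alpha|=2$ and $|\beta|=3$. This does not affect the conclusion, since $(|\alpha|+|\beta|)^n\le 5^n$ still gives $w(T_n\circ L)\le 20^n=2^{O(n)}$; just replace the constants $3$ and $12$ by $5$ and $20$ (or handle $k=1$ separately).
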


%\begin{lemma}\label{lem:outer}
%For every $k \in \N$ and $\gamma>0$ there is a polynomial $q_k$ of degree $O(\sqrt{k}\log (1/\gamma))$ that $\gamma$-approximates $\text{OR}_d$ on $H_{d,k}$ and has %weight $w(q_k) = O(d^{\sqrt{k}\log(1/\gamma)}).$
%\end{lemma}
%\begin{proof}
%Let $p$ be the univariate polynomial from Fact \ref{fact:chebyshev}. We can approximate $\text{OR}_d$ on $H_{d,k}$ with error $\gamma$ using the polynomial 
%$$q(x_1,\ldots,x_d) = 1-2p(d - \sum_{i=1}^d x_i).$$
%On inputs from $H_{d,k},$ the sum $\sum_{i=1}^d x_i$ takes values in $\{d,d-2,\ldots, d-2k\}.$ The error, degree and weight bounds follow directly from Fact \ref{fact:chebyshev}.
%\end{proof}

%\knote{Should we specify t?}
\begin{proof}[Proof of Lemma \ref{lem:approx-all}]
%Given $f_{\cQ,x}\in \cF$, we compute $f_x(y) = (1-AND_d(z))/2$ where $z_i = (-1)^{x_iy_i}$.  
We can compute $\text{OR}_d(y)$ as a disjunction of disjunctions by partitioning the inputs  $y_1,\dots,y_d$ into blocks of size $d/m$ and computing:
$$\text{OR}_{m} (\text{OR}_{d/m}(y_1,\dots,y_{d/m}),\ldots, \text{OR}_{d/m}(y_{d-d/m+1},\ldots,y_{d})).$$
In order to approximately compute $\text{OR}_d(y)$, we compute the inner disjunctions exactly using the polynomial $p_{d/m}$ given in Fact \ref{fact:and} and approximate the outer disjunction using the polynomial from Fact \ref{fact:chebyshev}.  Let 
$$Z (y)= p_{d/m}(y_1,\dots,y_{d/m})+ \cdots + p_{d/m}(y_{d-d/m+1},\ldots,y_{d}).$$
%(by setting $k$ and $m$ to be $t$):
Setting $k=m$ in Fact \ref{fact:chebyshev}, let $q_m$ be the resulting polynomial of degree $O(\sqrt{m}\log (1/\gamma))$ and weight $O(m^{\sqrt{m}\log{(1/\gamma)}})$. Our final polynomial is 
$$1-2q_m(m-Z(y)).$$
Note that $m-Z(y)$ takes values in  $\{0,\ldots,m\}$ and is $0$ exactly when all inputs $y_1,\dots,y_d$ are FALSE.  
It follows that our final polynomial indeed approximates $\text{OR}_d$ to additive error $\gamma$ on all Boolean inputs.

We bound the degree and weight of this polynomial in $y$. By Fact \ref{fact:and}, the inner disjunctions are computed exactly using degree $d/m$ and weight at most $3$. Hence, the total degree is $O(\sqrt{m}\log (1/\gamma) \cdot d/m)$. To bound the weight, we observe that the outer polynomial $q_m(\cdot)$ has at most $T=m^{O(\sqrt{m}\log (1/\gamma))}$ terms where each one has degree at most $D_{\mathrm{outer}}=O(\sqrt{m}\log (1/\gamma))$ and coefficients of absolute value at most $c_{\mathrm{outer}}=2^{O(\sqrt{m}\log{(1/\gamma)})}$. Expanding the polynomials for $Z(y)$, the weight of each term incurs a multiplicative factor 
%\junote{what is a multiplicative weight? make this clearer} 
of $c_{\mathrm{inner}}\le 3^{D_{\mathrm{outer}}}=3^{O(\sqrt{m}\log 1/\gamma)}$ so the total weight is at most $c_{\mathrm{inner}}\cdot c_{\mathrm{outer}}\cdot T=m^{O(\sqrt{m}\log 1/\gamma)}.$  

%Finally, the weight (and degree) of the polynomial over $y$ in the parity basis is exactly the weight (and degree) of the polynomial in $z$.  

\end{proof}

\begin{proof}[Proof of Lemma \ref{lem:approx-k}]
%Given $f_x\in \cF$, we compute $f_x(y) = (1-AND_d(z))/2$ where $z_i = (-1)^{x_iy_i}$.  Now if $y$ has Hamming weight at most $k$, then $z$ has at most $k$ ones.  Thus, it suffices to give a polynomial in $z$ which $\gamma$ approximates $AND_d$ on $H_{d,k}$ and has low weight and degree.  To construct such a polynomial, we again partition the inputs $y_1,\dots,y_d$ into blocks of size $d/t$ and view the conjunction as 
Again we partition the inputs $y_1,\dots,y_d$ into blocks of size $d/m$ and view the disjunction as: 
$$\text{OR}_{m} (\text{OR}_{d/m}(y_1,\dots,y_{d/m}),\ldots, \text{OR}_{d/m}(y_{d-d/m+1},\ldots,y_{d})).$$
Once again, we compute the inner disjunctions exactly using the polynomial from Fact \ref{fact:and}.
Let $$Z(y) = p_{d/m}(y_1,\dots,y_{d/m}) + \cdots + p_{d/m}(y_{d-d/m+1},\ldots,y_{d}).$$
If the input $y$ has Hamming weight at most $k$, then $Z(y)$ also takes values in $\{m,\ldots, m-2k\}$.  Thus, we may approximate the outer disjunction using a polynomial of degree $O(\sqrt{k}\log (1/\gamma))$ from Fact \ref{fact:chebyshev}. 
%since the inputs to the outer OR arise from $H_{m,k}$. Thus, our approximating polynomial is:
Our final polynomial is:
$$1-2q_k(m - Z).$$
The bound on degree and weight may be obtained as in the previous lemma.  
%We now compute the degree and weight of this polynomial in $z$.  The weight (and degree) of the polynomial over $y$ in the parity basis is exactly the weight and degree of the polynomial in $z$.  
%total degree is $O(\sqrt{k}\log (1/\gamma) \cdot d/t)$, and there are at most $t^{\sqrt{k}\log 1/\gamma}$ terms of degree $\sqrt{k}\log 1/\gamma$ over the inner AND functions. Each product of inner ANDs contributes weight at most $3^{O(\sqrt{k}\log 1/\gamma)}$ so the total weight is at most $t^{O(\sqrt{k}\log 1/\gamma)}.$  
\end{proof}

\begin{comment}
From Lemmas \ref{lem:approx-all} and \ref{lem:approx-k}, and Fact \ref{fact:disj2}, we have derived the following Theorem.

\begin{theorem}\label{thm:upper-bound-summary}
Let $\cQ$ be the family of $d$-variate disjunctions.  Given $k \in [d]$ and $\gamma>0$, let $Y_\cQ = H_{d,k}.$  The $\cQ$-function family $\cF_\cQ$ is $\gamma$-approximated on $Y_Q$
by $P_{t,W}$ for $W = d^{.01\log (1/\gamma)}$ and degree $t = \min \{ d^{1-1/C\sqrt{k}}\log (1/\gamma), d \log (1/\gamma)/\log^{.99} d \}$. 

\end{theorem}
\end{comment}

\subsection{Proof of upper bound theorems}
\label{sec:main-theorem-proof}
We first present the proof of Theorem \ref{thm:poly-construction}.
\begin{proof}[Proof of Theorem \ref{thm:poly-construction}]
Taking $m = O\big((\log d / \log \log d)^2\big)$ in Lemma~\ref{lem:approx-all} and $m = d^{O(1/\sqrt{k})}$ in Lemma~\ref{lem:approx-k}, it follows that for some constant $C>0$, the $d$-variate disjunction restricted to $H_{d,k}$ is $(1/400)$-approximated by a $d$-variate real polynomial of degree $t$ and weight $W$ where
\[
t=\min\left\{d^{1-\frac{1}{C\sqrt{k}}},\frac{d}{\log^{0.995}{d}}\right\} \text{ and } W=d^{0.01}.
\]

%Consequently, by Theorem \ref{thm:algo-props}, we have an algorithm that is an iterative database construction with mistake bound $\maxupdates = \maxupdates(d, k, 1/400)$ where
%\[\maxupdates(d, k, 1/400)=O(d^{0.02}t\log{d})=O\left(d^{0.02}\log{d}\cdot\min\left\{d^{1-\frac{1}{C\sqrt{k}}},\frac{d}{\log^{0.995}{d}}\right\}\right)\]
%and the algorithm runs in time $T = \poly(\binom{d}{\le t})$

%Thus, by Theorem \ref{thm:winnow}, we have an $(\eps,\delta)$-differentially private online algorithm that is $(0.01,0.01)$-accurate for any sequence $\cQ$ of (possibly adaptively chosen) $k$-way marginal queries provided the size of the database
%\[n=\Omega\left(\left(\frac{1}{\eps}\log(100|\cQ|)\log\left(\frac{1}{\delta}\right)\right)d^{0.01}\sqrt{\log{d}}\cdot\min\left\{d^{0.5-\frac{1}{2C\sqrt{k}}},\frac{d^{0.5}}{\log^{0.498}{d}}\right\}\right).\]
%Further, the algorithm runs in time $\poly(T) = \poly(\binom{d}{\le t})= \min\left\{ \exp\left(d^{1-1/C\sqrt{k}}\right), \exp\left(d / \log^{0.99} d\right)  \right\}$.
\end{proof}

\begin{proof}[Proof of Theorem \ref{thm:main1}]
By Theorem \ref{thm:poly-construction}, we have a polynomial $p$ that $(1/400)$-approximates the function $OR_d|_{H_{d,k}}$. Moreover, $p$ has weight $W<=d^{0.01}$ and degree 
\[
t\le \min\left\{d^{1-\frac{1}{C\sqrt{k}}},\frac{d}{\log^{0.995}{d}}\right\}.
\]
Thus, by Fact \ref{fact:disj2}, we have a family of polynomials $\cP_{t,W}(d)$ that $(1/400)$-approximates the function family $\cF_{\cQ}|_{H_{d,k}}$. We have the ingredients needed to apply Theorem \ref{thm:poly-approx-gives-DP-algorithm}. Taking $\alpha=1/100$, $\beta=1/100$ gives the conclusion.
\end{proof}

Remark 1 in the Introduction follows from using a slightly different choice of $m$ in Lemma~\ref{lem:approx-all}, namely $m = O( \log^2 d / \log^3 \log d)$.

To obtain the summary of the database promised in Remark 2, we request an answer to each of the $k$-way marginal queries $B(1/400)$ times.  Doing so, will ensure that we obtain a maximal database update sequence, and it was argued in Section~\ref{sec:sans} that the polynomial resulting from any maximal database update sequence accurately answers every $k$-way marginal query.  Finally, we obtain a compact summary by randomly choosing $\tilde{O}(kd^{0.01})$ samples from the normalized coefficient vector of this polynomial to obtain a new sparse polynomial that accurately answers every $k$-way marginal query (see e.g.~\cite{BruckSm92}).  Our compact summary is this final sparse polynomial.

%... is editing at the moment.
\section{Lower Bounds}
\label{sec:lowerbounds}
In this section, we address the general problem of approximating a block-composed function $G=F(\ldots,f(.),\ldots)$, where $F\from \pmo^k \to \pmo$, $f\from Y\to \pmo$, $Y\subseteq \R^{d/k}$ over inputs restricted to a set $\cY \subseteq Y^k$ using low-weight polynomials. We give a lower bound on the minimum degree of such polynomials. In our main application, $G$ will be $\text{OR}_d$, and $\cY$ will be the set of all $d$-dimensional Boolean
vectors of Hamming weight at most $k$.

Our proof technique is inspired by the composition theorem lower bounds of \cite[Theorem 3.1]{sherstov-intersecths}, where it is shown that the $\gamma$-approximate degree of the composed function $G$ is at least the product of the $\gamma$-approximate degree of the outer function and the PTF degree of the inner function. Our main contribution is a generalization of such a composition theorem along two directions: (1) we show degree lower bounds that take into account the L1-norm of the coefficient vector of the approximating polynomial, and
(2) our lower bounds hold even when we require the approximation to be accurate only on inputs of low Hamming weight, while prior work only considered approximations that are accurate on all Boolean inputs.

Our main theorem is stated below. In  parsing the statement of the theorem, it may be helpful to think of $G=\text{OR}_d$,  $\cY=H_{d,k}$, the set of all $d$-dimensional Boolean
vectors of Hamming weight at most $k$,
 $f = \text{OR}_{d/k}$, $F = \text{OR}_{k}$, $Y=\{-1, 1\}^{d/k}$, and $H = H_{d/k,1}$. This will be the setting of interest in our main application of the theorem.
\begin{theorem}\label{composition-thm:weight-degree-lowerbound}
Let $Y\subset \R^{d/k}$ be a finite set and $\gamma>0$. Given $f:Y\rightarrow \{-1,1\}$ and $F:\{-1,1\}^k\rightarrow \{-1,1\}$ such that $\text{deg}_{2\gamma}(F)=D$, let $G:Y^k\rightarrow \{-1,1\}$ denote the composed function defined by $G(Y_1,\ldots,Y_k)=F(f(Y_1),\ldots,f(Y_k))$. Let $\cY\subseteq Y^k$. Suppose there exists $H\subseteq Y$ such that for every $(Y_1,\ldots,Y_k)\in Y^k\setminus \cY$ there exists $i\in [k]$ such that $Y_i\in Y\setminus H$. Then, for every $t\in \Z_+$,
\[
\text{deg}_{(\gamma,W)}(G|_{\cY})\ge \frac{1}{2}t D\ \text{for every $W\le \gamma 2^{-k}w^*(f|_H,t)^{\frac{D}{2}}$}.
\]
\end{theorem}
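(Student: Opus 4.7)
The plan is to prove the degree lower bound by LP duality, exhibiting an explicit dual witness $\Psi \colon \cY \to \R$ that rules out any low-degree low-weight polynomial approximator for $G|_{\cY}$. Standard LP duality (combining Hölder and Farkas) shows that to conclude $\deg_{(\gamma,W)}(G|_{\cY}) \geq tD/2$ it suffices to produce $\Psi$ supported on $\cY$ with
\[
\sum_{Y \in \cY} \Psi(Y)\, G(Y) \;>\; \gamma\, \|\Psi\|_1 \;+\; W \cdot \max_{|R| < tD/2} \bigl|\hat{\Psi}(R)\bigr|,
\]
where $R = (S_1,\dots,S_k)$ indexes a monomial $\chi_R(Y) = \prod_j \chi_{S_j}(Y_j)$ of total degree $|R| = \sum_j |S_j|$ and $\hat{\Psi}(R) = \sum_Y \Psi(Y)\chi_R(Y)$. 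Indeed, for any $p$ with $\deg p < tD/2$ and $\|p\|_1 \leq W$, Hölder yields $|\sum \Psi p| \leq W \max|\hat{\Psi}(R)|$, so the displayed inequality forces $\max_Y |G(Y)-p(Y)| > \gamma$.

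I build $\Psi$ as a \emph{block composition} of two dual objects. From $\deg_{2\gamma}(F) = D$, LP duality supplies a signed measure $\Psi_F \colon \{-1,1\}^k \to \R$ with $\|\Psi_F\|_1 = 1$, $\sum_x \Psi_F(x)F(x) > 2\gamma$, and $\hat{\Psi_F}(T) = 0$ whenever $|T| < D$. Dualizing the LP that defines $w^*(f|_H,t)$ yields a nonnegative $\mu$ on $H$ with $\sum_y \mu(y)f(y) = 0$, $|\sum_y \mu(y)f(y)\chi_S(y)| \leq 1$ for every nonempty $|S| \leq t$, and total mass $W_f := w^*(f|_H,t)$. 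Setting $\psi_f(y) := \mu(y)f(y)/W_f$ gives a unit-$L_1$ signed measure supported on $H$ whose sign matches $f$, whose positive and negative parts each have mass $1/2$, that satisfies $\sum \psi_f f = 1$, and whose Fourier coefficients obey $|\hat{\psi_f}(S)| \leq 1/W_f$ for $0 < |S| \leq t$ and $\hat{\psi_f}(\emptyset) = 0$. I then define
\[
\Psi(Y_1,\dots,Y_k) \;:=\; 2^k\, \Psi_F\bigl(f(Y_1),\dots,f(Y_k)\bigr) \prod_{i=1}^k |\psi_f(Y_i)|.
\]
Since $\psi_f$ is supported on $H$, $\Psi$ is supported on $H^k$, which by the hypothesis relating $\cY$ and $H$ lies inside $\cY$. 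A short calculation exploiting the mass-$1/2$ balance of the positive and negative parts of $\psi_f$ gives $\|\Psi\|_1 = \|\Psi_F\|_1 = 1$ and $\sum \Psi G = \sum_x \Psi_F(x)F(x) > 2\gamma$.

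The technical heart of the proof is bounding $|\hat{\Psi}(R)|$ for $|R| < tD/2$. Expanding $\Psi_F$ in its Fourier series and using the identity $f(Y_i)|\psi_f(Y_i)| = \psi_f(Y_i)$ produces the factorization
\[
\hat{\Psi}(R) = 2^k \sum_{T \subseteq [k]} \hat{\Psi_F}(T) \prod_{i \in T} \hat{\psi_f}(S_i) \prod_{i \notin T} \widehat{|\psi_f|}(S_i).
\]
Orthogonality of $\Psi_F$ restricts the sum to $|T| \geq D$, and $\hat{\psi_f}(\emptyset) = 0$ further restricts to $T$ with $S_i \neq \emptyset$ for every $i \in T$. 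I partition such a $T$ as $J_1 \sqcup J_2$ according to whether $|S_i| > t$ or not; the total-degree hypothesis $\sum_i |S_i| < tD/2$ forces $|J_1| < D/2$, so $|J_2| > D/2$. Each $J_2$-factor is at most $1/W_f$ by the margin property of $\psi_f$, and every other factor is at most $1$ in absolute value since $\||\psi_f|\|_1 = 1$. Combining with $\sum_T |\hat{\Psi_F}(T)| \leq \|\Psi_F\|_1 = 1$ yields $|\hat{\Psi}(R)| \leq 2^k W_f^{-D/2}$. Substituting into the duality criterion, the hypothesis $W \leq \gamma\, 2^{-k} W_f^{D/2}$ gives $\gamma\|\Psi\|_1 + W \max|\hat{\Psi}(R)| \leq \gamma + \gamma = 2\gamma < \sum \Psi G$, completing the argument.

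The main obstacle will be the Fourier-analytic bookkeeping in the last paragraph. The crucial structural fact to uncover is that $\widehat{|\psi_f|}(S_i)$ admits no useful bound beyond the trivial one for $S_i \neq \emptyset$, so all of the smallness must be extracted from the $T$-indexed factors $\hat{\psi_f}(S_i)$. The pigeonhole step $|J_1| < D/2 \Rightarrow |J_2| > D/2$ converts a constraint on the \emph{sum} of degrees into a guarantee that \emph{many} individual factors are small, and it is precisely this halving that dictates the exponent $D/2$ (rather than $D$) in the final bound on $W$.
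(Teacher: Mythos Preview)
Your proposal is correct and follows essentially the same approach as the paper: you construct the identical dual witness $\Psi(Y)=2^k\Psi_F(f(Y_1),\dots,f(Y_k))\prod_i|\psi_f(Y_i)|$ (which coincides with the paper's $2^k\Gamma(\cdot)\prod_i\mu(Y_i)$ under the normalization $|\psi_f|=\mu$), perform the same Fourier expansion, and apply the same pigeonhole argument on the block degrees $|S_i|$ to extract at least $D/2$ factors bounded by $1/W_f$. The only cosmetic difference is that you invoke $\hat\psi_f(\emptyset)=0$ to prune terms with $S_i=\emptyset$ for $i\in T$, whereas the paper simply subsumes this case under the bound $|\sum\mu f\chi_{S_i}|\leq 1/w$ valid for all $|S_i|\leq t$; both routes lead to the same estimate $|\hat\Psi(R)|\leq 2^kW_f^{-D/2}$.
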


We will derive the following corollary from Theorem \ref{composition-thm:weight-degree-lowerbound}. Theorem \ref{thm:main2} follows immediately from Corollary \ref{corr:weight-degree-lowerbound} by considering any $k=o(\log d)$.
\begin{corollary}\label{corr:weight-degree-lowerbound}
Let $k\in [d]$. Then, there exists a universal constant $C>0$ such that %Let $G:\{-1,1\}^d\rightarrow \{-1,1\}$ denote the $d$-variate AND function. Then, for every $t\in \Z_+$ \[\text{deg}_{(\eps,W)}(G|_{H_{d,k}})>\frac{t\sqrt{k}}{2} \text{ for every } W\le \eps 2^{-k}(\frac{2d}{tk})^{{\sqrt{k}}/{2}}.\]
%For every positive integer $t$,
\[
%\text{deg}_{(\gamma,W)}(\text{OR}_d|_{H_{d,k}})>\frac{t\sqrt{k}}{2} \text{ for every } W\le \gamma 2^{-k}\left(\frac{d}{tk}\right)^{{\sqrt{k}}/{2}}.
\text{deg}_{(1/6,W)}(\text{OR}_d|_{H_{d,k}})=\Omega\left(\frac{d}{\sqrt{k}}\cdot \frac{W^{-\frac{1}{C\sqrt{k}}}}{2^{\sqrt{k}/C}}\right)
\text{ for every } W\le \frac{1}{6.2^k}\left(\frac{d}{k}\right)^{C\sqrt{k}}.
\]

%Then, for every $W>0$,
%\[\text{deg}_{(\eps,W)}(G|_{H_{n,k}})>\frac{n}{2\sqrt{k}}\left(\frac{\eps}{2^kW}\right)^{\frac{2}{\sqrt{k}}}.\]
\end{corollary}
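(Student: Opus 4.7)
The plan is to derive the corollary by directly instantiating Theorem~\ref{composition-thm:weight-degree-lowerbound} with the outer/inner decomposition $\text{OR}_d = \text{OR}_k(\text{OR}_{d/k},\ldots,\text{OR}_{d/k})$. Concretely, I would take $F = \text{OR}_k$, $f = \text{OR}_{d/k}$, $Y = \{-1,1\}^{d/k}$, $\cY = H_{d,k}$, and $H = H_{d/k,1}$, so $G = \text{OR}_d$. The hypothesis requires that whenever $(Y_1,\ldots,Y_k) \in Y^k \setminus H_{d,k}$, some block $Y_i$ lies outside $H$, i.e.\ has Hamming weight at least $2$; this is immediate, because if every $Y_i$ had Hamming weight at most $1$, the concatenation would have Hamming weight at most $k$. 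Setting $\gamma = 1/6$, we have $D = \text{deg}_{1/3}(\text{OR}_k) = \Theta(\sqrt{k})$ by the classical approximate degree bound of Nisan--Szegedy and Paturi.

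The key quantitative step is a lower bound on $w^*(\text{OR}_{d/k}|_{H_{d/k,1}}, t)$. The set $H_{d/k,1}$ consists of $d/k+1$ points: the all-ones input $\mathbf{1}$ (on which $\text{OR}_{d/k} = +1$) and the $d/k$ inputs $e_j$ having $-1$ exactly in coordinate $j$ (on which $\text{OR}_{d/k} = -1$). Any degree-$t$ polynomial $p(Y) = \sum_S c_S \prod_{i \in S} Y_i$ with $\text{OR}_{d/k}(Y)\cdot p(Y) \geq 1$ on these points must satisfy $p(\mathbf{1}) \geq 1$ and $p(e_j) \leq -1$ for every $j$. Since $p(\mathbf{1}) - p(e_j) = 2\sum_{S \ni j} c_S$, each such difference is at least $2$, so $\sum_{S\ni j} c_S \geq 1$. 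Summing over $j$ and bounding coefficientwise,
\[
\frac{d}{k} \;\leq\; \sum_{j=1}^{d/k} \sum_{S \ni j} c_S \;=\; \sum_{S \neq \emptyset} |S|\, c_S \;\leq\; \sum_{S \neq \emptyset} |S|\,|c_S| \;\leq\; t\cdot w^*(p),
\]
so $w^*(p) \geq d/(kt)$. Picking the right auxiliary set $H$ is the most subtle design choice: making $H$ larger can only strengthen the $w^*$ bound, but would violate the hypothesis of Theorem~\ref{composition-thm:weight-degree-lowerbound}; the choice $H_{d/k,1}$ is exactly what the budget of Hamming weight $k$ allows.

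Plugging these bounds in, Theorem~\ref{composition-thm:weight-degree-lowerbound} yields $\text{deg}_{(1/6,W)}(\text{OR}_d|_{H_{d,k}}) \geq tD/2$ for every $t$ with $W \leq \frac{1}{6\cdot 2^k}\bigl(d/(kt)\bigr)^{D/2}$. Choosing $t^* = \lfloor (d/k)(6\cdot 2^k W)^{-2/D}\rfloor$ saturates this constraint, and since $D = \Theta(\sqrt{k})$ the exponent $2/D = \Theta(1/\sqrt{k})$, giving
\[
t^* D/2 \;=\; \Omega\!\left(\frac{d}{\sqrt{k}}\cdot (6\cdot 2^k W)^{-\Theta(1/\sqrt{k})}\right) \;=\; \Omega\!\left(\frac{d}{\sqrt{k}}\cdot \frac{W^{-1/(C\sqrt{k})}}{2^{\sqrt{k}/C}}\right)
\]
for a suitable constant $C>0$. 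The assumed bound $W \leq \frac{1}{6\cdot 2^k}(d/k)^{C\sqrt{k}}$ is precisely what is needed to ensure $t^* \geq 1$ so the theorem applies nontrivially. Theorem~\ref{thm:main2} then follows by specializing to $W = \text{poly}(d)$ and $k = o(\log d)$, where $W^{-1/(C\sqrt{k})}$ and $2^{-\sqrt{k}/C}$ are both $d^{-o(1)}$, yielding degree $d^{1-O(1/\sqrt{k})}$. The main obstacle I expect is bookkeeping the constants so that the sufficient condition on $W$ in the corollary matches what the optimization produces; the approximation-theoretic content beyond the composition theorem reduces to the elementary $w^*$ calculation above.
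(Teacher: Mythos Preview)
Your proposal is correct and follows essentially the same route as the paper: the same instantiation of Theorem~\ref{composition-thm:weight-degree-lowerbound} with $F=\text{OR}_k$, $f=\text{OR}_{d/k}$, $H=H_{d/k,1}$, the same pigeonhole verification of the hypothesis, Paturi's bound for $D$, and the same optimization over $t$. The one minor difference is that you prove $w^*(\text{OR}_{d/k}|_{H_{d/k,1}},t)\ge d/(kt)$ by a direct primal calculation, whereas the paper (Lemma~\ref{lem:disj-non-constant-margin-weight}) exhibits the dual distribution $\mu$ placing mass $1/2$ on $\mathbf{1}$ and mass $1/(2d/k)$ on each weight-one input; both arguments are elementary and yield the identical bound.
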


\paragraph{Intuition underlying our proof technique.}
Recall that our upper bound in Section \ref{sec:upperbound} worked as follows. We viewed $\text{OR}_d$ as an ``OR of ORs'', and we approximated the outer OR with a polynomial $p$ of degree $\text{deg}_{\text{outer}}$ chosen to be as small as possible, and composed $p$ with a low-weight but high-degree polynomial computing each inner OR. We needed to make sure the weight $W_{\text{inner}}$ of the inner polynomials was very low, because the composition step potentially blows the weight up to roughly $W_{\text{inner}}^{\text{deg}_{\text{outer}}}$. As a result, the inner polynomials had to have very high degree, to keep their weight low. 

Intuitively,  we construct a dual solution to a certain linear program that captures the intuition that any low-weight, low-degree polynomial approximation to $\text{OR}_d$ must look something like our primal solution, composing a low-degree approximation to an ``outer'' OR with low-weight approximations to inner ORs. Moreover, our dual solution formalizes the intuition that the composition step must result in a massive blowup in weight,
from $W_{\text{inner}}$ to roughly $W_{\text{inner}}^{\text{deg}_{\text{outer}}}$. 

In more detail, our dual construction works by writing $\text{OR}_d$ as an OR of ORs, where the outer OR is over $k$ variables, and each inner ORs is over $d/k$ variables. We obtain our dual solution by carefully combining a dual witness $\Gamma$ to the high approximate degree of the outer OR, with a dual witness $\psi$ to the fact that any low-degree polynomial with margin at least $1$ for each inner OR, must have ``large'' weight, even if the polynomial must satisfy the margin constraint only on inputs of Hamming weight 0 or 1. This latter condition, that $\psi$ must witness high non-constant margin weight even if restricted to inputs of Hamming weight 0 or 1, is essential to ensuring that our combined dual witness does not place any ``mass'' on irrelevant inputs, i.e. those of Hamming weight larger than $k$.

\subsection{Duality Theorems}
In the rest of the section, we let $\chi_S(x)=\prod_{i\in S}{x_i}$ for any given set $S\subseteq [d]$. 
The question of existence of a weight $W$ polynomial with small degree that $\gamma$-approximates a given function can be expressed as a feasibility problem for a linear program. Now, in order to show the non-existence of such a polynomial, it is sufficient to show infeasibility of the linear program. By duality, this is equivalent to demonstrating the existence of a solution to the corresponding dual program. We begin by summarizing the duality theorems that will be useful in exhibiting this witness.

%The following theorem corresponds to the dual witness whose existence will be exhibited to prove Theorem \ref{composition-thm:weight-degree-lowerbound}.
\begin{theorem}[Duality Theorem for $(\gamma,W)$-approximate degree]\label{thm:duality-weighted-approx-degree}
Fix $\gamma\ge 0$ and let $f:Y\rightarrow \pmo$ be given for some finite set $Y\subset \R^d$. Then, $\text{deg}_{(\gamma,W)}(f)\ge t+1$ if and only if there exists a function $\Psi:Y\rightarrow \R$ such that
\begin{enumerate}
\item $\sum_{y\in Y} |\Psi(y)|=1$,
\item $\sum_{y\in Y} \Psi(y)f(y)-W\cdot|\sum_{y\in Y}\Psi(y)\chi_S(y)|>\gamma$ for every $S\subseteq [d],|S|\le t$.
\end{enumerate}
\end{theorem}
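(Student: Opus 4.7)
The plan is to prove this by LP duality, with the easy direction (``$\Leftarrow$'') done by a direct averaging argument and the hard direction (``$\Rightarrow$'') done by writing the $(\gamma,W)$-approximation question as a finite-dimensional linear program, taking its dual, and invoking strong duality.

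For the easy direction, suppose some $\Psi$ satisfying (1) and (2) exists, and suppose for contradiction that there is a polynomial $p(y)=\sum_{|S|\le t} c_S \chi_S(y)$ with $\sum_S |c_S|\le W$ and $|p(y)-f(y)|\le \gamma$ for every $y\in Y$. On one hand, by (1),
\[
\Bigl|\sum_{y\in Y}\Psi(y)\bigl(p(y)-f(y)\bigr)\Bigr| \le \gamma \sum_{y\in Y}|\Psi(y)| = \gamma.
\]
On the other, by linearity and the triangle inequality,
\[
\Bigl|\sum_{y\in Y}\Psi(y)\, p(y)\Bigr| = \Bigl|\sum_{|S|\le t} c_S \sum_{y\in Y}\Psi(y)\chi_S(y)\Bigr| \le W\,\max_{|S|\le t}\Bigl|\sum_{y\in Y}\Psi(y)\chi_S(y)\Bigr|.
\]
Combining these two bounds gives $\sum_y\Psi(y)f(y) - W\cdot \max_{|S|\le t}|\sum_y\Psi(y)\chi_S(y)|\le \gamma$, which contradicts (2) at the maximizing $S$.

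For the hard direction, I set up the primal LP that minimizes the approximation error achievable by a weight-$W$, degree-$t$ polynomial. Writing each coefficient as $c_S=c_S^+-c_S^-$ with $c_S^\pm\ge 0$, the variables are $\gamma'\in\R$ and $\{c_S^\pm\ge 0\}_{|S|\le t}$, and the constraints are $\sum_S(c_S^+-c_S^-)\chi_S(y)-f(y)\le \gamma'$ and $f(y)-\sum_S(c_S^+-c_S^-)\chi_S(y)\le \gamma'$ for all $y\in Y$, together with $\sum_S(c_S^++c_S^-)\le W$; the objective is to minimize $\gamma'$. The hypothesis $\mathrm{deg}_{(\gamma,W)}(f)\ge t+1$ is exactly the statement that the optimum $\gamma^*$ of this LP exceeds $\gamma$.

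Introduce dual multipliers $\mu^+(y),\mu^-(y)\ge 0$ for the two pointwise constraints and $\lambda\ge 0$ for the weight constraint, and form the Lagrangian. Setting the coefficient of $\gamma'$ to zero gives $\sum_y(\mu^+(y)+\mu^-(y))=1$; the $c_S^+$ and $c_S^-$ constraints collapse to $\bigl|\sum_y(\mu^+(y)-\mu^-(y))\chi_S(y)\bigr|\le \lambda$ for every $|S|\le t$. Defining $\Psi(y):=\mu^-(y)-\mu^+(y)$ and noting that an optimal solution can be taken to satisfy $\mu^+(y)\mu^-(y)=0$ (otherwise subtract the minimum of the two, which leaves $\Psi$ unchanged and can only relax the normalization), we get $\sum_y|\Psi(y)|=1$. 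The dual becomes: maximize $\sum_y\Psi(y)f(y)-\lambda W$ over $\Psi$ with $\sum_y|\Psi(y)|=1$ and $\lambda\ge \max_{|S|\le t}|\sum_y\Psi(y)\chi_S(y)|$, and the optimal $\lambda$ is attained with equality. Strong LP duality (which holds since the primal is a finite-dimensional LP and is feasible, e.g.\ by taking $c_S=0$ and $\gamma'$ large) then gives
\[
\gamma^* \;=\; \max_{\Psi}\Bigl[\sum_{y\in Y}\Psi(y)f(y) - W\cdot \max_{|S|\le t}\Bigl|\sum_{y\in Y}\Psi(y)\chi_S(y)\Bigr|\Bigr],
\]
so $\gamma^*>\gamma$ yields a $\Psi$ satisfying (1) and (2).

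The main bookkeeping obstacle is keeping the signs straight when splitting $c_S=c_S^+-c_S^-$ and identifying $\Psi=\mu^--\mu^+$; there is no conceptual difficulty since strong duality is automatic for this finite LP.
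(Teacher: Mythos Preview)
Your proof is correct and follows the same LP-duality route as the paper, which simply invokes Farkas' lemma on the feasibility system $\sum_S|\lambda_S|\le W$, $|f(y)-\sum_S\lambda_S\chi_S(y)|\le\gamma$ without spelling out the primal/dual explicitly. One small wrinkle: your justification that an optimal dual solution may be taken with $\mu^+(y)\mu^-(y)=0$ (``subtract the minimum \ldots\ can only relax the normalization'') is slightly off, since the normalization $\sum_y(\mu^+(y)+\mu^-(y))=1$ is an \emph{equality} constraint and subtracting would violate it; the clean fix is to observe that the dual objective is positively homogeneous in $\Psi$, and since the dual optimum equals $\gamma^*>\gamma\ge 0$ it is strictly positive, so the maximum over $\|\Psi\|_1\le 1$ is attained at $\|\Psi\|_1=1$.
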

\begin{proof}%[Proof of Theorem \ref{thm:duality-weighted-approx-degree}]
By definition, $\text{deg}_{(\gamma,W)}(f)\le t$ if and only if $\exists (\lambda_S)_{S\subseteq[d], |S|\le t}:$
\begin{align*}
\sum_{S\subseteq [d],|S|\le t}|\lambda_S|&\le W, \text{ and} \\
\left|f(y)-\sum_{S\subseteq [d],|S|\le t} \lambda_S \chi_S(y)\right|&\le \gamma\ \forall\ y\in Y.
\end{align*}
By Farkas' lemma, $\text{deg}_{(\gamma,W)}(f)\le t$ if and only if $\nexists\ \Psi:Y\rightarrow \R$ such that 
\[
\frac{1}{W}\sum_{y\in Y}\left(f(y)\Psi(y)-\gamma|\Psi(y)|\right)> \left |\sum_{y\in Y}\chi_S(y)\Psi(y)\right|\ \forall\ S\subseteq [d], |S|\le t.
\]
\end{proof}

The dual witness that we construct to prove Theorem \ref{composition-thm:weight-degree-lowerbound} is obtained by combining a dual witness for the large non-constant margin weight of the inner function with a dual witness for the large approximate degree for the outer function. The duality conditions for these are given below. The proof of the duality condition for the case of $\gamma$-approximate degree is well-known, so we omit the proof for brevity.

\begin{theorem}[Duality Theorem for $\gamma$-approximate degree]\label{thm:duality-approx-degree} \cite{sherstov-pm, spalek, bunthaler}
Fix $\gamma\ge 0$ and let $f:Y\rightarrow \{-1,1\}$ be given, where $Y\subset \R^d$ is a finite set. Then, $\text{deg}_{\gamma}(f)\ge t+1$ if and only if there exists a function $\Gamma:Y\rightarrow \R$ such that
\begin{enumerate}
\item $\sum_{y\in Y} |\Gamma(y)|=1$,
\item $\sum_{y\in Y}\Gamma(y)p(y)=0$ for every polynomial $p$ of degree at most $t$, and
\item $\sum_{y\in Y}\Gamma(y)f(y)>\gamma$.
\end{enumerate}
\end{theorem}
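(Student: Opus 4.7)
The plan is to follow the same Farkas-style template that the excerpt used to prove Theorem~\ref{thm:duality-weighted-approx-degree}, but without the weight constraint $\sum_S |\lambda_S| \le W$, which significantly simplifies the argument. First I would write down the primal feasibility LP: by definition $\text{deg}_\gamma(f) \le t$ iff there exist real coefficients $(\lambda_S)_{S \subseteq [d], |S| \le t}$ such that
\[
-\gamma \,\le\, f(y) - \sum_{|S|\le t} \lambda_S \chi_S(y) \,\le\, \gamma \qquad \forall\, y \in Y.
\]
This is a system of $2|Y|$ linear inequalities in the free variables $\lambda_S$, and $\text{deg}_\gamma(f) \ge t+1$ is exactly the statement that it is infeasible.

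Next I would apply Farkas' lemma to characterize infeasibility. Grouping the two inequalities above, the system has the form $A\lambda \le b$ where the rows of $A$ are $\pm \chi_S(y)$ and the entries of $b$ are $f(y)+\gamma$ and $-f(y)+\gamma$. Since $\lambda$ is unrestricted, Farkas gives infeasibility iff there exist non-negative weights $\mu^+(y),\mu^-(y) \ge 0$ satisfying
\[
\sum_{y} \bigl(\mu^+(y) - \mu^-(y)\bigr) \chi_S(y) = 0 \qquad \forall\, |S|\le t,
\]
and
\[
\sum_{y} f(y)\bigl(\mu^+(y) - \mu^-(y)\bigr) + \gamma \sum_{y}\bigl(\mu^+(y)+\mu^-(y)\bigr) < 0.
\]
Defining the signed function $\Gamma(y) := \mu^-(y) - \mu^+(y)$, the first condition becomes the orthogonality condition $\sum_y \Gamma(y)\chi_S(y) = 0$ for every $|S|\le t$, which extends by linearity to all polynomials $p$ of degree at most $t$ (Condition 2 of the theorem). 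Choosing the minimal decomposition $\mu^-(y) = \max(0,\Gamma(y))$, $\mu^+(y) = \max(0,-\Gamma(y))$ makes $\mu^+(y)+\mu^-(y) = |\Gamma(y)|$, so the second condition becomes $\sum_y f(y)\Gamma(y) > \gamma \sum_y |\Gamma(y)|$.

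Finally I would rescale $\Gamma$ by $1/\sum_y |\Gamma(y)|$ (which is positive, else every condition is trivial), so that $\sum_y |\Gamma(y)|=1$ (Condition 1) and $\sum_y f(y)\Gamma(y) > \gamma$ (Condition 3). For the converse direction (existence of $\Gamma$ implies $\text{deg}_\gamma(f) \ge t+1$), I would argue directly: if some degree-$t$ polynomial $p$ satisfied $|f(y)-p(y)|\le \gamma$ for all $y \in Y$, then using $\sum_y \Gamma(y) p(y)=0$ we would obtain
\[
\gamma < \sum_y f(y)\Gamma(y) = \sum_y \bigl(f(y)-p(y)\bigr)\Gamma(y) \le \gamma \sum_y |\Gamma(y)| = \gamma,
\]
a contradiction. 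No step should pose a genuine obstacle: this is the standard LP-duality characterization of approximate degree, and the only mildly subtle point is choosing the decomposition of $\Gamma$ into its positive and negative parts so that $\mu^+(y)+\mu^-(y)=|\Gamma(y)|$ on the nose, which is needed to get the $\gamma$ bound (rather than merely $\gamma \cdot \text{const}$) on the right-hand side of Condition~3.
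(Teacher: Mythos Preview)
Your argument is correct and is exactly the standard LP-duality derivation the paper has in mind; indeed the paper does not include a proof of this theorem at all, stating only that ``the proof of the duality condition for the case of $\gamma$-approximate degree is well-known, so we omit the proof for brevity.'' One minor remark: the step ``choosing the minimal decomposition'' is not strictly needed, since from the Farkas witness you already have $\sum_y f(y)\Gamma(y) > \gamma \sum_y(\mu^+(y)+\mu^-(y)) \ge \gamma \sum_y |\Gamma(y)|$ by the triangle inequality $\mu^+(y)+\mu^-(y)\ge |\mu^-(y)-\mu^+(y)|=|\Gamma(y)|$, after which you can normalize directly.
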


\begin{theorem}[Duality Theorem for non-constant margin weight]\label{thm:duality-non-constant-margin-weight}
Let $Y\subset \R^d$ be a finite set, let $f:Y\rightarrow \{1,-1\}$ be a given function and $w>0$. The non-constant margin weight $w^*(f,t)\ge w$ if and only if there exists a distribution $\mu:Y\rightarrow [0,1]$ such that 
\begin{enumerate}
\item $\sum_{y\in Y}\mu(y)f(y)=0$
\item $\left|\sum_{y\in Y}\mu(y)f(y)\chi_S(y)\right|\le\frac{1}{w}$ for every $S\subseteq[d]$, $|S|\le t$.
\end{enumerate}
\end{theorem}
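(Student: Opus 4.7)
The plan is to prove the theorem by standard LP duality applied to the natural linear program whose optimum is $w^*(f,t)$. I first set up the primal: over variables $c_S \in \R$ for each $S \subseteq [d]$ with $|S| \leq t$, I minimize $\sum_{S \neq \emptyset} |c_S|$ subject to the margin constraints $f(y) \sum_S c_S \chi_S(y) \geq 1$ for all $y \in Y$. To put this in LP form I split each $c_S$ with $S \neq \emptyset$ as $c_S^+ - c_S^-$ with $c_S^\pm \geq 0$, while crucially leaving $c_\emptyset$ as a free variable (since $c_\emptyset$ does not enter the objective). The optimal value of this LP is exactly $w^*(f,t)$.

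Next I take the dual in the standard way, introducing one non-negative variable $\lambda_y$ for each $y \in Y$ and maximizing $\sum_y \lambda_y$ (the right-hand sides of the primal constraints are all equal to $1$). The free primal variable $c_\emptyset$ produces the equality $\sum_y \lambda_y f(y) = 0$, while each pair $c_S^\pm \geq 0$ with $S \neq \emptyset$ produces the pair of inequalities that collapse to $|\sum_y \lambda_y f(y) \chi_S(y)| \leq 1$. Since the dual is trivially feasible via $\lambda \equiv 0$ and the primal is feasible whenever $w^*(f,t) < \infty$, strong LP duality gives primal optimum equals dual optimum, and both are attained.

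It remains to translate optimality into the claimed distribution. If $w^*(f,t) \geq w$, strong duality produces a feasible $\lambda$ with $\sum_y \lambda_y \geq w$. Setting $Z := \sum_z \lambda_z$ and $\mu(y) := \lambda_y/Z$ yields a probability distribution preserving $\sum_y \mu(y) f(y) = 0$ and satisfying $|\sum_y \mu(y) f(y) \chi_S(y)| \leq 1/Z \leq 1/w$ for every $S$ with $|S| \leq t$; the $S = \emptyset$ case of the stated bound is absorbed into the equality. Conversely, given such a $\mu$ and any polynomial $p = c_\emptyset + \sum_{S \neq \emptyset} c_S \chi_S$ of degree $\leq t$ with $f(y) p(y) \geq 1$, a direct weak-duality calculation gives
\[
1 \;\leq\; \sum_y \mu(y) f(y) p(y) \;=\; \sum_{S \neq \emptyset} c_S \sum_y \mu(y) f(y) \chi_S(y) \;\leq\; \frac{w^*(p)}{w},
\]
so $w^*(p) \geq w$ for every feasible $p$, whence $w^*(f,t) \geq w$.

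I do not anticipate any real obstacle: the result is a clean application of strong LP duality. The only delicate point to get right is the bookkeeping around the constant term. Because the non-constant weight ignores $c_\emptyset$, this coefficient must be kept as a free primal variable, which is precisely what forces $\sum_y \mu(y) f(y) = 0$ to appear as an equality rather than an inequality in the dual. Handling this correctly is the only thing that distinguishes the proof from the standard duality for plain margin weight.
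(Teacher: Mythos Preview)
Your proposal is correct and follows essentially the same approach as the paper: both set up the primal LP for $w^*(f,t)$, linearize the absolute value in the objective (you via the $c_S^{+},c_S^{-}$ splitting, the paper via auxiliary variables $\alpha_S\geq |\lambda_S|$), take the LP dual to obtain $\max \sum_y \mu(y)$ subject to $\sum_y \mu(y)f(y)=0$ and $|\sum_y \mu(y)f(y)\chi_S(y)|\leq 1$, and then normalize to pass between the dual solution and the claimed distribution. Your handling of the constant term as a free primal variable forcing the equality constraint is exactly the point the paper's formulation captures as well.
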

\begin{proof}%[Proof of \ref{thm:duality-non-constant-margin-weight}]
Let $\cS=\{S\subseteq[d]: |S|\le t\}$, $\overline{\cS}=\cS\setminus \{\emptyset\}$. By definition, $w^*(f,t)$ is expressed by the following linear program:
\begin{align*}
\min\ \sum_{S\in \overline{\cS}}|\lambda_S|&\\
f(y)\sum_{S\in \cS} \lambda_S \chi_S(y)&\ge 1\ \forall\ y\in Y.
\end{align*}
The above linear program can be restated as follows:
\begin{align*}
\min\ \sum_{S\in \overline{\cS}}\alpha_S&\\
\alpha_S+\lambda_S&\ge0\ \forall\ S\in \overline{\cS},\\
\alpha_S-\lambda_S&\ge0\ \forall\ S\in \overline{\cS},\\
f(y)\sum_{S\in \cS} \lambda_S \chi_S(y)&\ge 1\ \forall\ y\in Y, \text{ and}\\
\alpha_S&\ge 0\ \forall\ S\in \overline\cS.\\
\end{align*}
The dual program is expressed below:
%$(\mu)_{y\in Y}, (u_1)_{S\in \overline\cS},(u_1)_{S\in \overline\cS}, u_3$ such that
\begin{align*}
\max\ \sum_y \mu(y)&\\
u_1(S)+u_2(S)&\le 1\ \forall\ S\in \overline\cS,\\
\sum_{y\in Y}\mu(y)f(y)\chi_S(y)+u_1(S)-u_2(S)&=0\ \forall\ S\in \overline\cS,\\
\sum_{y\in Y}\mu(y)f(y)&=0,\\
\mu(y)\ge 0\ \forall\ y\in Y,\ u_1(S),u_2(S)&\ge0\ \forall\ S\in \overline\cS.
\end{align*}
%By changing to variables $\beta(S):=u_1(S)-u_2(S), \gamma(S):=u_1(S)+u_2(S)\ \forall\ S\in \overline\cS$, we have that $w^*(f,t)>w$ if and only if there exist $(\mu)_{y\in Y}, (\beta)_{S\in \overline\cS},(\gamma)_{S\in \overline\cS}, u_3$ such that
%\begin{align*}
%\gamma(S)&\le u_4\ \forall\ S\in \overline\cS,\\
%\sum_{y\in Y}\mu(y)f(y)\chi_S(y)&=\beta(S)\ \forall\ S\in \overline\cS\\
%\sum_{y\in Y}\mu(y)f(y)&=0,\\
%\sum_{y\in Y}\mu(y)&>u_3\cdot w,\text{ and}\\
%\mu(y)\ge 0\ \forall\ y\in Y,\ |\beta(S)|&\le\gamma(S)\ \forall\ S\in \overline\cS,\ u_3\ge0.
%\end{align*}

By standard manipulations, the above dual program is equivalent to
\begin{align*}
\max\ \sum_y \mu(y)&\\
|\sum_{y\in Y} \mu(y)\chi_S(y)f(y)|&\le 1 \ \forall\ S\in \overline \cS\\
\sum_{y\in Y}\mu(y)f(y)&=0,\\
\mu(y)&\ge 0\ \forall\ y\in Y
\end{align*}

Finally, given a distribution $\mu'$ satisfying the hypothesis of the theorem, one can obtain a dual solution $\mu$ to show that $w^*(f,t)\ge w$ by taking $w^{-1}=\max_{S\in \cS}|\sum_{y\in Y}\mu'(y)\chi_S(y)f(y)|$ and setting $\mu(y)=w\mu'(y)\ \forall\ y\in Y$. In the other direction, if $w^*(f,t)\ge w$, then we have a dual solution $\mu$ satisfying the above dual program such that $\sum_{y\in Y}\mu(y)=w^*(f,t)$. By setting $\mu'(y)=\mu(y)/w^*(f,t)\ \forall\ y\in Y$, we obtain the desired distribution.
\end{proof}

\subsection{Proof of Theorem~\ref{composition-thm:weight-degree-lowerbound}}
Our approach to exhibiting a dual witness as per Theorem \ref{thm:duality-weighted-approx-degree} is to build a dual witness by appropriately combining the dual witnesses for the ``hardness'' of the inner and outer functions. Our method of combining the dual witnesses is inspired by the technique of \cite[Theorem 3.7]{sherstov-intersecths}. 

\begin{proof}[Proof of Theorem \ref{composition-thm:weight-degree-lowerbound}]
Let $w=w^*(f|_H,t)$. We will exhibit a dual witness function $\Psi:\cY\rightarrow \R$ corresponding to Theorem \ref{thm:duality-weighted-approx-degree} for the specified choice of degree and weight. For $y\in Y^k$, let $Y_i=(y_{(i-1)(d/k)+1},\ldots,y_{i d/k})$. By Theorem \ref{thm:duality-non-constant-margin-weight}, we know that there exists a distribution $\mu:H\rightarrow \R$ such that
\begin{eqnarray}
\sum_{y\in H}\mu(y)f(y)&=&0,\\
\left|\sum_{y\in H}\mu(y)f(y)\chi_S(y)\right|&\le &\frac{1}{w}\ \forall\ S\subseteq\left[\frac{d}{k}\right], |S|\le t \label{eq:mu-low-degree-parity-correlation}
\end{eqnarray}
We set $\mu(y)=0$ for $y\in Y\setminus H$. 

Since $\text{deg}_{2\gamma}(F)=D$, by Theorem \ref{thm:duality-approx-degree}, we know that there exists a function $\Gamma:\{-1,1\}^{k}\rightarrow \R$ such that
\begin{eqnarray}
\sum_	{x\in \{-1,1\}^k}|\Gamma(x)|&=&1, \label{eq:Gamma-l1-norm}\\
%\sum_{x\in \{-1,1\}^k}\Gamma(x)\chi_S(x)&=&0\ \forall\ S\subseteq[k], |S|<\sqrt{k},\label{eq:Gamma-fourier-coefficients}\\
\sum_{x\in \{-1,1\}^k}\Gamma(x)p(x)&=&0 \text{ for every polynomial $p$ of degree at most $D$, and}\label{eq:Gamma-fourier-coefficients}\\
\sum_{x\in \{-1,1\}^k}\Gamma(x)F(x)&>&2\gamma. \label{eq:Gamma-correlation}
\end{eqnarray}

Consider the function $\Psi:Y^k\rightarrow \R$ defined as $\Psi(y)=2^{k}\Gamma(f(Y_1),\ldots,f(Y_k))\prod_{i=1}^k{\mu(Y_i)}$. By the hypothesis of the theorem, we know that if $(Y_1,\ldots,Y_k)\in Y^k\setminus \cY$, then there exists $i\in [k]$ such that $Y_i\in Y\setminus H$ and hence $\mu(Y_i)=0$ and therefore $\Psi(Y_1,\ldots,Y_k)=0$.

\begin{enumerate}
\item 
\begin{align*}
\sum_{y\in \cY}|\Psi(y)|&=\sum_{y\in \cY}2^{k}|\Gamma(f(Y_1),\ldots,f(Y_k))|\prod_{i=1}^k{\mu(Y_i)}\\
&=2^k\E_{y\sim \Phi}(|\Gamma(f(Y_1),\ldots,f(Y_k))|)
\end{align*}
where $y\sim \Phi$ denotes $y$ chosen from the product distribution $\Phi:Y^k\rightarrow [0,1]$ defined by $\Phi(y)=\prod_{i\in [k]}\mu(Y_i)$. Since $\sum_{y\in Y}\mu(y)f(y)=0$, it follows that if $Y_i$ is chosen with probability $\mu(Y_i)$, then $f(Y_i)$ is uniformly distributed in $\pmo$. Consequently, 
\[
\sum_{y\in \cY}|\Psi(y)|=2^k\E_{z\sim_U \pmo^k}(|\Gamma(z_1,\ldots,z_k)|)=1.
\]
The last equality is by using (\ref{eq:Gamma-l1-norm}).

\item By the same reasoning as above, it follows from (\ref{eq:Gamma-correlation}) that 
\[
\sum_{y\in \cY}\Psi(y)G(y)=\sum_{z\in \pmo^k}\Gamma(z)F(z)>2\gamma.
\]
%\item Let $x\in \{0,1\}^n\setminus H_{n,k}$. Then, there exists $i\in [k]$ such that $X_i\in \{0,1\}^{n/k}\setminus H_{n/k,1}$. Consequently, $\phi(X_i)=0$, and hence $\Psi(x)=0$.

\item Fix a subset $S\subseteq[d]$ of size at most $tD/2$. Let $S_i=S\cap \{(i-1)(d/k)+1,\ldots,i d/k\}$ for each $i\in [k]$. Consequently, $\chi_S(y)=\prod_{i=1}^k\chi_{S_i}(Y_i)$.

Now using the Fourier coefficients $\hat\Gamma(T)$ of the function $\Gamma$, we can express
\[
\Gamma(z_1,\ldots,z_k)=\sum_{T\subseteq[k]}\hat\Gamma(T)\prod_{i\in T}z_i=\sum_{\substack{T\subseteq [k],\\ |T|\ge D}}\hat\Gamma(T)\prod_{i\in T}z_i
\]
since $\hat \Gamma(T)=0$ if $|T|<D$ by (\ref{eq:Gamma-fourier-coefficients}). Hence,
\begin{align*}
\Psi(y)&=2^k\sum_{\substack{T\subseteq [k],\\ |T|\ge D}} \hat\Gamma(T)\prod_{i\in T}f(Y_i)\mu(Y_i)\cdot \prod_{i\in[k]\setminus T}\mu(Y_i)%\\
%&=2^k\sum_{T\subseteq [k], |T|> D} \hat\Gamma(T)\prod_{i\in T}\phi(X_i)\cdot \prod_{i\in[k]\setminus T}|\phi(X_i)|.
\end{align*}
Therefore,
$\sum_{y\in \cY}\Psi(y)\chi_S(y)$
\begin{align*}
&=\sum_{y\in \cY}\Psi(y)\prod_{i\in [k]}\chi_{S_i}(Y_i)\\
&=2^k\sum_{y\in \cY}\left(\sum_{\substack{T\subseteq [k],\\ |T|\ge D}} \hat\Gamma(T)\prod_{i\in T}f(Y_i)\mu(Y_i)\cdot \prod_{i\in[k]\setminus T}\mu(Y_i)\right) \prod_{i\in [k]}\chi_{S_i}(Y_i)\\
&=2^k\sum_{\substack{T\subseteq [k],\\ |T|\ge D}} \hat\Gamma(T) \sum_{y\in \cY} \left(\prod_{i\in T}f(Y_i)\mu(Y_i)\cdot \prod_{i\in[k]\setminus T}\mu(Y_i) \prod_{i\in [k]}\chi_{S_i}(Y_i)\right)\\
&=2^k\sum_{\substack{T\subseteq [k],\\ |T|\ge D}} \hat\Gamma(T) \sum_{Y_1,\ldots,Y_k\in H} \left(\prod_{i\in T}f(Y_i)\mu(Y_i)\chi_{S_i}(Y_i)\cdot \prod_{i\in[k]\setminus T}\mu(Y_i)\chi_{S_i}(Y_i)\right).
\end{align*}

Rearranging, we have $\sum_{y\in \cY}\Psi(y)\chi_S(y)=$
\begin{equation}
2^{k}\sum_{\substack{T\subseteq [k],\\ |T|\ge D}}\hat\Gamma(T)\prod_{i\in T}\left(\sum_{Y_i\in H}f(Y_i)\mu(Y_i)\chi_{S_i}(Y_i)\right)\prod_{i\in[k]\setminus T}\left(\sum_{Y_i\in H}\mu(Y_i)\chi_{S_i}(Y_i)\right).
\label{eq:RHS}
\end{equation}
Now, we will bound each product term in the outer sum by $w^{-D/2}$. We first observe that for every $i\in [k]$,
\begin{align*}
\sum_{x\in H}\mu(x)\chi_{S_i}(x) &\le \sum_{x\in H}\mu(x) = 1.%, \text{ and}\\
%\sum_{x\in H}}f(X)\mu(x)\chi_{S_i}(x)&= \sum_{x\in H}\mu(x)\chi_{S_i}(X).
\end{align*}
If $|S_i|\le t$, by (\ref{eq:mu-low-degree-parity-correlation})
\[
\left|\sum_{x\in H}f(x)\mu(x)\chi_{S_i}(x) \right|\le \frac{1}{w}.
\]
If $|S_i|>t$, then 
\[
\left|\sum_{x\in H}f(x)\mu(x)\chi_{S_i}(x) \right|\le \sum_{x\in H}\mu(x)= 1.
\]

Since $\sum_{i=1}^k|S_i|\le t D/2$, it follows that $|S_i|\le t$ for more than $k- D/2$ indices $i\in [k]$. 
Thus, for each $T\subseteq [k]$ such that $ |T|\ge  D$, there are at least $ D/2$ indices $i\in T$ such that $|S_i|\le t$. Hence, 
\[
\left|\sum_{y\in \cY}\Psi(y)\chi_S(y)\right|
%\le 2^k \left|\sum_{T\subseteq [k], \atop |T|\ge D}\hat\Gamma(T) w^{-\frac{D}{2}}\right|
\le 2^k w^{-\frac{D}{2}}\sum_{T\subseteq [k],|T|\ge D}\left| \hat\Gamma(T) \right|\le 2^{k} w^{-\frac{D}{2}}.
\]
Here, the last inequality is because $|\hat\Gamma(T)|\le 2^{-k}$ from (\ref{eq:Gamma-l1-norm}).
\end{enumerate}

From 1, 2 and 3, we have 
\[
\sum_{y\in \cY}\Psi(y)G(y)-W\max_{S\subseteq [d], |S|\le \frac{tD}{2}}\left|\sum_{y\in \cY}\Psi(y)\chi_S(y)\right|>\gamma
\]
if $W\le \gamma 2^{-k}w^{D/2}$.
\end{proof}

We now derive Corollary \ref{corr:weight-degree-lowerbound}. We need the following theorems on the approximate degree and the non-constant margin weight of the $\text{OR}_d$ function.

\begin{theorem}[Approximate degree of $\text{OR}_d$]\cite{paturi}\label{thm:disj-approx-degree}
%Fix $\gamma\in [0,1]$. Then, 
$\text{deg}_{1/3}(\text{OR}_d)=\Theta(\sqrt{d})$.
\end{theorem}

\begin{lemma}[Non-constant margin weight of $\text{OR}_d$]\label{lem:disj-non-constant-margin-weight}
%Let $g:\{0,1\}^n\rightarrow \{-1,1\}$ denote the AND function over $n$ variables. 
$w^*(\text{OR}_d|_{H_{d,1}},t)\ge d/t.$
\end{lemma}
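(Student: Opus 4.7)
The plan is to invoke the duality theorem for non-constant margin weight (Theorem \ref{thm:duality-non-constant-margin-weight}) and exhibit an explicit symmetric dual distribution on $H_{d,1}$. Recall that $H_{d,1}$ consists of $d+1$ points: the all-ones vector $\mathbf{1} = (1,\ldots,1)$ and the $d$ weight-$1$ vectors $e_i$ (with a $-1$ in coordinate $i$ and $+1$ elsewhere), so that $\text{OR}_d(\mathbf{1}) = +1$ and $\text{OR}_d(e_i) = -1$.

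I would define the distribution $\mu \from H_{d,1} \to [0,1]$ that puts mass $1/2$ on $\mathbf{1}$ and mass $1/(2d)$ on each $e_i$. Then $\sum_y \mu(y) \text{OR}_d(y) = \tfrac{1}{2} - d \cdot \tfrac{1}{2d} = 0$, so condition (1) of Theorem \ref{thm:duality-non-constant-margin-weight} holds. For condition (2), fix any nonempty $S \subseteq [d]$ with $|S| \le t$. Since $\chi_S(\mathbf{1}) = 1$ and $\chi_S(e_i) = -1$ iff $i \in S$, a direct computation gives
\begin{align*}
\sum_{y \in H_{d,1}} \mu(y)\, \text{OR}_d(y)\, \chi_S(y)
&= \tfrac{1}{2} - \tfrac{1}{2d}\bigl( - |S| + (d-|S|) \bigr) = \frac{|S|}{d}.
\end{align*}
Hence the absolute value of this correlation is at most $t/d$ for every $S$ with $1 \le |S| \le t$, so condition (2) holds with $1/w = t/d$, giving $w^*(\text{OR}_d|_{H_{d,1}}, t) \ge d/t$.

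There is no real obstacle here; the lemma is a clean application of LP duality once one writes down the obviously symmetric candidate distribution. The only point to verify carefully is the sign convention (that TRUE is $-1$, so $\text{OR}_d(\mathbf{1}) = +1$), which makes the balancing condition $\sum \mu(y) \text{OR}_d(y) = 0$ work out with the natural uniform weighting $1/(2d)$ on the weight-$1$ inputs.
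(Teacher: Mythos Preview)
Your proposal is correct and is exactly the paper's approach: the paper exhibits the same distribution $\mu$ (mass $1/2$ on the all-ones vector, $1/(2d)$ on each weight-one vector) as the dual witness for Theorem~\ref{thm:duality-non-constant-margin-weight}, without even spelling out the verification. Your computation of $\sum_y \mu(y)\,\text{OR}_d(y)\,\chi_S(y) = |S|/d$ is correct and in fact makes the proof more explicit than the paper's.
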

\begin{proof}%[Proof of Claim \ref{claim:non-constant-margin-weight-lower-bound}]
The function 
\begin{equation*}
\mu(x)=\left\{
\begin{array}{rl}
1/2 &\text{ if } x=(1,\ldots,1),\\
{1}/{2d} &\text{ if } x\in H_{d,1}\setminus \{(1,\ldots,1)\}.
\end{array}
\right.
\end{equation*}
acts as the dual witness in Theorem \ref{thm:duality-non-constant-margin-weight}.
\end{proof}

\begin{proof}[Proof of Corollary \ref{corr:weight-degree-lowerbound}]
We use Theorem \ref{composition-thm:weight-degree-lowerbound} in the following setting. Let $Y=\pmo^{d/k}$, the inner function $f:Y\rightarrow \pmo$ be $\text{OR}_{d/k}$ and the outer function $F:\{-1,1\}^k\rightarrow \{-1,1\}$ be $\text{OR}_k$, $\cY=H_{d,k}$ and $H=H_{d/k,1}$. By a simple counting argument, if $(Y_1,\ldots,Y_k)\in \pmo^d\setminus H_{d,k}$, then there exists $i\in [k]$ such that $Y_i\in \pmo^{d/k}\setminus H_{d/k,1}$. Further, by Theorem \ref{thm:disj-approx-degree}, we know that $\text{deg}_{1/3}(F)=\Theta(\sqrt{k})$ and by Claim \ref{lem:disj-non-constant-margin-weight}, we know that $w^*(f|_H,t)\ge d/kt$. Therefore, by Theorem \ref{composition-thm:weight-degree-lowerbound}, we have that, for every $t\in \Z_+$, 
%any polynomial of degree $O(t\sqrt{k})$ that $1/6$-approximates the $OR_d$ function restricted to inputs in $H_{d,k}$ requires weight at least $2^{-k}(dt/k)^{\Omega(\sqrt{k})}$. Setting $t=\lfloor(d/k)(6W2^{-k})^{-1/\sqrt{k}}\rfloor$, we get the necessary conclusion.
\[
\text{deg}_{1/6,W}(OR_d|_{H_{d,k}})=\Omega\left(t\sqrt{k}\right) \text{ for every } W\le \frac{1}{6}2^{-k}\left(\frac{d}{kt}\right)^{C\sqrt{k}}.
\]
We obtain the conclusion by taking $t=\lfloor(d/k)(6W2^{k})^{-1/C\sqrt{k}}\rfloor$. Since $W\le (1/6)2^{-k}(d/k)^{C\sqrt{k}}$, it follows that $t\ge 1$ and hence is a valid choice for $t$ in applying Theorem \ref{composition-thm:weight-degree-lowerbound}.
\end{proof}

\paragraph{Comparison to \cite{ThalerCOLT12}.}
As described in the beginning of Section \ref{sec:upperbound}, Lemma 8 of the work of Servedio \etal\ \cite{ThalerCOLT12} can be shown to imply that any polynomial $p$ of weight $W$ that $1/3$-approximates the $\text{OR}_d$ function on all Boolean inputs requires degree $\Omega(d/\log W).$\footnote{More precisely,  \cite[Lemma 8]{ThalerCOLT12} as stated shows that if the coefficients of a univariate polynomial $P$ each have absolute value at most $W$, and  $1/2 \leq \max_{x \in [0, 1]} |P(x)| \leq R$, then $\max_{x \in [0,1]} |P'(x)| = O(\text{deg}(P) \cdot R \cdot  (\log W + \log \text{deg}(P)))$, where $P'(x)$ denotes the derivative of $P$ at $x$, and $\deg(P)$ denotes the degree of $P$. By inspection of the proof, it is easily seen that if the $L_1$-norm of the coefficients of $P$ is bounded by $W$, then the following slightly stronger conclusion holds: $\max_{x \in [0,1]} |P'(x)| = O(\text{deg}(P) \cdot R \cdot \log W)$. When combined with the symmetrization argument of \cite{ThalerCOLT12}, this stronger conclusion implies: any polynomial $p$ of weight $W$ that $1/3$-approximates the $\text{OR}_d$ function on all Boolean inputs requires degree $\Omega(d/\log W)$.}
 The proof in \cite{ThalerCOLT12} relies on a Markov-type inequality that bounds the derivative of a univariate polynomial in terms of its degree and the size of its
coefficients. The proof of this Markov-type inequality is non-constructive and relies on complex analysis. 

Here, we observe that our dual witness construction used to prove Corollary \ref{corr:weight-degree-lowerbound} also yields a general lower bound on the tradeoffs achievable between the weight and degree of the approximating polynomial $p$, even when we require $p$ to be accurate only on inputs of Hamming weight at most $O(\log W)$ (see Theorem \ref{thm:recover-COLT}). The methods of Servedio \etal\ do not yield any non-trivial lower bound on the degree in this setting. We also believe our proof technique is of interest in comparison to the methods of Servedio \etal\ as it is constructive (exhibiting an explicit dual witness for the lower bound) and avoids the use of complex analysis. 

\begin{theorem}\label{thm:recover-COLT} Any polynomial $p$ of weight $W$ that $1/6$-approximates the $\text{OR}_d$ function on all Boolean inputs requires degree $d/2^{O(\sqrt{\log W})}$. \end{theorem}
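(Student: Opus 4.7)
I will derive Theorem \ref{thm:recover-COLT} as an immediate consequence of Corollary \ref{corr:weight-degree-lowerbound} by choosing the parameter $k$ to optimize the lower bound. The starting observation is that if $p$ has weight $W$ and $1/6$-approximates $\text{OR}_d$ on all Boolean inputs, then in particular $p$ is a $1/6$-approximation to $\text{OR}_d|_{H_{d,k}}$ for every $k \in [d]$. Hence we may invoke Corollary \ref{corr:weight-degree-lowerbound} with any $k$ for which the side condition is satisfied.

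My plan is to take $k = c\log W$ for a sufficiently small constant $c > 0$. Substituting into the lower bound
\[
\deg(p) = \Omega\!\left(\frac{d}{\sqrt{k}}\cdot \frac{W^{-1/(C\sqrt{k})}}{2^{\sqrt{k}/C}}\right),
\]
and using $W^{-1/(C\sqrt{k})} = 2^{-\sqrt{\log W}/(C\sqrt{c})}$ together with $2^{-\sqrt{k}/C} = 2^{-\sqrt{c\log W}/C}$, the two exponential factors combine to $2^{-O(\sqrt{\log W})}$. The remaining $1/\sqrt{k} = 1/\sqrt{c \log W}$ contributes at most an additional $2^{O(\log\log W)}$, which is absorbed into $2^{O(\sqrt{\log W})}$. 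This yields $\deg(p) \geq d / 2^{O(\sqrt{\log W})}$ as required.

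The step that I expect to require the most care is verifying that the side condition $W \leq \tfrac{1}{6\cdot 2^k}(d/k)^{C\sqrt{k}}$ of Corollary \ref{corr:weight-degree-lowerbound} remains valid at $k = c\log W$. Fortunately, the statement of Theorem \ref{thm:recover-COLT} is vacuous once $2^{O(\sqrt{\log W})} \geq d$, i.e. once $\log W = \Omega(\log^2 d)$, so without loss of generality we may assume $\log W = O(\log^2 d)$; in particular $k \leq d$. In this regime $2^k = W^{O(c)}$ and $\log\bigl((d/k)^{C\sqrt{k}}\bigr) = \Omega\bigl(\sqrt{c\log W}\cdot \log(d/\log W)\bigr)$, so after taking logs the side condition reduces to $(1+O(c))\log W \leq \Omega(\sqrt{c\log W})\cdot \log d$, which holds as soon as $c$ is chosen small relative to the universal constant $C$ of Corollary \ref{corr:weight-degree-lowerbound}.

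The main obstacle is thus the coupled choice of the constant $c$: it must be small enough that the side condition goes through in the meaningful parameter regime, yet the resulting bound must still collapse cleanly to $d/2^{O(\sqrt{\log W})}$. Since both objectives involve only $c$ and the single constant $C$ inherited from Corollary \ref{corr:weight-degree-lowerbound}, this is a one-parameter optimization and I expect no further substantive difficulty beyond tracking constants.
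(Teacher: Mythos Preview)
Your proposal is correct and follows essentially the same approach as the paper: both derive the theorem by invoking Corollary~\ref{corr:weight-degree-lowerbound} with $k$ proportional to $\log W$. The paper's proof is in fact even terser---it simply sets $k=\log W$ and asserts the conclusion, without your explicit verification of the side condition or the reduction to the regime $\log W = O(\log^2 d)$; your additional care (and the extra constant $c$) is harmless but not needed for the level of rigor the paper adopts.
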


\begin{proof} 
As $p$ is accurate on the entire Boolen hypercube, it is accurate on inputs of Hamming weight at most $\log W$. The theorem follows by setting $k=\log{W}$ in the statement of Corollary \ref{corr:weight-degree-lowerbound}. 
\end{proof}

%\knote{This last note on comparison to STT12 seems out of place to me. Can we fold this to the beginning of Section 5 rather than the end of it?}

%We remark that, if not for the $2^{-O(\sqrt{k})}$ factor that appears in the degree lower bound of Corollary \ref{corr:weight-degree-lowerbound} could be mitigated,
%we would recover a substantially stronger bound. Specifically, this would imply that for any polynomial $p$ of weight $W$ that $1/3$-approximates the $\text{OR}_d$ function 
%on all inputs of Hamming weight at most $\log^2 W/\text{loglog} W$ requires degree $d/

%\bibliographystyle{abbrv}
%\bibliography{references}
%\end{document}

\section{Discussion}

We gave a differentially private online algorithm for answering $k$-way marginal queries that runs in time $2^{o(d)}$ per query, and guarantees accurate answers for databases of size $\poly(d,k).$  More precisely, we showed that if there exists a polynomial of degree $t$ and weight $W$ approximating the $d$-variate OR function on Boolean inputs of Hamming weight at most $k$, then a variant of the private multiplicative weights algorithm can answer $k$-way marginal queries in time roughly $\binom{d}{t}$ per query and guarantee accurate answers on databases of size roughly $W\sqrt{d}$.  To this end, we gave a new construction showing the existence of polynomial approximations to the OR function on inputs of low Hamming weight.  Specifically, we showed that polynomials of weight $d^{0.01}$ and degree $d^{1-\Omega(1/\sqrt{k})}$ exist.  

\paragraph{Practical Considerations.}
Our algorithm for answering $k$-way marginals is essentially the same as in \cite{HardtRo10} except for using a different set of base functions (specifically, the set of all low-degree parities), which leads to an efficiency gain.  We note that our algorithm degrades smoothly to the private multiplicative weights algorithm as the degree of the promised polynomial approximation increases, and never gives a worse running time. This behavior suggests that our algorithm may lead to practical improvements even for relatively small values of $d$, for which the asymptotic analysis does not apply.  In such cases one might use an alternative but similar analysis that shows the existence of a polynomial of degree $kd^{1-c/k}$ and weight $d^c$ (for any $0 < c < k$) that \emph{exactly} computes the $d$-variate OR function on inputs of Hamming weight at most $k$.  Such a polynomial may be obtained as in our construction, by breaking the $d$-variate OR function into an OR of ORs, and using a degree $k$ polynomial defined via polynomial interpolation, instead of Chebyshev polynomials, to approximate the outer OR on inputs of Hamming weight at most $k$. This variant does achieve asymptotic
properties not shared by the algorithm of Theorem \ref{thm:main1}, owing to the fact that it uses exact rather than approximate
representations of the database: for any constant $k$, this variant
also runs in time $\exp\left(d^{1-\Omega(1)}\right)$, and achieves worst-case additive error $o(1)$ for sufficiently large databases, i.e., for $n \gtrsim \tilde{O}(d^{c})$.

\paragraph{Relationship with~\cite{NikolovTaZh13,DworkNiTa13}.}
As we mentioned in the introduction, in subsequent work, Dwork et al. ~\cite{DworkNiTa13} show how to privately release marginals in a very different parameter regime from what we consider here.  Although their algorithm is quite different from ours, there are some important similarities.  Their algorithm is based on a recent algorithm of Nikolov et. al. ~\cite{NikolovTaZh13} for answering arbitrary counting queries.  This algorithm proceeds as follows. First, it adds noise $O(|\cQ|^{1/2} / |\db|)$ to the answer to every query to obtain a vector of noisy answers.  The noise is sufficient to ensure differential privacy.  Second, it ``projects'' the vector of noisy answers onto the convex body $K$ consisting of all vectors of answers that are consistent with some real database.  Surprisingly, the projection step will improve the accuracy. In fact, Nikolov et al.~\cite{NikolovTaZh13} show that the projected answers will be accurate even for very small databases. However current best algorithms for projecting on the body $K$ require time $2^{O(d)}$ for an arbitrary set of queries.

Dwork et al.~\cite{DworkNiTa13} show that for the set of $k$-way marginals there is a convex body $L$ such that the projection into $L$ can be computed in time $\poly(\binom{d}{\leq k})$ and $L$ approximates $K$ well enough to achieve accuracy on smaller databases than would be achievable with independent noise (however, these databases still have size $d^{\Omega(k)}$).  Our approximation of $f_{D}$ by low-weight polynomials of degree $t = o(d)$ can be shown to imply a polytope $L'$ with $2^{o(d)}$ vertices---which is sufficient to imply that projection can be computed in time $2^{o(d)}$---that approximates $K$ well enough to achieve accuracy on databases of nearly optimal size (i.e., size roughly $kd^{0.51})$. Thus our approximation-theoretic approach is relevant for understanding the capabilities and limitations of algorithms in the Nikolov et al.~\cite{NikolovTaZh13} framework.

\paragraph{Future Directions.}
Our lower bounds show that our polynomial approximation to the $\text{OR}_d$ function on inputs of Hamming weight $k$ is essentially the best possible; in particular, we cannot hope to substantially improve the running time on $\poly(d,k)$ size databases by giving approximating polynomials with better weight and degree bounds.  
This also rules out several natural candidates that can themselves be computed exactly by a low-weight polynomial of low-degree (e.g., the set of small-width conjunctions).

However, we do not know if it is possible to do better by using different feature spaces (other than the set of all low-degree monomials) to approximate all disjunctions over $d$ variables. There is some additional evidence from prior work that low-degree monomials may be the optimal choice: if the parameter of significance is the \emph{size} of the set of functions used to approximate disjunctions on inputs of Hamming weight at most $k$, then low-degree monomials are indeed optimal \cite{sherstov-pm} (see also Section 5 in the full version of \cite{ThalerUlVa12}). It would be interesting to determine whether this optimality still holds  when we restrict the $L_1$ weight of the linear combinations used in the approximations to be $\poly(d)$.  \\

\noindent {\bf Acknowledgments}. We thank Salil Vadhan for helpful discussions about this work.

\bibliographystyle{alpha}
\bibliography{references}

\end{document}